\let\oldmarginpar\marginpar
\renewcommand\marginpar[1]{\-\oldmarginpar[\raggedleft\footnotesize #1]%
  {\raggedright\footnotesize #1}}
\newtheorem{thm}{Theorem}[section]  
\newtheorem{prp}[thm]{Proposition}
\newtheorem{lem}[thm]{Lemma}
\newcommand{\di}{{d}}		
\newcommand{\cc}[1]{\overline{#1}}	
\renewcommand{\Re}{\operatorname{Re}\,} 	
\newcommand{\norm}[1]{\lVert#1\rVert}	
\newcommand{\Tr}{\operatorname{tr}}	
\newcommand{\tr}{\operatorname{tr}}
\newcommand{\e}{\varepsilon}
\newcommand{\h}{\mathfrak{h}}
\newcommand{\eps}{\varepsilon}
\newcommand{\bR}{{\mathbb R}}
\newcommand{\bx}{{\bf{x}}}
\newcommand{\be}{\begin{equation}}
\newcommand{\ee}{\end{equation}}
\newcommand{\cJ}{{\cal J}}
\newcommand{\cF}{{\cal F}}
\newcommand{\cH}{{\cal H}}
\newcommand{\cL}{{\cal L}}
\newcommand{\cE}{{\cal E}}
\newcommand{\cN}{{\cal N}}
\newcommand{\cU}{{\cal U}}
\newcommand{\cQ}{{\cal Q}}
\newcommand{\re}{{\text{Re} }}
\newcommand{\xx}{{\mathbf x}}
\newcommand{\yy}{{\mathbf y}}
\newcommand{\ww}{{\mathbf w}}
\newcommand{\bN}{{\mathbb N}}
\newcommand{\ph}{{\varphi}}
\newcommand{\wt}{\widetilde}
\def\a{{\alpha}}
\def\g{{\gamma}}
\def\r{{\rho}}
\def\s{{\sigma}}
\def\o{{\omega}}
\def\n{{\nu}}
\def\D{{\Delta}}
\def\O{{\Omega}}
\def\G{{\Gamma}}
\def \bea{\begin{eqnarray}}
\def \eea{\end{eqnarray}}
\def\nn{{\nonumber}}
\title{Mean-field Evolution of Fermionic Mixed States}
\author{Niels Benedikter$^{*}$, Vojkan Jak\v{s}i\'c$^{**}$, Marcello Porta$^{\dag}$, Chiara Saffirio$^{\dag}$ 

and Benjamin Schlein$^{\dag}$ \\ $^{*}$Mathematics and Physics Department, Universit\`a Roma 3, L.go S. L. Murialdo 1, 00146 Roma, Italy \vspace{.2cm}\\ $^{**}$Mathematics and Statistics Department, McGill University, 805 Sherbrooke Street West, Montreal, QC, H3A 2K6, Canada \vspace{.2cm}\\ $^{\dag}$Mathematics Department, University of Z\"urich, Winterthurerstrasse 190, 8057 Z\"urich, Switzerland}
\begin{document}
\maketitle

\begin{abstract}
In this paper we study the dynamics of fermionic mixed states in the mean-field regime. We consider initial states which are close to quasi-free states and prove that, under suitable assumptions on the inital data and on the many-body interaction, the quantum evolution of such initial data is well approximated by a suitable quasi-free state. In particular we prove that the evolution of the reduced one-particle density matrix converges, as the number of particles goes to infinity, to the solution of the time-dependent Hartree-Fock equation. Our result holds for all times, and gives effective estimates on the rate of convergence of the many-body dynamics towards the Hartree-Fock evolution.
\end{abstract}


\section{Introduction}
\setcounter{equation}{0}

The time-evolution of many-body quantum systems in the so called mean-field regime has received a lot of attention in the recent  years. For bosonic systems, the mean-field Schr\"odinger dynamics can be generated by an $N$-particle Hamiltonian 
\begin{equation}\label{eq:HNbos} H_N = \sum_{j=1}^N -\Delta_{x_j} + \frac{1}{N} \sum_{i<j}^N V(x_i -x_j) 
\end{equation}
acting on the bosonic  Hilbert space $L^2_s (\bR^{3N}, dx_1 \dots dx_N)$ in the limit of large $N$. The coupling constant  $N^{-1}$ in front of the potential energy guarantees that both terms in the Hamiltonian are typically of the same size which is  a necessary condition  for the existence of  a non-trivial limiting evolution.

\medskip

The physical picture one has in mind is that  particles are initially trapped in a volume of order one and are  in  equilibrium. When the traps are switched off, the system starts to evolve, following the evolution generated by (\ref{eq:HNbos}). Thus,  the object of interest is the dynamics $\psi_{N,t} = e^{-i H_N t} \psi_N$ where the  initial data $\psi_N$ describes trapped particles at equilibrium. At low temperature, the system exhibits complete condensation and  it is natural to consider  approximately factorized initial data $\psi_N \simeq \ph^{\otimes N}$ for an appropriate $\ph \in L^2 (\bR^3)$. It turns out 
that in the limit of large $N$ the evolution preserves the  approximate factorization. The relation $\psi_{N,t} \simeq \ph_t^{\otimes N}$ then easily leads 
to the  self-consistent nonlinear Hartree equation 
\[ i\partial_t \ph_t = - \Delta \ph_t + (V*|\ph_t|^2) \ph_t \]
for the evolution of the condensate wave function. 

\medskip

The convergence of the many-body Schr\"odinger evolution towards the Hartree dynamics has to be understood in the sense of reduced density matrices. The one-particle reduced density $\gamma^{(1)}_{N,t}$ associated with the solution $\psi_{N,t}$ of the Schr\"odinger equation is given by the partial trace
\[ \gamma^{(1)}_{N,t} = N \tr_{2,\dots , N} |\psi_{N,t} \rangle \langle \psi_{N,t}|, \]
where $|\psi_{N,t} \rangle \langle \psi_{N,t}|$ indicates the orthogonal projection onto $\psi_{N,t}$. In other words, $\gamma^{(1)}_{N,t}$ is the non-negative trace class operator on $L^2 (\bR^3)$ with the  integral kernel
\[ \gamma^{(1)}_{N,t} (x,y) = N \int dx_2 \dots dx_N \, \psi_{N,t} (x, x_2, \dots , x_N) \overline{\psi}_{N,t} (y,x_2, \dots , x_N) .\]
Analogously, we can define the $k$-particle reduced density $\gamma^{(k)}_{N,t}$ associated with $\psi_{N,t}$ by
\begin{equation}\label{eq:gammak} \gamma^{(k)}_{N,t} = \binom{N}{k} \, \tr_{k+1, \dots , N} |\psi_{N,t} \rangle \langle \psi_{N,t}| .
\end{equation}
Notice that we choose the normalization 
\[ \tr \gamma^{(k)}_{N,t} = \binom{N}{k} .\]
Of course, if $k < N$, the $k$-particle reduced density $\gamma^{(k)}_{N,t}$ does not contain the full information about the state $\psi_{N,t}$; however, knowledge of $\gamma^{(k)}_{N,t}$ is sufficient to compute the expectation of any $k$-particle observable. In fact, if $J^{(k)}$ is an operator acting on $L^2 (\bR^{3k})$, and if we denote by $J^{(k)}_{i_1, \dots , i_k}$ the operator acting as $J^{(k)}$ on the $k$ particles $i_1, \dots , i_k$ and as the identity on the other $(N-k)$ particles, we have
\[ \sum_{(i_1, \dots , i_k)} \langle \psi_{N,t} , J^{(k)}_{i_1, \dots, i_k} \psi_{N,t} \rangle = \tr \, J^{(k)} \gamma_{N,t}^{(k)}, \]
where, on the left hand side, we are summing over all clusters of $k$ particles.

\medskip

Reduced density matrices provide the appropriate language to express the convergence of the many-body Schr\"odinger dynamics towards the limiting Hartree equation. Notice that the $k$-particle reduced density matrix associated with the product state $\ph^{\otimes N}$ is given by 
\[ \binom{N}{k} |\ph \rangle \langle \ph|^{\otimes k}  = \binom{N}{k} |\ph^{\otimes k} \rangle \langle \ph^{\otimes k}|. \]

\medskip

Under appropriate conditions on the interaction potential $V$, and assuming that, at time $t=0$, 
\[ \lim_{N\to \infty} \frac{1}{N} \gamma^{(1)}_{N,0} = |\ph \rangle \langle \ph| \]
one can show that, for every $t \in \bR$, 
\begin{equation}
\label{eq:bos-conv} \lim_{N\to \infty}\frac{1}{N} \gamma^{(1)}_{N,t} = |\ph_t \rangle \langle \ph_t| .
\end{equation}
Similarly, one also gets convergence of the reduced densities $\gamma^{(k)}_{N,t}$, for any fixed $k \in \bN$. Eq.\ (\ref{eq:bos-conv}) states the stability of condensation, understood as approximate factorization of the wave function, with respect to the time evolution. 

\medskip

The first results in the direction of (\ref{eq:bos-conv}) were obtained in \cite{Hepp,GV,Spohn}. More recently, the case of a Coulomb interaction has been studied in \cite{EY,BGM}. Different points of view 
on the bosonic mean-field limit have been proposed in \cite{FKS,FKP} (where the convergence towards the Hartree equation has been interpreted as an Egorov-type theorem) and in \cite{AN} (focusing on the mean-field propagation of Wigner measures). Bounds on the rate of convergence towards the Hartree equation have been proven in \cite{RS,KP,CLS} and fluctuations around the Hartree dynamics have been analyzed in \cite{GMM,BKS,BSS,LNS}. Convergence towards the defocusing nonlinear Schr\"odinger equation with local cubic nonlinearity has been established in \cite{AGT} for one-dimensional systems, in \cite{KSS} for the two-dimensional case and in \cite{ESY1} in three dimensions. In one dimension, the focusing NLS has been recently derived in \cite{CH1}. The more singular Gross-Pitaevskii regime has been considered in \cite{ESY2,ESY3,P,BDS,CH2}. In the case of three-body interactions, the convergence towards the nonlinear Schr\"odinger equation with a quintic nonlinearity has been established in \cite{CP,XC}. 

\medskip

For fermionic systems, the mean-field limit is more delicate. Consider a gas of $N$ fermions described by the Hamiltonian 
\[ H_N = \sum_{j=1}^N -\Delta_{x_j} + \lambda\sum_{i<j}^N V(x_i -x_j) \]
acting on the fermionic  Hilbert space $L^2_a (\bR^{3N}, dx_1 \dots dx_N)$ (the subspace of $L^2 (\bR^{3N})$ consisting of permutation antisymmetric functions satisfying 
\[ \psi_N (x_{\pi(1)}, \dots , x_{\pi(N)}) = \sigma_\pi \psi_N (x_1, \dots , x_N) \]
for every permutation $\pi \in S_N$, where $\sigma_\pi \in \{\pm 1 \}$ denotes the sign of the permutation $\pi$). Because of the Pauli principle (reflected in the antisymmetry of the wave functions), if the $N$ particles are trapped in a volume of order one, their kinetic energy is (at least) of the order $N^{5/3}$. To make sure that the potential energy is of the same order, one has to take  $\lambda = N^{-1/3}$, which is much larger than $\lambda=N^{-1}$ 
in the bosonic case. Moreover, since the typical velocity of the particles is very large, of the order $N^{1/3}$, one can only hope to follow the time evolution for times of the order $N^{-1/3}$. After rescaling time, we are led to the $N$-particle Schr\"odinger equation
\begin{equation}\label{eq:schr0} iN^{1/3} \partial_t \psi_{N,t} = \left[ \sum_{j=1}^N -\Delta_{x_j} + \frac{1}{N^{1/3}} \sum_{i<j}^N V(x_i -x_j) \right] \psi_{N,t} .
\end{equation}
In these variables, we are interested in the solution $\psi_{N,t}$ for times $t$ of order one. Setting $\eps = N^{-1/3}$ and multiplying (\ref{eq:schr0}) with $\eps^2$, we conclude that 
\begin{equation}\label{eq:ferm-schr}
i\eps \partial_t \psi_{N,t} = \left[ \sum_{j=1}^N -\eps^2 \Delta_{x_j} + \frac{1}{N} \sum_{i<j}^N V(x_i -x_j) \right] \psi_{N,t}.
\end{equation}
Hence, we recover the $N^{-1}$ coupling constant characterizing the mean-field limit in the bosonic setting. Here, however, the mean-field regime is naturally linked with a semiclassical limit, with $\eps = N^{-1/3}$ playing the role of Planck's constant. Let us remark, however, that 
different scaling limits in the fermionic setting have been considered in \cite{BGGM,FK,PP}.

\medskip

Similarly to the bosonic case, we are interested in the solution of (\ref{eq:ferm-schr}) for initial data $\psi_{N,0}$ describing fermions confined to  a volume of order one and in thermal  equilibrium. 

\medskip

At zero temperature, the ground state of the system of trapped fermions is expected (and in certain cases, it is  known) to be approximated by a Slater determinant
\[ \psi_\text{slater} (x_1, \dots , x_N) = \det \left( f_i (x_j) \right) \]
for an appropriate set of $N$ orthonormal orbitals $f_1, \dots, f_N \in L^2 (\bR^3)$. The Slater determinant is completely characterized by its one-particle reduced density 
\[ \omega_N = N \tr_{2,\dots, N} |\psi_\text{slater} \rangle \langle \psi_\text{slater}| = \sum_{j=1}^N |f_j \rangle \langle f_j|, \]
given by the orthogonal projection onto the $N$-dimensional 
space spanned by $f_1, \dots , f_N$. In the presence of an external potential $V_\text{ext}$, the energy of the Slater determinant with reduced density $\omega_N$ is given by the Hartree-Fock functional
\begin{equation}\label{eq:HF-fun} \cE_\text{HF} (\omega_N) = \tr \, \left(-\eps^2 \Delta + V_\text{ext} (x) \right) \omega_N + \frac{1}{2N} \int dx dy V(x-y) \left[ \omega_N (x,x) \omega_N (y,y) - |\omega_N (x,y)|^2 \right] .
\end{equation}
Semiclassical analysis suggests that, in the limit of large $N$, the minimizer of (\ref{eq:HF-fun}) can be approximated by the Weyl quantization 
\begin{equation}\label{eq:OpW} \text{Op}^W_M (x,y) = \frac{1}{(2\pi)^3} \int dp \, M \left( \frac{x+y}{2} , p \right) e^{i p \cdot \frac{x-y}{\eps}} 
\end{equation}
of the phase-space density $M(x,p) = \chi (|p| \leq c \rho^{1/3}_\text{TF} (x))$, where $\rho_{TF}$ is the minimizer of the Thomas-Fermi functional 
\begin{equation}\label{eq:TF-fun} \cE_\text{TF} (\rho) = \frac{3}{5} \e^{2} c_{TF}\int \rho^{5/3} (x) dx + \int V_\text{ext} (x) \rho (x) dx + \frac{1}{2N} \int dx dy\, V(x-y) \rho (x) \rho (y)  
\end{equation}
under the condition $\int \rho (x) dx = N$. Eq.\ (\ref{eq:OpW}) implies in particular that the integral kernel of the Hartree-Fock minimizer $\omega_N$ should vary on scales of order one in the $x+y$ direction and on much shorter scales, of order $\eps$, in the $x-y$ direction. Accordingly, we expect that
\begin{equation}\label{eq:semi}
\begin{split} 
\tr \, \left| [ x, \omega_N ] \right| &\leq C N \eps, \\ \tr \, \left| [\eps \nabla, \omega_N ] \right| &\leq C N \eps.   \end{split} 
\end{equation}
In fact, it is easy to show that (\ref{eq:semi}) holds true for the Weyl quantization (\ref{eq:OpW}) under reasonable assumptions on the Thomas-Fermi density $\rho_{\text{TF}}$. 

\medskip

It is therefore interesting to study the solution $\psi_{N,t}$ of the many-body Schr\"odinger equation (\ref{eq:ferm-schr}) for initial data close to a Slater determinant with one-particle reduced density satisfying (\ref{eq:semi}). This was done in \cite{EESY}, for analytic interaction and for short times\footnote{In \cite{EESY}, the estimates (\ref{eq:semi}) are not explicitly stated, but are hidden in the assumption on the convergence of the Wigner transform of the initial data.}, and, more recently, in \cite{BPS} for arbitrary times and for a much larger class of regular interactions (a similar analysis for semi-relativistic fermions can be found in \cite{BPS2}). In \cite{BPS}, the one-particle reduced density $\gamma^{(1)}_{N,t}$ associated with $\psi_{N,t}$ was proven to satisfy 
\begin{equation}\label{eq:conv-fer} \| \gamma^{(1)}_{N,t} - \omega_{N,t} \|_\text{HS} \leq C \exp (c \exp (c|t|)) 
\end{equation}
uniformly in $N$. Here $\omega_{N,t}$ is the solution of the time-dependent Hartree-Fock equation
\begin{equation}\label{eq:HF} i\eps \partial_t \omega_{N,t} = \left[ -\eps^2 \Delta + V*\rho_t + X_t , \omega_{N,t} \right] 
\end{equation}
with initial data $\omega_{N,0} = \omega_N$. Here we used the notation $\rho_t (x) = N^{-1} \omega_{N,t} (x,x)$ for the normalized density associated with $\omega_{N,t}$ and $X_t$ to denote the exchange operator defined by the integral kernel 
\[ X_t (x,y) = N^{-1} V(x-y) \omega_{N,t} (x,y) \] 
The right hand side of (\ref{eq:conv-fer}) should be compared with the Hilbert-Schmidt norm of $\gamma_{N,t}^{(1)}$ and of $\omega_{N,t}$, which is of the order $N^{1/2}$ (it is also possible to get bounds for the trace norm of the difference $\gamma^{(1)}_{N,t} - \omega_{N,t}$, see \cite{BPS}). In other words, (\ref{eq:conv-fer}) states that the evolution of an initial Slater determinant satisfying the semiclassical bounds (\ref{eq:semi}) remains close to a Slater determinant, whose reduced one-particle density evolves according to (\ref{eq:HF}). Notice that the assumption (\ref{eq:semi}) for the initial data plays a crucial role in the analysis. Physically, it is motivated by the fact that we are interested in initial data describing systems at equilibrium. 

\medskip

In contrast with the bosonic case, the limiting Hartree-Fock equation (\ref{eq:HF}) still depends on $N$. As $N \to \infty$, the Hartree-Fock dynamics converges, at least formally, towards the Vlasov dynamics. More precisely, let 
\[ W_{N,t} (x,v) = \frac{\eps^{3}}{(2\pi)^3} \int dy \, \gamma_{N,t}^{(1)} \left(x+ \frac{\eps y}{2}, x- \frac{\eps y}{2} \right) e^{i y \cdot v} \]
be the Wigner transform of the one-particle reduced density $\gamma^{(1)}_{N,t}$ associated with the solution of the Schr\"odinger equation (\ref{eq:ferm-schr}). Then a simple computation suggests that, as $N \to \infty$, $W_{N,t} \to W_{\infty,t}$, where $W_{\infty,t}$ satisfies the Vlasov equation
\begin{equation}\label{eq:vlasov} \partial_t W_{\infty,t} + v \cdot \nabla_x W_{\infty,t} + \nabla \left(V* \rho_t \right) \cdot \nabla_v W_{\infty,t} = 0 
\end{equation}
with the density $\rho_t (x) = \int dv\, W_{\infty,t} (x,v)$. 
Convergence of the many-body evolution towards the Vlasov equation (\ref{eq:vlasov}) was first established, for analytic interaction potentials, in \cite{NS}. Later, this result was extended to a larger class of potentials in \cite{Sp}. Also (\ref{eq:conv-fer}) implies convergence (in a weak sense) towards the Vlasov dynamics. It should be noted, however, that establishing the convergence (\ref{eq:conv-fer}) is more complicated than comparing the solution of the Schr\"odinger equation directly with the Vlasov equation, and requires a deeper understanding of the many-body dynamics.

\medskip

To establish the convergence (\ref{eq:conv-fer}) and also to generalize the analysis to positive temperatures, it is useful to switch to a Fock space representation of the fermionic system. The fermionic Fock space over $L^2 (\bR^3)$ is defined as the direct sum
\[ \cF = \bigoplus_{n \geq 0} L^2_a (\bR^{3n}, dx_1 \dots dx_n). \]
In particular, states with exactly $N$ particles are described on the Fock space by sequences $\{ 0, 0, \dots, \psi_N, 0, \dots \}$ with only one non-vanishing component. On $\cF$, Slater determinants can be conveniently obtained by applying Bogoliubov transformations to the Fock vacuum $\Omega = \{ 1, 0, \dots \}$. In fact, for every orthogonal projection $\omega_N$ on $L^2 (\bR^3)$ with $\tr \omega_N = N$ there exists a unitary operator $R_{\omega_N} : \cF \to \cF$ implementing a Bogoliubov transformation such that $R_{\omega_N} \Omega$ is exactly the $N$-particle Slater determinant with reduced one-particle density $\omega_N$.

\medskip

In the present paper, we are interested in the dynamics at positive temperature. More precisely, we are interested in initial data describing interacting fermions trapped in a volume of order one, in equilibrium at a certain temperature $T > 0$ and with an appropriate chemical potential $\mu$, so that the average number of particles is $N$, the same parameter appearing in the Hamiltonian generating the many-body evolution (the precise definition of the Hamiltonian will be given in the next section; notice however, that the Hamiltonian is the same as the one considered at zero temperature). 

\medskip

In our setting it is expected that equilibrium states at positive temperature can be approximated by quasi-free states on $\cF$. Quasi-free states are completely characterized by their one-particle reduced density; all higher order correlations can be expressed in terms of the one-particle reduced density using Wick's theorem (in general, one also needs the pairing density, but here we will assume it to vanish). In fact, every operator $\omega_N$ with $0 \leq \omega_N \leq 1$ and $\tr \omega_N = N$ is the one-particle reduced density of a unique quasi-free state on $\cF$. A simple example of quasi-free states are Slater determinants; in this case, $\omega_N$ is additionally an orthogonal projection. However, there is an important difference between Slater determinants (which are important at zero temperature) and the quasi-free states which are relevant at positive temperature. Slater determinants are {\it pure} quasi-free states. At positive temperature, on the other hand, we will have to consider {\it mixed} quasi-free states on $\cF$. This difference makes the analysis at positive temperature more involved, compared with \cite{BPS}. 

\medskip

Mixed states are described by density matrices (and not simply by vectors) on the fermionic Fock space $\cF$. For our purposes, it will be useful to represent mixed states on $\cF$ through vectors on the double Fock space $\cF \otimes \cF \simeq \cF (L^2 (\bR^3) \oplus L^2 (\bR^3))$. In this way, we will be able to obtain any mixed quasi-free state by applying a Bogoliubov transformation, this time acting on the vacuum of the double Fock space $\cF (L^2 (\bR^3) \oplus L^2 (\bR^3))$ (this representation of mixed quasi-free states is a special case of the  Araki-Wyss construction, see \cite{AW,DG}).  

\medskip

Similarly to (\ref{eq:semi}), we will need the (approximate) initial quasi-free state to satisfy certain estimates, which are justified by the fact that we 
are starting from an equilibrium state and that  we observe its evolution resulting from a change of the external field ({\it e.g.} removal of the trapping potential). Semiclassical analysis suggests that equilibrium states at positive temperature $T > 0$ can be approximated by the Weyl quantization 
\begin{equation}\label{eq:We-pos}
\text{Op}^W_M (x,y) = \frac{1}{(2\pi)^3} \int dp \, M \left( \frac{x+y}{2}, p \right) e^{ip \cdot \frac{x-y}{\eps}} 
\end{equation} 
of a phase-space density 
\begin{equation} \label{eq:Mpos} M (x,p) = g_{T,\mu} \left(p^2 - c \rho_\text{TF}^{2/3} (x) \right) 
\end{equation}
for the smooth Fermi-Dirac distribution \[ g_{T,\mu} (E) = \frac{1}{1+ e^{(E-\mu)/T}} \] depending on the temperature $T>0$ and on the chemical potential $\mu\geq 0$. In (\ref{eq:Mpos}), $\rho_{\text{TF}}$ denotes the minimizer of the Thomas-Fermi functional (\ref{eq:TF-fun}) and the normalization constant $c>0$ has to be chosen so that 
\[ \int M(x,v) dx dv = 1 \]
Similarly to (\ref{eq:OpW}), the kernel (\ref{eq:We-pos}) decays to zero, for $|x-y| \gtrsim \eps$. Since the function $g_{T,\mu}$ for $T > 0$ is smoother than at zero temperature ($g_{T,\mu}$ is a characteristic function for $T=0$), we observe that (\ref{eq:We-pos}) satisfies even stronger semiclassical estimates, compared with (\ref{eq:semi}). We will be interested in approximately quasi-free states, with one-particle reduced density matrix $0 \leq \omega_N \leq 1$ with $\tr \omega_N = N$ and satisfying the bounds
\begin{equation}\label{eq:semi+}
\begin{split} \left\| \left[ \sqrt{\omega_N} , e^{i p \cdot x} \right] \right\|_{\text{HS}}&\leq C \eps N^{1/2} (1+|p|),\\[2mm]
 \left\| \left[ \sqrt{1-\omega_N} , e^{i p \cdot x} \right] \right\|_{\text{HS}} &\leq C \eps N^{1/2} (1+|p|), \\[2mm]
 \left\| \left[ \sqrt{\omega_N}, \eps \nabla \right] \right\|_\text{HS} &\leq C \eps N^{1/2},\\[2mm] 
 \left\| \left[ \sqrt{1-\omega_N}, \eps \nabla \right] \right\|_\text{HS} &\leq C \eps N^{1/2}.
\end{split} 
\end{equation}
It is then easy to check that the Weyl quantization (\ref{eq:We-pos}) satisfies the estimates (\ref{eq:semi+}), under reasonable assumptions on the Thomas-Fermi density $\rho_{\text{TF}}$. Our main result is that for  such initial data the time evolution remains close to a quasi-free state, with reduced one-particle density $\omega_{N,t}$ satisfying the time-dependent Hartree-Fock equation (\ref{eq:HF}). 

\medskip

{\it Acknowledgements.} M. Porta, C. Saffirio and B. Schlein have been supported by the ERC grant MAQD-240518 and by the NCCR SwissMAP. N. Benedikter has been partially supported by the ERC grant CoMBoS-239694. The research of V. Jak\v{s}i\'c was partly supported by NSERC. We are grateful to the Erwin Schr\"odinger Institute in Vienna; part of this work has been completed during the workshop on ``Scaling limits in classical and quantum mechanics''.

\section{Fock space representation, quasi-free states and main result}
\setcounter{equation}{0}

{\it Fock space.} Let $\h = L^2 (\bR^3)$. We define the fermionic Fock space $\cF (\h)$ over $\h$ as the direct sum 
\be
\cF(\h) = \bigoplus_{n\geq 0}\h^{\wedge n}
\ee
where $\h^{\wedge n} := \h \wedge \h \cdots \wedge \h$ is the $n$-fold antisymmetric tensor product of $\h$. Vectors in $\cF (\h)$ are sequences 
\be
\cF(\h) \ni\xi = (\xi^{(0)},\,\xi^{(1)},\, \ldots,\, \xi^{(n)},\,\ldots)
\ee
with $\xi^{(n)}\in \h^{\wedge n}$. In particular, the vacuum $\O_{\cF (\h)} = (1, 0, \dots )$ describes a state with no particles. $\cF (\h)$ is a Hilbert space, with respect to the inner product 
\[ \left\langle \xi , \chi \right\rangle = \sum_{n \geq 0} \langle \xi^{(n)}, \chi^{(n)} \rangle_{\h^{\wedge n}}   \]

The creation and annihilation operators on $\cF(\h)$ are defined as follows:  for $f \in \h$,
\[ \begin{split} (a^* (f) \psi)^{(n)} (x_1, \dots , x_n) &= \frac{1}{\sqrt{n}} \sum_{j=1}^n (-1)^j  f (x_j) \psi^{(n-1)} (x_1, \dots , x_{j-1}, x_{j+1}, \dots , x_n) , \\ 
(a(f) \psi)^{(n)} (x_1 , \dots , x_n) &= \sqrt{n+1} \int dx \overline{f (x)} \psi^{(n+1)}(x, x_{1},\ldots, x_{n})
\end{split}\]
As the notation suggests, the creation operator $a^* (f)$ is the adjoint of the annihilation operator $a(f)$. Observe that $a^* (f)$ is linear, while $a(f)$ is antilinear in the argument $f \in \h$.  Creation and annihilation operators satisfy canonical anticommutation relations 
\begin{equation}\label{eq:CAR} \{ a (f) , a^* (g) \} = \langle f , g \rangle_{\h}\;, \quad \{ a (f) , a(g) \} = \{ a^* (f) , a^* (g) \} = 0 \end{equation}
for all $f,g \in \h$. Here $\{ A , B \} = AB + BA$ denotes the anticommutator of the operators $A,B$. Notice that (in contrast to the bosonic operators) fermionic creation and annihilation operators are {\it bounded}, with 
\begin{equation}\label{eq:bdaa*}  \| a^\sharp (f) \| = \| f \|  
\end{equation}
where $a^\sharp$ is either $a$ or $a^*$. 

It is also useful to introduce the operator-valued distributions $a^{*}_{x}$ and $a_{x}$, which formally create and annihilate, respectively, a particle at point $x \in \bR^3$. They 
satisfy
\[
a(f) = \int dx\, a_{x}\overline{f(x)},\qquad a^{*}(f) = \int dx\, a^{*}_{x}f(x)
\]
for all $f\in\h$. 

\medskip

Given a self-adjoint operator $O: \h \to \h$, we define its second quantization $d\G (O): \cF \to \cF$ by  
\be\label{eq:0}
(d\G(O)\psi)^{(n)} = \sum_{j\leq n} O^{(j)}\psi^{(n)}\;,
\ee
where $O^{(j)}= 1\otimes \dots 1 \otimes O \otimes 1 \otimes \dots \otimes 1$ is the operator acting as $O$ on the $j$-th particle and as the identity on the other $(n-1)$ particles. An important example is the particle number operator $\cN = d\G(1)$, which counts the number of particles in each sector of the Fock space: $(\cN \psi)^{(n)} = n\psi^{(n)}$. If the one-particle operator $O$ has the integral kernel $O(x;y)$, then $d\Gamma (O)$ can be expressed in terms of operator-valued distributions as 
\[
d\G(O) = \int dxdy\, O(x;y) a^{*}_{x}a_{y}.
\]
In particular, the number operator can be written as
\[
\cN = \int dx\, a^{*}_{x}a_{x}.
\]

{\it Mixed states.} A fermionic mixed state is represented by a density matrix on $\cF (\h)$. A density matrix is a non-negative trace class operator $\r:\cF (\h) \to \cF(\h)$ with $\tr \, \r = 1$. Notice that the state described by the density matrix $\rho$ is pure if and only if $\rho$ is a rank-$1$ orthogonal projection onto a $\psi \in \cF$, {\it i.e.} $\rho = |\psi \rangle \langle \psi|$. In general, 
\be\label{eq:densma}
 \r = \sum_n \lambda_n |\psi_{n} \rangle \langle \psi_{n}| 
 \ee
where $\lambda_n \geq 0$, $\{\psi_n\}$ is an orthonormal set in ${\cal F}(\h)$, and  $\sum_{n} \lambda_n = 1$. Physically, $\r$ describes an incoherent superposition of pure states and  $\lambda_n$ is the probability that the system is  in the state $\psi_n$. Obviously, the expectation of an arbitrary operator $A$ on $\cF (\h)$ with respect to the mixed state with density matrix $\r$ is 
\[ \tr \, A \r = \sum_n \lambda_n \langle \psi_n, A \psi_n \rangle. \] 

Given the density matrix (\ref{eq:densma})
we define the operator $\wt{\kappa}:\cF(\h) \to \cF(\h)$ by 
\[
\wt{\kappa} = \sum_{n} \e_{n}| \psi_{n} \rangle \langle \phi_{n} |
\]
where  $\e_{n}\in \mathbb{C}$ is a sequence satisfying  $|\e_{n}|^{2} = \lambda_{n}$ and $\{\phi_{n}\}$ is an orthonormal set in ${\cal F}(\h)$. Clearly,
\[
\wt{\kappa} \wt{\kappa}^{*} = \r
\]
Of course, such a decomposition of $\rho$ is far from being unique and later we shall choose a convenient one. Since $\r$ is trace class, $\wt{\kappa} \in \cL^2 (\cF (\h))$, the set of Hilbert-Schmidt operators on $\cF (\h)$. 

Next, we observe that $\cL^2 (\cF (\h)) \simeq \cF (\h) \otimes \cF (\h)$. The isomorphism is induced by the map $|\psi \rangle \langle \phi| \to \psi \otimes \overline{\phi}$, extended by linearity to the whole space $\cL^2 (\cF (\h))$. The mixed state with density matrix (\ref{eq:densma}) is described on $\cF (\h) \otimes \cF (\h)$ by the vector
\[ \kappa = \sum_{n} \e_n \psi_n \otimes \overline{\phi}_n. \]
The expectation of the operator $A$ on $\cF (\h)$ in the state described by $\kappa \in \cF (\h) \otimes \cF (\h)$ is then given by
\begin{equation}\label{eq:trAr} \tr \, A \r = \tr \, A \wt{\kappa} \wt{\kappa}^* = \langle \kappa, (A \otimes 1) \kappa \rangle_{\cF (\h) \otimes \cF (\h)} \, . 
\end{equation}

It is well-known that the doubled Fock space $\cF(\h)\otimes \cF(\h)$ is isomorphic to the Fock space $\mathcal{F}(\h \oplus \h)$ (see \cite{DG} or 
any book on mathematical quantum field theory). The unitary map $U$ that implements this isomorphism is known as the {\it exponential law} and it is defined by the relations 
\[
U (\O_{\cF(\h)}\otimes\O_{\cF(\h)}) = \O_{\mathcal{F}(\h \oplus \h)}
\]
and
\be\label{eq:a00}
\begin{split}
U \left[ a(f)\otimes 1 \right] U^{*} &= a(f\oplus 0) =: a_{l}(f) \\ 
U \left[ (-1)^{\cN} \otimes a(f) \right] U^{*} &= a(0\oplus f) =: a_{r}(f) \end{split} \ee
for all $f\in \h$, where $a_{\s}(f)$, $\s = l,\,r$, are the {\it left and right representations} of $a(f)$, respectively. By hermitian  conjugation, we also find 
\begin{equation}\label{eq:a00*} \begin{split}
U \left[ a^{*}(f)\otimes 1 \right] U^{*} &= a^{*}(f\oplus 0) =: a^{*}_{l}(f) \\ U \left[ (-1)^{\cN}\otimes a^{*}(f) \right] U^{*} &= a^{*}(0\oplus f) =: a^{*}_{r}(f)
\end{split}
\end{equation}
where $a^{*}_{\s}(f)$, $\s = l,\,r$, are the left and right representations of $a^{*}(f)$. Notice that the presence of the operator $(-1)^{\cN}$ on the second line of (\ref{eq:a00}) and (\ref{eq:a00*}) guarantees that creation operators on the space $\cF (\h \oplus \h)$ satisfy the canonical anticommutation relations (and in particular that $a^\sharp_l (f)$ anticommutes with $a^\sharp_r (g)$, for all $f,g \in \h$). 

It is convenient to introduce the left and right representations of the operator-valued distributions $a_{x}$, $a^{*}_{x}$ by the relations
\be\label{eq:a01}
\begin{split}
a_{l}(f) &= \int dx\, a_{x,l}\overline{f(x)},\qquad a_{r}(f) = \int dx\, a_{x,r}\overline{f(x)}, \\
a^{*}_{l}(f) &= \int dx\, a^{*}_{x,l} f(x),\qquad a^{*}_{r}(f) = \int dx\, a^{*}_{x,r} f(x),
\end{split}
\ee
for all $f\in \h$. 

We also define the left and right representations of the second quantization of operators on $\h$ by 
\[ \begin{split} 
U \left[ d\G(O)\otimes 1 \right] U^{*} &= d\G(O\oplus 0) =: d\G_{l}(O) 
\\ U \left[ 1\otimes d\G(O) \right] U^{*} &= d\G(0\oplus O) =: d\G_{r}(O)\;.\end{split} \]
The left and right representations of $d\G(O)$ can be written in terms of the operator-valued distributions as
\be
d\G_{l}(O) = \int dxdy\, O(x;y)a^{*}_{x,l}a_{y,l},\qquad d\G_{r}(O) = \int dxdy\, O(x;y)a^{*}_{x,r}a_{y,r}.
\ee

{\it Reduced densities.} According to (\ref{eq:trAr}), the expectation of the observable $d\G (O)$ in the mixed state described by $\psi \in \cF (\h \oplus \h)$ is given by
\[ \langle \psi, d\G_l (O) \psi \rangle = \int dx dy\, O(x,y) \langle \psi, a_{x,l}^* a_{y,l} \psi \rangle = \tr \, O \, \gamma^{(1)}_\psi ,\]
where $\gamma_\psi^{(1)}$ is the one-particle reduced density associated with $\psi$, defined as the non-negative trace-class operator on $\h$ with the integral kernel 
\begin{equation}\label{eq:gam1-FF} \gamma^{(1)}_\psi (x;y) = \langle \psi, a_{y,l}^* a_{x,l} \psi \rangle .
\end{equation}

It is also useful to introduce the pairing density $\alpha_\psi$ associated with $\psi$. It is defined as the Hilbert-Schmidt operator on $\h$ with integral kernel 
\[ \alpha_\psi (x;y) = \langle \psi, a_{y,l} a_{x,l} \psi \rangle .\]

It is convenient to combine the reduced one-particle density $\gamma^{(1)}_\psi$ and the pairing density $\alpha_\psi$ in a single non-negative operator $\Gamma_\psi : \h \oplus \h \to \h \oplus \h$ acting as 
\begin{equation}\label{eq:Gdef0} \Gamma_\psi = \left( \begin{array}{ll} \gamma_\psi^{(1)} & \alpha_\psi \\ \alpha_\psi^{*} & 1- \overline{\gamma}_{\psi}^{(1)} \end{array} \right) .
\end{equation}
$\Gamma_\psi$ is known as the generalized one-particle reduced density associated with the mixed state described by $\psi \in \cF (\h \oplus \h)$. It allows us to compute the expectation of arbitrary quadratic expressions in creation and annihilation operators on $\cF (\h)$. In fact, for any $f_1, f_2, g_1 , g_2 \in \h$, we find 
\begin{equation}\label{eq:Gdef} \left\langle \psi, \left( a_l^* (f_1) + a_l (\overline{g}_1) \right) \left(a_l (f_2) + a_l^* (\overline{g}_2) \right)  \psi \right\rangle = \langle (f_1, g_1), \Gamma_\psi (f_2, g_2) \rangle 
\end{equation}
where, on the r.h.s., $\langle . , . \rangle$ denotes the inner product on $\h \oplus \h$.  Using (\ref{eq:Gdef}), it is easy to check that $0 \leq \Gamma_\psi \leq 1$ for any $\psi \in \cF (\h \oplus \h)$. 

To compute the expectation of $k$-particle observables (given by the product of more than two creation and annihilation operators), we need to define higher order correlation functions. We define the $k$-particle reduced density $\gamma^{(k)}_\psi$ associated with $\psi \in \cF (\h \oplus \h)$ as the non-negative trace-class operator on $\h^{\wedge k}$ with the integral kernel 
\begin{equation}\label{eq:gammak-F} \gamma^{(k)}_\psi (x_1, \dots , x_k ; y_1, \dots , y_k) = \frac{1}{k!} \left\langle \psi, a_{y_1, l}^* \dots a_{y_k,l}^* a_{x_k, l} \dots a_{x_1, l} \psi \right\rangle .
\end{equation}
Notice that we use the normalization 
\[ \tr \gamma^{(k)}_{\psi} = \frac{1}{k!} \langle \psi, \cN_l (\cN_l-1) \dots (\cN_l-k+1) \psi \rangle\]
with $\cN_l = d\G_l (1)$. Hence, for states with average number of particles $N$, we expect $\tr \gamma^{(k)}_\psi$ to be of the order $N^k$ (for states with exactly $N$ particles we have $\tr \gamma^{(k)}_\psi = \binom{N}{k}$, consistently with the definition (\ref{eq:gammak})). 

\medskip

{\it Time evolution of mixed states.} On $\cF$, we define the Hamilton operator $\cH_N$ by
\[ (\cH_N \Psi)^{(n)} = \cH_N^{(n)} \Psi^{(n)} \]
with 
\[ \cH_N^{(n)} = \sum_{j=1}^n -\eps^2 \Delta_{x_j} + \frac{1}{N} \sum_{i<j}^n V(x_i- x_j) .\]
By definition, $\cH_N$ commutes with the particle number operator, and, when restricted to the $N$-particle sector,  coincides with the Hamiltonian generating the evolution (\ref{eq:ferm-schr}). 
It is sometimes useful to express the Hamiltonian $\cH_N$ in terms of the operator-valued distributions $a_x,a_x^*$. We find
\begin{equation}\label{eq:ham-F}
\cH_N = \eps^2 \int dx \nabla_x a_x^* \nabla_x a_x + \frac{1}{2N} \int dx dy V(x-y ) a_x^* a_y^* a_y a_x .
\end{equation}

The time evolution of the density matrix $\r$ is given by \[ \r_{t} = e^{-i\mathcal{H}_{N}t/\e}\r e^{i\mathcal{H}_{N}t/\e}.\]  
Thus, we define the time evolution of $\kappa \in \cF (\h) \otimes \cF (\h)$ by $\kappa_{t} = e^{-i\mathcal{H}_{N}t/\e} \otimes e^{i\cH_N t/\eps} \kappa$. We denote by $\psi_t = U \kappa_t$ the vector in $\cF (\h \oplus \h)$ describing the mixed state with density matrix $\rho_t$. Then 
\[
\begin{split}
\psi_{t} &= U \kappa_{t} = Ue^{-i(\cH_{N}\otimes 1 - 1\otimes \cH_{N})t/\e}\kappa = e^{-i \cL_{N}t/\e} \psi
\end{split}
\]
where the {\it Liouvillian} $\cL_{N}$ is defined by 
\[
\cL_{N} = U\big(\mathcal{H}_{N}\otimes 1 - 1\otimes \mathcal{H}_{N}\big)U^{*}.
\]
A more  explicit expression follows from (\ref{eq:a00}) and  (\ref{eq:a01}):
\be\label{a0c}
\begin{split}
\cL_{N} =\;& \e^{2}\int dx\, \nabla_{x}a^{*}_{x,l}\nabla_{x}a_{x,l} + \frac{1}{2N}\int dxdy\, V(x-y)a^{*}_{x,l}a^{*}_{y,l}a_{y,l}a_{x,l}  \\
& - \e^{2}\int dx\, \nabla_{x}a^{*}_{x,r}\nabla_{x}a_{x,r} - \frac{1}{2N}\int dxdy\, V(x-y)a^{*}_{x,r}a^{*}_{y,r}a_{y,r}a_{x,r}.
\end{split}
\ee
Hence, the expectation of an arbitrary operator $A$ on $\cF (\h)$ in the evolved mixed state is given by
\[ \tr A \r_t = \langle \psi_t , (A\otimes 1) \psi_t \rangle = \langle \psi, e^{i\cL_N t/\e} (A \otimes 1) e^{-i\cL_N t/\e} \psi \rangle. \]

\medskip

{\it Bogoliubov transformations.} We will be interested in the time evolution of quasi-free initial data, which are supposed to be a good approximation of equilibrium states of confined systems in the mean field limit. It turns out that mixed quasi-free states can be represented  in $\cF (\h \oplus \h)$ by applying Bogoliubov transformations on the vacuum of $\cF (\h \oplus \h)$. 

Let us briefly review the concept of Bogoliubov transformation. For $f,g \in \h \oplus \h$ we define the field operators 
\[
A(f,g) = a(f) + a^{*}(\overline g), \qquad A^{*}(f,g) = (A(f,g))^{*} = a^{*}(f) + a(\overline g).
\]
We observe that  
\be\label{a3}
A^{*}(f,g) = A(Jg, Jf) = A (\cJ (f,g)) 
\ee
where $J : \h \oplus \h \to \h \oplus \h$ is the antilinear operator defined by $J f = \overline{f}$ and $\cJ : (\h \oplus \h) \oplus (\h \oplus \h) \to (\h \oplus \h) \oplus (\h \oplus \h)$ is given by 
\[ \cJ = \left( \begin{array}{ll} 0 & J \\ J & 0 \end{array} \right). \]
The anticommutation relations (\ref{eq:CAR}) can be rewritten as 
\be\label{a4}
\{ A(f_{1},g_{1}),\, A^{*}(f_{2},g_{2}) \} = \langle (f_{1},\,g_{1}),\, (f_{2},g_{2}) \rangle
\ee
for all $f_1, f_2, g_1, g_2 \in \h \oplus \h$. Here $\langle \cdot, \cdot \rangle$ denotes the inner product on $(\h\oplus \h) \oplus (\h\oplus \h)$. 

\medskip

A Bogoliubov transformation is a linear map $\nu : (\h \oplus \h) \oplus (\h \oplus \h) \to (\h \oplus \h) \oplus (\h \oplus \h)$ such that $\nu^* \nu = 1$ and 
\[ \nu \cJ = \cJ \nu .\]
These two conditions guarantee that the rotated field operators $B(f,g) := A (\nu (f,g))$ satisfy the same relations (\ref{a3}) and (\ref{a4}) as the original operators $A(f,g)$, for all $f,g \in \h \oplus \h$. 

It is easy to check that a map $\nu$ is a Bogoliubov transformation if and only if it is of the form
\be\label{eq:bogUV}
\n = \begin{pmatrix} U & \overline{V} \\ V & \overline{U} \end{pmatrix}
\ee
where $U,\,V: \h \oplus \h \to \h \oplus \h$ are linear maps with 
\be\label{a5b}
U^{*}U + V^{*}V = 1,\qquad U^{*}\overline{V} + V^{*}\overline U = 0
\ee

A Bogoliubov transformation $\nu : (\h \oplus \h) \oplus  (\h \oplus \h)$ is said to be unitary implementable if there exists a unitary operator $R_{\nu}:\cF (\h \oplus \h) \mapsto \cF (\h \oplus \h)$ such that 
\be\label{a6}
A(\nu(f,g)) = R^{*}_{\nu}A(f,g)R_{\nu}
\ee
for all $f,\,g \in \h \oplus \h$. The {\it Shale-Stinespring condition} states that the Bogoliubov transformation (\ref{eq:bogUV}) is implementable if and only if $V$ is a Hilbert-Schmidt operator. The proof of this theorem and more information about Bogoliubov transformations can be found in the lecture notes \cite{Solovej}. 

\medskip

{\it Quasi-free states.}  A vector $\psi \in \cF (\h \oplus \h)$ is said to describe a quasi-free state if for all $f_1, \dots , f_{2n+1} \in \h$, 
\[\langle \psi, a_l^{\sharp_1} (f_1) \dots a_l^{\sharp_{2n+1}} (f_{2n+1}) \psi \rangle=0,\]
and for all $f_1, \dots , f_{2n} \in \h$, the Wick's rule  
\begin{equation}\label{eq:wick} \langle \psi, a_l^{\sharp_1} (f_1) \dots a_l^{\sharp_{2n}} (f_{2n}) \psi \rangle = \sum_{\pi} \sigma_\pi  \prod_{j=1}^n \langle \psi, a_l^{\sharp_{i_j}} (f_{i_j}) a_l^{\sharp_{\ell_j}} (f_{\ell_j}) \psi \rangle 
\end{equation}
holds. The sum on the r.h.s. runs over all pairings $\pi$, mapping the $n$ indices $i_1, \dots , i_n$ into $\ell_1, \dots , \ell_n$. 

Hence, as long as we are interested in expectations of the physically relevant left - observables (i.e. observables expressed through left creation and annihilation operators), quasi-free states are completely characterized by their generalized one-particle reduced density (\ref{eq:Gdef0}). Notice that for every $\Gamma : \h \oplus \h \to \h \oplus \h$ satisfying $0 \leq \Gamma \leq 1$, there exists a quasi-free state on $\cF (\h \oplus \h)$ with generalized one-particle density matrix $\Gamma$; see \cite{Solovej}. 

It turns out that quasi-free states can be conveniently written in terms of Bogoliubov transformations.  
More precisely, the vector $\psi \in \cF (\h \oplus \h)$ describing a quasi-free state   can always be written as $\psi = R_{\nu}\O$, 
for a suitable Bogoliubov transformation $\nu$ on $(\h \oplus \h) \oplus (\h \oplus \h)$; see \cite{Solovej}. This is a particular example of a famous construction in quantum statistical mechanics called  Araki-Wyss representation \cite{AW} (see also \cite{DG} for additional information and references).

In the following, we will be interested in quasi-free states $\psi \in \cF (\h \oplus \h)$ with average number of particles equal to $N$ and with vanishing pairing density $\alpha_\psi = 0$. The justification for this restriction comes from physics; equilibrium states for systems of weakly interacting fermions (described in the grand canonical ensemble by the Hamiltonian (\ref{eq:ham-F})) trapped in a volume of order one can be approximated by quasi-free states with vanishing pairing density. In fact, the pairing density is only expected to play an important role when the potential varies on very short scales, of the order $\eps$ (in this limit, the Bardeen-Cooper-Schrieffer (BCS) theory is believed to be relevant and $\alpha$ can be significantly different from zero). The generalized one-particle density matrix of such quasi-free states has the form
\begin{equation}\label{eq:G1} \Gamma^{(1)} = \left( \begin{array}{ll} \omega_N & 0 \\ 0 & 1-\overline{\omega}_N \end{array} \right) 
\end{equation}
for a non-negative trace class operator $\omega_N$ on $\h$ satisfying $0 \leq \omega_N \leq 1$ and $\tr \omega_N = N$. 


It is then easy to check that the quasi-free state with generalized reduced density (\ref{eq:G1}) can be written as $\psi = R_{\nu_N} \Omega$, with the Bogoliubov transformation $\nu_N : (\h \oplus \h)\oplus (\h \oplus \h) \to (\h \oplus \h) \oplus (\h \oplus \h)$ defined by
\begin{equation}\label{eq:def-nuN} \nu_N = \left( \begin{array}{ll} U_N & \overline{V}_N \\ V_N & \overline{U}_N \end{array} \right), 
\end{equation}
where
\be\label{a8}
U_N = \begin{pmatrix} u_N & 0 \\ 0 & \overline{u}_N \end{pmatrix},\qquad V = \begin{pmatrix} 0 & \overline{v}_N \\ -v_N & 0 \end{pmatrix},
\ee
and  $u_N = \sqrt{1-\omega_N}$, $v_N = \sqrt{\omega_N}$ (notice that $U_N,V_N$ satisfy the relations (\ref{a5b})). Since $\tr\, V_N^{*}V_N = 2 \tr \omega_N = 2N<\infty$, the Bogoliubov transformation $\nu_N$ is implementable. 

\medskip

To check that the $\nu_N$ defined by (\ref{a8}) is indeed the correct Bogoliubov transformation, we  note first that, for any $f \in \h$, 
\be
\begin{split}
R^{*}_{\nu_N} a_{l}(f) R_{\nu_N} & = a_{l}(u_N f) - a^{*}_{r}(\overline{v}_N \overline{f}), \\ 
R^{*}_{\nu_N} a_{r}(f) R_{\nu_N} &= a_{r}(\overline{u}_N f) + a^{*}_{l}(v_N \overline f).
\end{split}
\ee
Equivalently,
\be\label{a8b}
\begin{split}
R^{*}_{\nu_N} a_{x,l} R_{\nu_N} & = a_{l}(u_{x}) - a^{*}_{r}(\overline{v}_{x}), \\ 
R^{*}_{\nu_N} a_{x,r} R_{\nu_N} &= a_{r}(\overline{u}_{x}) + a^{*}_{l}(v_{x}),
\end{split}
\ee
where we used the notation $u_x (y) = u_N (y,x)$, $v_x (y)= v_N (y,x)$, and where $u_N (y,x)$ and $v_N (y,x)$ denote the kernels of the operators $u_N, v_N$. 

Using (\ref{a8b}) and the unitarity of $R_{\nu_N}$ we find 
\be
\begin{split}
\g_{\psi}(x;y) &= \langle R_{\nu_N}\O, a^{*}_{y,l}a_{x,l} R_{\nu_N}\O\rangle \\
& = \langle \O, R^{*}_{\nu_N}a^{*}_{y,l}R_{\nu_N}R^{*}_{\nu_N}a_{x,l}R_{\nu_N}\O\rangle \\
& = \langle \O, (a^{*}_{l}(u_{y}) - a_{r}(\overline{v}_{y}))(a_{l}(u_{x}) - a^{*}_{r}(\overline{v}_{x})) \O\rangle \\
& = \langle \O, a_{r}(\overline{v}_{y})a^{*}_{r}(\overline{v}_{x}) \O\rangle\\
& = (v_N^{*}v_N)(x;y)\\ & = \o_{N}(x;y)
\end{split}
\ee
where we used the  anticommutation relations and the fact that $a_{l},\, a_{r}$ annihilate the vacuum. Also,
\be
\begin{split}
\alpha_{\psi}(x;y) &= \langle R_{\nu_N}\O, a_{y,l}a_{x,l} R_{\nu_N}\O\rangle\\
&= \langle \O, (a_{l}(u_{y}) - a^{*}_{r}(\overline{v}_{y})(a_{l}(u_{x}) - a^{*}_{r}(\overline{v}_{x}))\O\rangle\\
&= \langle \O, a_{l}(u_{y})a^{*}_{r}(\overline{v}_{x})\O\rangle\\
&= 0
\end{split}
\ee
where we used that $\{a_{l}(u_{y}),\, a_{r}^{*}(\overline{v_{x}})\} = 0$. Finally, notice that the vector $\psi = R_{\nu_N}\O\in \cF(\h \oplus \h)$ corresponds to a specific choice of $\wt{\kappa}\in \cL^{2}(\cF(L^{2}))$ such that $\wt{\kappa}\wt{\kappa}^{*} = \rho$, where  $\rho$ is the quasi-free state with reduced one-particle density matrix given by $\o_{N}$. In fact, $\wt{\kappa} \simeq U^{*} R_{\nu_N} \O\in \cF(\h)\otimes \cF(\h)$; the isomorphism is the one induced by the map $\psi \otimes \overline{\phi} \to |\psi \rangle \langle \phi|$, extended by linearity to the whole space $\cF(\h)\otimes \cF(\h)$.


\medskip

{\it Main result.} In our main theorem, we consider the time evolution of initial mixed states which are close to quasi-free states
satisfying certain semiclassical estimates. For such initial data, we show that the evolution remains close to a (mixed) quasi-free state, with reduced density evolving according to the Hartree-Fock equation (\ref{eq:HF}). 
\begin{thm}\label{thm:main}
Let $V\in L^{1}(\bR^{3})$, and assume that
\begin{equation}\label{eq:Vass}
\int dp\, (1+|p|)^{2}|\widehat{V}(p)| < \infty.
\end{equation}
Let $\o_{N}$ be a sequence of operators on $\h = L^{2}(\bR^{3})$ such that $0\leq \o_{N}\leq 1$, $\tr(1-\D)\o_{N}<\infty$, $\tr\o_{N} = N$, and
\begin{equation}\label{eq:oass}
\begin{split}
\|[v_N, x]\|_\text{HS} &\leq CN^{1/2}\e, \qquad \|[v_N,\e\nabla]\|_\text{HS} \leq CN^{1/2}\e, \\
\|[u_{N}, x]\|_\text{HS} &\leq CN^{1/2}\e, \qquad  \|[u_N,\e\nabla]\|_\text{HS} \leq CN^{1/2}\e,
\end{split}
\end{equation}
with $v_{N} = \sqrt{\o_{N}}$, $u_{N} = \sqrt{1 - \o_{N}}$, for a suitable constant $C>0$. Let $\nu_{N}$ denote the Bogoliubov transformations (\ref{eq:def-nuN}), such that $R_{\nu_{N}}\O_{\cF (\h\oplus\h )}$ is the quasi-free state on $\cF (\h \oplus \h)$ with the generalized one-particle density matrix
\be
\G_{\n_{N}} = \begin{pmatrix} \o_{N} & 0 \\ 0 & 1 - \overline{\o}_{N} \end{pmatrix}.
\ee
Let $\xi_N \in \cF (\h \oplus \h)$ be a sequence with $\langle \xi_N, \cN^{10} \xi_N \rangle \leq C$ uniformly in $N$, and such that $\xi_{N} = \chi(\cN \leq CN)\xi_{N}$, where $\chi(\cdot)$ is the characteristic function and $C\geq 0$ is independent of $N$. Here $\cN = d\Gamma_l (1) + d\Gamma_r (1)$ is the particle number  operator in the extended Fock-space $\cF (\h \oplus \h)$. 

Let $\g^{(1)}_{N,t}$ be the one-particle reduced density matrix associated with the evolved state
\be
\psi_{N,t} = e^{-i\cL_{N}t/\e}R_{\nu_{N}}\xi_{N}
\ee
where the Liouvillian $\cL_{N}$ has been defined in (\ref{a0c}). Let $\o_{N,t}$ be the solution of the time-dependent Hartree-Fock equation
\be\label{eq:HF2}
i\e\partial_{t}\o_{N,t} = [-\e^{2}\D + V*\r_{t} - X_{t},\o_{N,t}]\;,
\ee
with the initial data $\o_{N,0} = \o_{N}$. Then, there exist constants $K,\,c_{1},\, c_{2}$, such that
\be\label{eq:conv-HS}
\big\| \g^{(1)}_{N,t} - \o_{N,t} \big\|_{HS} \leq K \exp(c_{2}\exp(c_{1}|t|))
\ee
and
\be\label{eq:conv-tr}
\tr \big| \g^{(1)}_{N,t} - \o_{N,t} \big| \leq K N^{1/2} \exp(c_{2}\exp(c_{1}|t|)).
\ee
\end{thm}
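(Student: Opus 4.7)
The strategy is to adapt the fluctuation-dynamics approach of \cite{BPS} from pure Slater determinants to mixed quasi-free states living on the doubled Fock space $\cF(\h\oplus\h)$. I would define the time-dependent Bogoliubov transformation $\nu_{N,t}$ by (\ref{eq:def-nuN}) with $\omega_N$ replaced by the Hartree-Fock solution $\omega_{N,t}$, and consider the fluctuation vector
\[
\xi_{N,t} \;=\; R_{\nu_{N,t}}^{*}\, e^{-i\cL_N t/\e}\, R_{\nu_N}\, \xi_N,
\]
so that $\gamma^{(1)}_{N,t}-\omega_{N,t}$ and $\alpha_{\psi_{N,t}}$ can be expressed as expectations on $\xi_{N,t}$ of quadratic polynomials in $a_{\sigma}^{\sharp}$, whose operator norms are controlled by $\cN^{1/2}$. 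Thus both (\ref{eq:conv-HS}) and (\ref{eq:conv-tr}) are reduced to a Gr\"onwall bound on $\langle \xi_{N,t},\cN \xi_{N,t}\rangle$, and to higher moments $\langle\xi_{N,t},\cN^k\xi_{N,t}\rangle$ for $k\le 10$ (explaining the $\cN^{10}$ assumption on the initial excitation vector).

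\paragraph{Generator of the fluctuation dynamics.} The vector $\xi_{N,t}$ satisfies $i\e\partial_t \xi_{N,t}=\cG_N(t)\xi_{N,t}$ with
\[
\cG_N(t) \;=\; \bigl(i\e\,\partial_t R_{\nu_{N,t}}^{*}\bigr)R_{\nu_{N,t}} \;+\; R_{\nu_{N,t}}^{*}\,\cL_N\,R_{\nu_{N,t}}.
\]
I would compute $R_{\nu_{N,t}}^{*}\cL_N R_{\nu_{N,t}}$ by applying the Bogoliubov rules (\ref{a8b}) to the four-operator expression (\ref{a0c}). This produces a constant, terms quadratic in the $A$-operators on $\h\oplus\h$, cubic terms, and the quartic interaction. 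The key algebraic fact is that the quadratic piece, once combined with the term $(i\e\partial_t R_{\nu_{N,t}}^{*})R_{\nu_{N,t}}$, is exactly canceled by imposing that $\omega_{N,t}$ solve the Hartree-Fock equation (\ref{eq:HF2}); this is the doubled-Fock analogue of the diagonalization performed in \cite{BPS} and uses crucially that the pairing density vanishes. What remains in $\cG_N(t)$ is therefore cubic (in operators involving one $v_{N,t}$ or $u_{N,t}$ factor times a quartic kernel $V(x-y)/N$) and the original quartic interaction.

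\paragraph{Controlling $\langle\xi_{N,t},\cN^k\xi_{N,t}\rangle$.} The quartic interaction commutes with $\cN_l$ and $\cN_r$, so only the cubic remainders $\cC(t)$ contribute to $\tfrac{d}{dt}\langle\xi_{N,t},\cN^k \xi_{N,t}\rangle=\tfrac{i}{\e}\langle\xi_{N,t},[\cC(t),\cN^k]\xi_{N,t}\rangle$. The cubic terms have the schematic form $N^{-1}\int dx\,dy\,V(x-y)\,a_{\sigma}^{\sharp}(v_{N,t;x})\,a_{\sigma'}^{\sharp}(u_{N,t;y})\,a_{\tau}^{\sharp}$ and similar. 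Following the BPS estimates, one bounds them by inserting the Fourier representation $V(x-y)=\int d p\,\hat V(p)\,e^{ip\cdot x}e^{-ip\cdot y}$ and splitting, for each such factor, as $e^{ip\cdot x}u_{N,t}= u_{N,t}e^{ip\cdot x} + [e^{ip\cdot x},u_{N,t}]$, and similarly for $v_{N,t}$ and for $\e\nabla$. The diagonal pieces give operators controlled directly by $\cN^{k+1}$, while the commutator pieces are small exactly by the semiclassical estimates (\ref{eq:oass}) \emph{at time $t$}, producing the factor $\e N^{1/2}(1+|p|)$; the weight $(1+|p|)^2|\hat V(p)|$ appears through two such splittings and is integrable by assumption (\ref{eq:Vass}). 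This yields $\tfrac{d}{dt}\langle\xi_{N,t},\cN^k\xi_{N,t}\rangle\le C_k(t)\langle\xi_{N,t},(\cN+1)^k\xi_{N,t}\rangle$, and Gr\"onwall then gives a uniform-in-$N$ bound growing at most exponentially in $|t|$.

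\paragraph{Propagation of the semiclassical bounds and conclusion.} The previous step requires the commutator estimates (\ref{eq:oass}) to hold for $v_{N,t}=\sqrt{\omega_{N,t}}$ and $u_{N,t}=\sqrt{1-\omega_{N,t}}$, not just at $t=0$. I would propagate them by an independent Gr\"onwall argument on the Hartree-Fock equation (\ref{eq:HF2}): differentiating $\|[v_{N,t},e^{ip\cdot x}]\|_{HS}$ etc.~and expanding the commutator of the mean-field generator $h_{\rm HF}(t)=-\e^2\Delta+V*\rho_t-X_t$ with $e^{ip\cdot x}$ and $\e\nabla$. Square roots are handled through the integral representation $\sqrt{\omega_{N,t}}=\pi^{-1}\int_0^\infty\lambda^{-1/2}\omega_{N,t}(\lambda+\omega_{N,t})^{-1}\,d\lambda$, which reduces commutators of $\sqrt{\omega_{N,t}}$ to those of $\omega_{N,t}$ at the cost of extra $\lambda$-integrations that converge. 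This is the most delicate step and I expect it to be the main obstacle: the exchange term $X_t$ and the nonlinearity $V*\rho_t$ force the constants in the semiclassical bounds to grow doubly exponentially in $|t|$, which is precisely the shape of the final estimate. Once the bounds on $\cN^k$ are established and one translates back through $R_{\nu_{N,t}}$, the Hilbert-Schmidt estimate (\ref{eq:conv-HS}) follows from $\|\gamma^{(1)}_{N,t}-\omega_{N,t}\|_{HS}\lesssim\langle\xi_{N,t},\cN\xi_{N,t}\rangle^{1/2}$ type inequalities, and (\ref{eq:conv-tr}) is obtained by writing the trace norm as a sum of Hilbert-Schmidt norms of $\sqrt{\omega_{N,t}}\,(\gamma^{(1)}_{N,t}-\omega_{N,t})$ and $\sqrt{1-\omega_{N,t}}\,(\gamma^{(1)}_{N,t}-\omega_{N,t})$, each of which costs an additional $N^{1/2}$ factor via $\tr\,\omega_{N,t}=N$.
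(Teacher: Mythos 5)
The high-level architecture of your plan is right: represent the mixed state on $\cF(\h\oplus\h)$, compare the Schr\"odinger evolution to the Hartree--Fock quasi-free state via the fluctuation dynamics $\cU_N(t;0)=R_{\nu_{N,t}}^*e^{-i\cL_Nt/\e}R_{\nu_N}$, reduce both (\ref{eq:conv-HS}) and (\ref{eq:conv-tr}) to moments of $\cN$ along $\cU_N$, and propagate the commutator bounds (\ref{eq:oass}) along the Hartree--Fock flow. The propagation step and the use of (\ref{eq:Vass}) are exactly as in the paper.

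However, there are two intertwined gaps in the middle of your argument. First, your description of the generator is wrong: a fermionic Bogoliubov transformation sends $a_{x,l}\mapsto a_l(u_x)-a_r^*(\overline{v}_x)$, which is \emph{linear}, so the conjugated quartic interaction expands, after normal ordering, into quartic $+$ quadratic $+$ scalar contributions only --- there are no cubic terms (cubic terms would violate the parity of the operator count; they appear in the bosonic Weyl-operator conjugation, which is affine, not here). What the Hartree--Fock equation cancels are the quadratic and scalar pieces; what remains is a quartic operator $\cC_N+\cQ_N$, where $\cC_N$ commutes with $\cN$ and $\cQ_N$ does not.

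Second, and more seriously, the direct Gr\"onwall on $\langle\xi_{N,t},\cN^k\xi_{N,t}\rangle$ you propose does \emph{not} close for the full $\cQ_N$. Among the $\cN$-non-commuting quartic terms, only the purely creation terms $a_l^*a_l^*a_r^*a_r^*$ admit the crucial cancellation
\[
u_te^{ipx}v_t-v_te^{ipx}u_t = u_t[e^{ipx},v_t]+v_t[u_t,e^{ipx}],
\]
which turns two Hilbert--Schmidt kernels of size $N^{1/2}$ into one of size $N^{1/2}$ and one of size $\e N^{1/2}(1+|p|)$, closing Gr\"onwall. The remaining terms of $\cQ_N$ (with three creation operators and one annihilation) have no such cancellation; their commutator with $\cN$ cannot be bounded by $\langle\cN^k\rangle$ uniformly in $N$. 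The paper handles this by introducing an \emph{auxiliary} fluctuation dynamics $\widetilde{\cU}_N$ generated by keeping only the creation part $\widetilde{\cQ}_N$ of $\cQ_N$, proving Gr\"onwall bounds for $\widetilde{\cU}_N$, showing $\|(\cU_N-\widetilde{\cU}_N)\xi\|\le N^{-1/6}\exp(c_2\exp(c_1|t|))\|(\cN+1)^{3/2}\xi\|$, and then running an inductive bootstrap (in the exponent $n$, trading factors $N^{-1/3}$ against factors $(\cN+1)^{3}$) to lift moment bounds from $\widetilde{\cU}_N$ to $\cU_N$. This bootstrap is precisely why the hypothesis asks for $\langle\xi_N,\cN^{10}\xi_N\rangle\le C$ rather than just $\langle\xi_N,\cN\xi_N\rangle\le C$. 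Your proposal does not account for this loss, nor for the extra structural work (well-posedness of the auxiliary dynamics, interaction picture to tame the unbounded kinetic term) that it requires; as written, the Gr\"onwall step would fail on the mixed creation/annihilation quartic terms.
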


We can also prove convergence of higher order densities. The $k$-particle reduced density associated with the quasi-free state having one-particle reduced density $\omega_{N,t}$ is given by the $k$-fold wedge product 
\[ \omega^{\wedge k}_{N,t} (x_1, \dots , x_k ; y_1, \dots, y_k) = \sum_{\pi \in S_k} \sigma_\pi \prod_{j=1}^k \omega_{N,t} (x_j ; y_{\pi (j)}). \]

\begin{thm}\label{thm:maink}
Let $V \in L^1 (\bR^3)$ satisfy (\ref{eq:Vass}). Let $\omega_N$ be a sequence of trace-class operators on $\h$ with $0 \leq \omega_N \leq 1$, $\tr (1-\Delta)\o_{N}<\infty$, $\tr \omega_N = N$ and satisfying (\ref{eq:oass}). Fix $k \in \bN$ and let $\xi_N$ be a sequence in $\cF (\h \oplus \h)$ such that $\langle \xi_N, \cN^{10k} \xi_N \rangle \leq C$ uniformly in $N$, and such that $\xi_{N} = \chi(\cN \leq CN)\xi_{N}$ for $C\geq 0$ independent of $N$. Again, $\cN = d\Gamma_l (1) + d\Gamma_r (1)$. 

We denote by $\gamma^{(k)}_{N,t}$ the reduced $k$-particle density matrix (defined as in (\ref{eq:gammak-F})) associated with $\psi_{N,t} = e^{-i\cL_{N}t/\e}R_{\nu_{N}}\xi_{N}$.  Then there are constants $K,c_1, c_2$ depending also on $k \in \bN$, such that 
\be\label{eq:conv-HSk}
\big\| \g^{(k)}_{N,t} - \o^{\wedge k}_{N,t} \big\|_{HS} \leq K N^{(k-1)/2} \exp(c_{2}\exp(c_{1}|t|))
\ee
and
\be
\tr \big| \g^{(k)}_{N,t} - \o^{\wedge k}_{N,t} \big| \leq K N^{k - 1/2} \exp(c_{2}\exp(c_{1}|t|)).
\ee
\end{thm}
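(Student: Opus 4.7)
\textit{Proof sketch.} The plan is to reduce Theorem~\ref{thm:maink} to a quantitative Wick-type expansion of $\g^{(k)}_{N,t}$ about the reference quasi-free state with one-particle density $\o_{N,t}$, and to control the corrections through higher moments of the number operator on the fluctuation vector. Introduce the fluctuation
\[
\xi_{N,t} := R^{*}_{\n_{N,t}} e^{-i\cL_{N}t/\e} R_{\n_{N}} \xi_{N},
\]
with $\n_{N,t}$ the Bogoliubov transformation (\ref{eq:def-nuN}) built from the Hartree--Fock solution $\o_{N,t}$; by construction $\psi_{N,t} = R_{\n_{N,t}}\xi_{N,t}$ and $\xi_{N,0}=\xi_{N}$.

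The key first step is the propagation of the $10k$-th moment of the number operator:
\[
\bka{\xi_{N,t}, \cN^{10k}\xi_{N,t}} \leq K\exp(c_{2}\exp(c_{1}|t|))
\]
uniformly in $N$. The generator of the fluctuation dynamics, obtained by conjugating $\cL_{N}$ with $R_{\n_{N,t}}$ and combining with the Hartree--Fock flow, has a commutator with $\cN^{10k}$ that can be absorbed by $\cN^{10k}$ itself plus lower moments, provided one uses the semiclassical commutator bounds (\ref{eq:oass}) and the integrability assumption $\int dp\,(1+|p|)^{2}|\widehat{V}(p)|<\infty$. A Gronwall loop then yields the claim, starting from the hypothesis $\bka{\xi_{N},\cN^{10k}\xi_{N}}\leq C$. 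This extends the single-moment bound underlying Theorem~\ref{thm:main}.

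With this moment control in hand, expand
\[
\g^{(k)}_{N,t}(\bx;\by) = \frac{1}{k!}\bka{\xi_{N,t}, \big(R^{*}_{\n_{N,t}} a^{*}_{y_{1},l}R_{\n_{N,t}}\big)\cdots \big(R^{*}_{\n_{N,t}} a_{x_{1},l}R_{\n_{N,t}}\big)\xi_{N,t}},
\]
and substitute (\ref{a8b}) to rewrite each conjugated operator as a sum of an $a_{l}$-piece and an $a^{*}_{r}$-piece smeared with $u_{N,t,\cdot}$ and $v_{N,t,\cdot}$. Bringing the resulting product into Wick normal order via (\ref{eq:CAR}), the purely scalar contribution, coming from full contractions of $a_{r}^{*}$ against $a_{r}$, is computed, exactly as in the verification following (\ref{a8b}), to equal $\o_{N,t}^{\wedge k}(\bx;\by)$. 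Each remaining term is of the form $\bka{\xi_{N,t}, P(\bx;\by)\xi_{N,t}}$ for a polynomial $P$ of degree between $2$ and $2k$ in the operator-valued distributions $a^{\sharp}_{l/r}$ smeared with $u_{N,t,\cdot}, v_{N,t,\cdot}$. By Cauchy--Schwarz, the pointwise bounds $\|a^{\sharp}(f)\Psi\|\leq \|f\|\|(\cN+1)^{1/2}\Psi\|$, the identities $\int dx\,\|v_{N,t,x}\|^{2}=\tr\o_{N,t}=N$ and $\|u_{N,t,x}\|\leq 1$, together with the moment bound above, each residual squared (integrated over $(\bx;\by)$) is bounded by $N^{k-1}\exp(c_{2}'\exp(c_{1}|t|))$. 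This gives (\ref{eq:conv-HSk}); the trace-norm bound follows from $\tr|A|\leq (\mathrm{rank}\,A)^{1/2}\|A\|\HS$, applied after restricting to sectors of at most $\sim N$ particles, whose approximate stability in time comes from the cutoff $\xi_{N}=\chi(\cN\leq CN)\xi_{N}$ combined with the moment control.

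The main obstacle is the higher-moment propagation: absorbing the commutator of $\cN^{10k}$ with the quartic part of the fluctuation Hamiltonian requires a careful interplay between the $\e$-gain produced by the semiclassical bounds (\ref{eq:oass}) and the naive loss from anticommuting $\cN$ past $2k$ field operators. Once this is in place, the remainder of the argument is a combinatorial refinement of the $k=1$ analysis underlying Theorem~\ref{thm:main}.
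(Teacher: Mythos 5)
Your overall plan mirrors the paper's: represent $\psi_{N,t}$ as $R_{\n_{N,t}}\xi_{N,t}$, control moments of $\cN$ along the fluctuation dynamics (this is precisely the paper's Theorem \ref{thm:maingrowth}, which is already stated for arbitrary $k$ and does not need to be re-derived), normal-order the $k$-point function, and identify the fully contracted piece with $\o_{N,t}^{\wedge k}$. Two of the steps you sketch, however, do not survive scrutiny.

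First, the Cauchy--Schwarz estimate of the residual terms relies on the claim $\|u_{N,t,x}\|\leq 1$ for the kernel $u_{N,t,x}(y)=u_{N,t}(y;x)$. This is false: $\|u_{N,t,x}\|_{L^{2}}^{2} = (1-\o_{N,t})(x;x)$ is a divergent diagonal value, since $u_{N,t}$ differs from the identity by a trace-class operator and hence its kernel carries a $\delta$-function on the diagonal. The paper avoids this by pairing the $k$-point kernel with a Hilbert--Schmidt test observable $O$ and pulling the operators $u_{t}$, $\ov v_{t}$ onto $O$ as $\eta_{1}\cdots\eta_{k-j}\, O\, w_{1}\cdots w_{k-j}$, after which only the operator-norm bounds $\|u_{t}\|\leq 1$, $\|\ov v_{t}\|\leq 1$ and the bound $\|\o_{N,t}\|_{HS}\leq N^{1/2}$ enter. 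Your estimate can be repaired along these lines, but as written it is not a valid argument.

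Second, the trace-norm bound cannot be obtained from $\tr|A|\leq (\mathrm{rank}\,A)^{1/2}\|A\|_{HS}$ in the way you propose: neither $\g^{(k)}_{N,t}$ nor $\o^{\wedge k}_{N,t}$ has finite rank in general (a positive-temperature $\o_{N,t}$ typically has infinitely many nonzero eigenvalues, so $\o^{\wedge k}_{N,t}$ does too), and the cutoff $\xi_{N}=\chi(\cN\leq CN)\xi_{N}$ does not transfer to a rank bound on the reduced densities because the Bogoliubov implementor $R_{\n_{N,t}}$ does not preserve particle-number sectors. The paper instead estimates $|\tr\,O(\g^{(k)}_{N,t}-\o^{\wedge k}_{N,t})|$ directly for compact $O$, using the antisymmetry of the densities to fix the contraction pattern and the key bound $\|\mathrm{tr}_{k-j+1,\dots,k}\,O(1\otimes\o_{N,t}^{\otimes j})\|\leq N^{j}\|O\|$, which costs a full factor $N^{j}$ (rather than $N^{j/2}$) and is what produces the weaker $N^{k-1/2}$ rate in trace norm. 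Your sketch does not offer a viable substitute for this step, and as stated it would incorrectly predict a rate determined by a nonexistent rank bound.
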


The general strategy of the proof is similar to the one in  \cite{BPS} for pure quasi-free states. The main new ingredient is the representation of fermionic mixed states on $\cF(\h)$ as pure states on $\cF(
\h\oplus \h)$. As explained above, on $\cF (\h \oplus \h)$ time-dependent quasi-free states can be represented using Bogoliubov transformations.  With $\o_{N,t}$ denoting the solution of the Hartree-Fock equation (\ref{eq:HF}), we define the corresponding Bogoliubov transformation $\n_{N,t} : (\h \oplus \h) \oplus (\h \oplus \h) \to (\h \oplus \h) \oplus (\h \oplus \h)$ 
by
\[
\nu_{N,t} = \begin{pmatrix} U_{N,t} & \overline{V}_{N,t} \\ V_{N,t} & \overline{U}_{N,t}  \end{pmatrix} 
\]
where the linear maps $U_{N,t},\, V_{N,t} : \h \oplus \h \to \h \oplus \h$ have the form 
\be
U_{N,t} = \begin{pmatrix} u_{N,t} & 0 \\ 0 & \overline{u}_{N,t} \end{pmatrix}, \qquad V_{N,t} = \begin{pmatrix} 0 & \overline{v}_{N,t} \\ -v_{N,t} & 0 \end{pmatrix}
\ee
with $u_{N,t} = \sqrt{1 - \o_{N,t}}$ and $v_{N,t} = \sqrt{\o_{N,t}}$. Since $\tr V^{*}_{N,t}V_{N,t} = 2N <\infty$, this Bogoliubov transformation is implementable, and we shall denote by $R_{t}\equiv R_{\nu_{N,t}}$ its unitary implementor on $\cF (\h\oplus \h)$.

\medskip

In our proof we compare $\psi_{N,t} = e^{-i \cL_N t/\eps} R_0 \xi_N$ with the quasi-free state $R_t \Omega$. To this end, we define the fluctuation vectors $\xi_{N,t}$ by
\[ \psi_{N,t} = e^{-i\cL_{N}t/\e}R_{0}\xi_{N} = R_t \xi_{N,t}. \]
Equivalently 
$\xi_{N,t} = \cU_N (t;0) \xi_N$, with the fluctuation dynamics 
\be\label{eq:flu-dyn}
\cU_{N}(t;s) = R^{*}_{t} e^{-i\cL_{N}(t - s)/\e} R_{s}
\ee
Similarly to \cite{BPS}, the problem of proving the  convergence (\ref{eq:conv-HS}) of the one-particle reduced density reduces to controlling the growth of the expectation of the particle number  operator $\cN = d\Gamma_l (1) + d\Gamma_r (1)$ with respect to the fluctuation dynamics. Analogously, the convergence (\ref{eq:conv-HSk}) of the $k$-particle reduced density follows from an estimate for the $k$-th moment of the particle number operator with respect to $\cU_N$. 

\medskip

The paper is organized as follows. In Section \ref{sec:gen} we compute the generator of $\cU_{N}$. In Section \ref{sec:growth} we control the growth of fluctuations with respect to the dynamics $\cU_{N}$. To do so we  first control the growth of fluctuations with respect to a suitable auxiliary dynamics $\widetilde{\cU}_{N}$, and then show  that $\widetilde{\cU}_{N}$ is close in norm to $\cU_{N}$. Then, in Section \ref{sec:proof} we use the bounds on the growth of fluctuations to prove Theorems \ref{thm:main} and \ref{thm:maink}. Finally, in Appendix \ref{secpropagation} we prove the propagation of the semiclassical structure of the initial data.

\section{The generator of the fluctuation dynamics}\label{sec:gen}
\setcounter{equation}{0}

The fluctuation dynamics $\cU_{N}(t;s)$ defined in (\ref{eq:flu-dyn}) satisfies
\be
i\e \frac{d}{dt} \cU_{N}(t;s) = \mathcal{G}_{N}(t) \cU_{N}(t;s)\;,\label{g1}
\ee
with $\cU_N (s;s) = 1$ for all $s \in \bR$ and where
\be
\mathcal{G}_{N}(t) = (i\e\partial_{t}R^{*}_{t})R_{t} + R^{*}_{t}\cL_{N}R_{t}.\label{g2}
\ee
In the next proposition, we compute explicitly the generator $\mathcal{G}_{N}(t)$.
\begin{prp}
The generator (\ref{g2}) of the fluctuation dynamics is given by
\be
\mathcal{G}_{N}(t) = d\Gamma_{l}(h_{HF}(t)) - d\Gamma_{r}(\overline{h_{HF}(t)}) + \mathcal{C_{N}} + \mathcal{Q}_{N}
\ee
where 
\begin{equation}\label{eq:CN}
\begin{split}
\mathcal{C}_{N} = \; & \frac{1}{2N}\int dxdy\, V(x-y)\Big( a^{*}_{l}(u_{x})a^{*}_{l}(u_{y})a_{l}(u_{y})a_{l}(u_{x}) + 2a^{*}_{l}(u_{x})a^{*}_{r}(v_{x})a_{r}(v_{y})a_{l}(u_{y}) \\
& -2a^{*}_{l}(u_{x})a^{*}_{r}(\overline{v_{y}})a_{r}(\overline{v_{y}})a_{l}(u_{x}) + a^{*}_{r}(\overline{v_{y}})a^{*}_{r}(\overline{v_{x}})a_{r}(\overline{v_{x}})a_{r}(\overline{v_{y}}) \\
& - a^{*}_{r}(\overline{u_{x}})a^{*}_{r}(\overline{u_{y}})a_{r}(\overline{u_{y}})a_{r}(\overline{u_{x}}) - 2a^{*}_{r}(\overline{u_{x}})a^{*}_{l}(v_{x})a_{l}(v_{y})a_{r}(\overline{u_{y}}) \\ & + 2a^{*}_{r}(\overline{u_{x}})a^{*}_{l}(v_{y})a_{l}(v_{y})a_{r}(\overline{u_{x}}) - a^{*}_{l}(v_{y})a^{*}_{l}(v_{x})a_{l}(v_{x})a_{l}(v_{y})\Big)
\end{split} 
\end{equation}
contains the quartic terms that commute with $\cN = d\G_{l}(1) + d\G_{r}(1)$, and 
\begin{equation}\label{eq:QN}
\begin{split} 
\cQ_{N} = \; & \frac{1}{2N}\int dxdy\, V(x-y)\Big(a^{*}_{l}(u_{x})a^{*}_{l}(u_{y})a^{*}_{r}(\overline{v_{y}})a^{*}_{r}(\overline{v_{x}}) + 
\\
&+ 2a^{*}_{l}(u_{x})a^{*}_{l}(u_{y})a^{*}_{r}(\overline{v_{x}})a_{l}(u_{y}) - 2a^{*}_{l}(u_{x})a^{*}_{r}(\overline{v_{y}})a^{*}_{r}(\overline{v_{x}})a_{r}(\overline{v_{y}}) - \\
& - a^{*}_{r}(\overline{u_{x}})a^{*}_{r}(\overline{u_{y}})a^{*}_{l}(v_{y})a^{*}_{l}(v_{x}) + 2a^{*}_{r}(\overline{u_{x}})a^{*}_{r}(\overline{u_{y}})a^{*}_{l}(v_{x})a_{r}(\overline{u_{y}}) - \\
& - 2a_{r}^{*}(\overline{u_{x}})a^{*}_{l}(v_{y})a^{*}_{l}(v_{x})a_{l}(v_{y}) + h.c.\Big)
\end{split}
\end{equation}
are the quartic terms that do not commute with $\cN$. In (\ref{eq:CN}) and (\ref{eq:QN}), we introduced the notation $u_x (y) := u_{N,t} (y,x)$, $v_x (y) := v_{N,t} (y,x)$ and similarly for the complex conjugate kernels. 
\end{prp}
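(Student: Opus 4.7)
The plan is to compute the two pieces $(i\varepsilon\partial_t R_t^*) R_t$ and $R_t^* \cL_N R_t$ of the generator separately and then combine them. I would first derive the conjugation rules for the operator-valued distributions by repeating the computation behind (\ref{a8b}) with the time-dependent Bogoliubov transformation $\nu_{N,t}$, obtaining
\[ R_t^* a_{x,l} R_t = a_l(u_{x,t}) - a_r^*(\overline{v_{x,t}}), \qquad R_t^* a_{x,r} R_t = a_r(\overline{u_{x,t}}) + a_l^*(v_{x,t}), \]
with $u_{x,t}(y) := u_{N,t}(y;x)$, $v_{x,t}(y) := v_{N,t}(y;x)$, together with their hermitian conjugates. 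Substituting these identities into the explicit form (\ref{a0c}) of $\cL_N$ and placing every product in normal form using the CAR (\ref{eq:CAR}) yields four types of contributions: quartic terms preserving the total particle number $\cN = d\G_l(1) + d\G_r(1)$; quartic terms that do not preserve $\cN$; quadratic terms arising from single contractions; and $c$-numbers from double contractions.

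By direct inspection one checks that the quartic particle-number-preserving terms produced by the conjugated interaction assemble into $\mathcal{C}_N$ as displayed in (\ref{eq:CN}), while the particle-number-non-preserving quartic terms assemble into $\cQ_N + \text{h.c.}$ as in (\ref{eq:QN}). The quadratic contributions from single contractions inside the interaction produce, after identifying the direct and exchange integral kernels, a term of the form $d\G_l(V*\rho_t - X_t) - d\G_r(\overline{V*\rho_t - X_t})$. The conjugated kinetic piece contributes $d\G_l(-\varepsilon^2 \Delta) - d\G_r(-\varepsilon^2 \overline{\Delta})$ together with off-diagonal pairing quadratic terms that are not of the form $d\G_{l/r}(\cdot)$, plus a $c$-number proportional to $\tr(-\varepsilon^2 \Delta)\omega_{N,t}$.

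The last step is to compute $(i\varepsilon \partial_t R_t^*) R_t$. Since $R_t$ implements a quadratic Bogoliubov transformation depending smoothly on $t$ through $u_{N,t}, v_{N,t}$, this piece is a pure quadratic form in $a_{l/r}^\sharp$ whose kernel is determined by $\partial_t u_{N,t}$ and $\partial_t v_{N,t}$. The Hartree-Fock equation (\ref{eq:HF2}) for $\omega_{N,t}$, together with $u_{N,t}=\sqrt{1-\omega_{N,t}}$ and $v_{N,t}=\sqrt{\omega_{N,t}}$, supplies the evolution equations for these square roots that are precisely what is needed to cancel the off-diagonal pairing quadratic terms left over from the conjugation step, and to combine the surviving diagonal quadratic terms into exactly $d\G_l(h_{HF}(t)) - d\G_r(\overline{h_{HF}(t)})$. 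The $c$-numbers cancel mutually. The main obstacle is the combinatorial bookkeeping: each quartic term in (\ref{a0c}) spawns $2^4 = 16$ summands under conjugation, and the fermionic signs—including those generated by the $(-1)^\cN$ twist in (\ref{eq:a00})-(\ref{eq:a00*}) and by interchanging $a_l^\sharp$ with $a_r^\sharp$—must be tracked with care to reproduce the precise signs displayed in (\ref{eq:CN}) and (\ref{eq:QN}). The only non-algebraic input is the Hartree-Fock equation, which is what eliminates the spurious pairing-type quadratic contributions and makes the resulting generator have the clean form claimed.
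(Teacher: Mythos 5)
Your proposal follows the same route as the paper: compute $(i\eps\partial_t R_t^*)R_t$ and $R_t^*\cL_N R_t$ separately, conjugate the operator-valued distributions through $R_t$, normal order, sort the quartic terms into the $\cN$-commuting part $\mathcal{C}_N$ and the rest $\cQ_N$, and use the Hartree--Fock equation to assemble the surviving quadratic terms into $d\Gamma_l(h_{HF}(t)) - d\Gamma_r(\overline{h_{HF}(t)})$ while cancelling the off-diagonal pairing contributions. One point in your intermediate bookkeeping is imprecise and would mislead if carried out literally: the conjugated kinetic term does \emph{not} reduce to $d\Gamma_l(-\eps^2\Delta) - d\Gamma_r(-\eps^2\overline{\Delta})$, but to $d\Gamma_l\big(u_t(-\eps^2\Delta)u_t + v_t(-\eps^2\Delta)v_t\big) - d\Gamma_r\big(\overline{u_t}(-\eps^2\Delta)\overline{u_t} + \overline{v_t}(-\eps^2\Delta)\overline{v_t}\big)$ (plus pairing terms and the trace), and similarly the single-contraction quadratic terms from the interaction come out conjugated by $u_t,v_t$ rather than directly as $d\Gamma_l(V*\rho_t - X_t)$. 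These $u_t,v_t$ dressings are essential: it is only by adding them to the contribution $d\Gamma_l\big(h_{HF}(t) - u_t h_{HF}(t) u_t - v_t h_{HF}(t) v_t\big)$ (and its right/complex-conjugate partner) coming from $(i\eps\partial_t R_t^*)R_t$ --- where the Hartree--Fock equation is used to solve for the quadratic kernels --- that everything telescopes to $d\Gamma_l(h_{HF}(t)) - d\Gamma_r(\overline{h_{HF}(t)})$. So the overall plan is right, but the decomposition you state for the conjugated kinetic and interaction quadratics should be corrected to include the $u_t,v_t$ conjugations before the final cancellation can actually be verified.
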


\begin{proof} {\it Computation of $(i\e\partial_{t}R^{*}_{t})R_{t}$}. To evaluate the first contribution in (\ref{g2}), we proceed as follows. By definition 
\be
R^{*}_{t} A(f,g) R_{t} = A(\n_{N,t}(f,g))\label{g2b}
\ee
for all $f,g\in \h \oplus \h$. Here $\nu_{N,t}$ is the Bogoliubov transformation
\be
\nu_{N,t} = \begin{pmatrix} U_{N,t} & \overline V_{N,t} \\ V_{N,t} & \overline U_{N,t} \end{pmatrix}
\ee
with $U_{N,t}$, $V_{N,t}$ given by
\be
U_{N,t} = \begin{pmatrix}  u_{N,t} & 0 \\ 0 & \overline{u_{N,t}} \end{pmatrix}\;,\qquad V_{N,t} = \begin{pmatrix} 0 & \overline{v_{N,t}} \\ -v_{N,t} & 0 \end{pmatrix} \label{g3}
\ee
and $u_{N,t} = \sqrt{1 - \o_{N,t}}$, $v_{N,t} = \sqrt{\o_{N,t}}$. Differentiating the left hand side of (\ref{g2b}) we get
\bea
i\e \partial_{t} \left(R^{*}_{t} A(f,g) R_{t}\right) &=& (i\e \partial_{t} R^{*}_{t}) A(f,g) R_{t} + R^{*}_{t} A(f,g) i\e \partial_{t} R_{t}\nn\\
&=& (i\e \partial_{t} R^{*}_{t})R_{t} R^{*}_{t} A(f,g) R_{t} + R^{*}_{t} A(f,g) R_{t} R^{*}_{t} i\e \partial_{t} R_{t}\nn\\
&=& (i\e \partial_{t} R^{*}_{t})R_{t} A(\nu_{N,t}(f,g)) + A(\nu_{N,t}(f,g)) R^{*}_{t} i\e \partial_{t} R_{t}\nn\\
&=& (i\e \partial_{t} R^{*}_{t})R_{t} A(\nu_{N,t}(f,g)) - A(\nu_{N,t}(f,g)) (i\e \partial_{t} R^{*}_{t})R_{t}\nn\\
&=& \Big[ (i\e \partial_{t} R^{*}_{t})R_{t} , A(\nu_{N,t}(f,g))\Big] \label{g4}
\eea
where in the fourth line we used that $0 = \partial_{t}\left( R^{*}_{t}R_{t}\right) = (\partial_{t}R^{*}_{t})R_{t} + R^{*}_{t}\partial_{t}R_{t}$. Therefore, from (\ref{g2b}), (\ref{g4}) we find
\be
\Big[ (i\e \partial_{t} R^{*}_{t})R_{t} , A(\nu_{N,t}(f,g))\Big] = i\e A(\partial_{t}\n_{N,t}(f,g)).\label{g5}
\ee
Since (\ref{g5}) holds true for all $f,g\in \h \oplus \h$, $(i\e \partial_{t} R^{*}_{t})R_{t}$ must be quadratic in the fermionic operators,
\bea
(i\e \partial_{t} R^{*}_{t})R_{t} &=& \sum_{\s,\s' = l,r}\Big[\int dxdy\, C_{t}(x,\s;y,\s')a^{*}_{x,\s}a_{y,\s'} \nn\\
&& + \frac{1}{2}\Big(\int dxdy\, D_{t}(x,\s;y,\s')a^{*}_{x,\s}a^{*}_{y,\s'} + h.c.\Big)\Big]\nn\\
&\equiv& \int d\xx d\yy\,C_{t}(\xx;\yy)a^{*}_{\xx}a_{\yy} + \frac{1}{2}\Big(\int d\xx d\yy\, D_{t}(\xx;\yy)a^{*}_{\xx}a^{*}_{\yy} + h.c.\Big)\label{g6}
\eea
where we used the shorthand notations $\xx=(x,\s)$, $\int d\xx = \sum_{\s=l,r}\int dx$. Eq.\ (\ref{g6}) holds for suitable kernels $C_{t}$, $D_{t}$, to be determined later, such that
\be
\overline{C_{t}(\xx;\yy)} = C_{t}(\yy;\xx)\;,\qquad D_{t}(\xx;\yy) = -D_{t}(\yy;\xx)\;.
\ee
Eq.\ (\ref{g5}) is the equation that determines $C_{t}$, $D_{t}$. Taking $f(x) = (\delta(x-w)\delta_{l,\s''}, \delta(x-w)\delta_{r,\s''})$, $g(x)=(0, 0)$, and setting $\ww = (w, \s'')$, we find
\[
A(\nu_{N,t}(f,g)) = A(U_{N,t}f, V_{N,t}f) = a(U_{N,t}f) + a^{*}(\overline{V}_{N,t}\overline{f}) = a(U_{t,\ww}) + a^{*}(\overline{V}_{t,\ww})\;
\]
where we introduced the notation
\[
a(U_{t,\ww}) := \sum_{\s}\int dz\, a_{z,\s} \overline{U_{N,t}(z,\s; w,\s'')},\qquad a^{*}(\overline{V}_{t,\ww}) := \sum_{\s}\int dz\, a^{*}_{z,\s} \overline{V_{N,t}(z,\s; w,\s'')}.
\]
Thus, the left hand side of (\ref{g5}) can be rewritten as:
\bea
&&\Big[ (i\e \partial_{t} R^{*}_{t})R_{t}, a(U_{t,\ww}) + a^{*}(\overline{V_{t,\ww}})\Big] \nn\\
&& \quad= \int d\xx d\yy\,C_{t}(\xx;\yy) \Big[ a^{*}_{\xx}a_{\yy}, a(U_{t,\ww}) + a^{*}(\overline{V_{t,\ww}}) \Big]  \nn\\
&& \quad\quad + \frac{1}{2}\int d\xx d\yy\, D_{t}(\xx;\yy) \Big[ a^{*}_{\xx}a^{*}_{\yy}, a(U_{t,\ww}) + a^{*}(\overline{V_{t,\ww}}) \Big] \nn\\
&& \quad\quad + \frac{1}{2}\int d\xx d\yy\, \overline{D_{t}(\xx;\yy)} \Big[ a_{\yy}a_{\xx}, a(U_{t,\ww}) + a^{*}(\overline{V_{t,\ww}}) \Big].\label{g7}
\eea
Let us compute the first term on the r.h.s. of (\ref{g7}). We get 
\bea
&&\int d\xx d\yy\, C_{t}(\xx;\yy)\Big[ a^{*}_{\xx}a_{\yy}, a(U_{t,\ww}) + a^{*}(\overline{V_{t,\ww}}) \Big] \nn\\
&& \quad = \int d\xx d\yy\, C_{t}(\xx;\yy) (-a_{\yy}\overline{{U}_{t}(\xx;\ww)} + a^{*}_{\xx}\overline{{V}_{t}(\yy;\ww)})\nn\\
&&\quad = -a(C_{t}U_{t,\ww}) + a^{*}(C_{t}\overline{V}_{t,\ww}).\label{g8}
\eea
Similarly, we find that
\bea
&&\frac{1}{2}\int d\xx d\yy\, D_{t}(\xx;\yy)\Big[ a^{*}_{\xx}a^{*}_{\yy}, a(U_{t,\ww}) + a^{*}(\overline{V}_{t,\ww}) \Big] \nn\\
&&\quad = \int d\xx d\yy\, D_{t}(\xx;\yy)(a^{*}_{\xx}\overline{U}_{t}(\yy;\ww) - a^{*}_{\yy}\overline{U}_{t}(\xx;\ww))\nn\\
&& \quad = a^{*}(D_{t}\overline{U}_{t,\ww})\label{g9}
\eea
and 
\be
\frac{1}{2}\int d\xx d\yy\, \overline{D_{t}(\xx;\yy)}\Big[ a_{\yy}a_{\xx}, a(U_{t,\ww}) + a^{*}(\overline{V}_{t,\ww}) \Big] = -a(D_{t}V_{t,\ww}).\label{g10}
\ee
Therefore, summing up (\ref{g8}) -- (\ref{g10}) and substituting  them into (\ref{g5}), we obtain
\be
a^{*}(C_{t}\overline{V}_{t,\ww}) + a^{*}(D_{t}\overline{U}_{t,\ww}) - a(C_{t}U_{t,\ww}) - a(D_{t}V_{t,\ww}) = a^{*}(i\e\partial_{t}\overline{V}_{t,\ww}) - a(i\e\partial_{t}U_{t,\ww}).
\ee
This translates into a system of equations for the kernels $C_{t}$ and $D_{t}$, namely:
\bea
C_{t}\overline{V}_{t} + D_{t}\overline{U}_{t} &=& i\e\partial_{t} \overline{V}_{t},\nn\\
C_{t}U_{t} + D_{t}V_{t} &=& i\e\partial_{t} U_{t},
\eea
which can be rewritten more compactly as 
\be
\begin{pmatrix} C_{t} & D_{t} \end{pmatrix} \begin{pmatrix} U_{t} & \overline{V}_{t} \\ V_{t} & \overline{U}_{t} \end{pmatrix} = \begin{pmatrix} i\e\partial_{t} U_{t} & i\e\partial_{t}\overline{V}_{t} \end{pmatrix}.
\ee
This gives 
\be
\begin{pmatrix} C_{t} & D_{t} \end{pmatrix} = \begin{pmatrix} i\e\partial_{t} U_{t} & i\e\partial_{t}\overline{V}_{t} \end{pmatrix}\begin{pmatrix} U_{t} & V^{*}_{t} \\ \overline{V}^{*}_{t} & \overline{U}_{t} \end{pmatrix}.\label{g11}
\ee
Let
\bea
H_{HF}(t) &:=& \begin{pmatrix} h_{HF}(t) & 0 \\ 0 & -\overline{h_{HF}(t)} \end{pmatrix},\nn\\
h_{HF}(t) &:=& -\e^{2}\D + \r_{t}* V - X_{t}.\nn
\eea
The time-dependent Hartree-Fock equation $i\e\partial_{t}\o_{t} = [h_{HF}(t),\o_{t}]$ implies
\bea
i\e \partial_{t} U_{t} &=& [H_{HF}(t), U_{t}].\nn\\
i\e \partial_{t} \overline{V}_{t} &=& H_{HF} (t)\overline{V_{t}} + \overline{V_{t}}\overline{H_{HF}(t)}.
\eea
Inserting these equations into (\ref{g11}) we have 
\bea
C_{t} &=& H_{HF}(t) - U_{t}H_{HF}(t)U_{t} + \overline{V}_{t}\overline{H_{HF}(t)}\overline{V_{t}}^{*}\nn\\
&=&\begin{pmatrix} h_{HF} (t) - u_{t}h_{HF} (t) u_{t} - v_{t}h_{HF} (t) v_{t} & 0 \\ 0 & -\overline{h_{HF} (t)}  + \overline{u_{t}} \overline{h_{HF} (t)}\overline{u_{t}} + \overline{v_{t}}\overline{h_{HF} (t)}\overline{v_{t}} \end{pmatrix}, \nn\\
D_{t} &=& -U_{t} H_{HF}(t) V^{*}_{t} + \overline{V}_{t} \overline{H_{HF}(t)}\overline{U_{t}}\nn\\
&=& \begin{pmatrix} 0 & u_{t}h_{HF} (t) v_{t} - v_{t}h_{HF} (t) u_{t} \\ \overline{u_{t}}\overline{h_{HF} (t)}\overline{v_{t}} - \overline{v_{t}}\overline{h_{HF} (t)}\overline{u_{t}} & 0 \end{pmatrix}.
\eea
Therefore, we conclude that
\begin{equation}\label{eq:idtRR}
\begin{split}
(i\e\partial_{t} R^{*}_{t})R_{t} = \; & d\Gamma_{l}\Big(h_{HF} (t) - u_{t}h_{HF} (t) u_{t} - v_{t}h_{HF} (t) v_{t}\Big) \\
& - d\Gamma_{r}\Big(\overline{h_{HF} (t)} - \overline{u_{t}}\overline{h_{HF} (t)}\overline{u_{t}} - \overline{v_{t}}\overline{h_{HF} (t)}\overline{v_{t}}\Big)\\
&+ \Big(\int dxdy\, a^{*}_{x,l}a^{*}_{y,r}\big(u_{t}h_{HF} (t) v_{t} - v_{t}h_{HF} (t) u_{t}\big)(x,y) + h.c.\Big). 
\end{split}
\end{equation}
\medskip
{\it Computation of $R^{*}_{t}\cL_{N}R_{t}$.} Let us first compute the contribution of the kinetic energy, that is
\be
R^{*}_{t}d\Gamma_{l}(-\e^{2}\D)R_{t} - R^{*}_{t}d\Gamma_{r}(-\e^{2}\D)R_{t}.
\ee
Using the definition of Bogoliubov transformation, after a straightforward computation we have 
\bea
R^{*}_{t}d\Gamma_{l}(-\e^{2}\D)R_{t} &=& \Tr\,(-\e^{2}\D\o) + d\Gamma_{l}(u_{t}(-\e^{2}\D)u_{t}) - d\Gamma_{r}(\overline{v_{t}}(-\e^{2}\D)\overline{v_{t}})\nn\\
&& - \Big( \int dxdy\, a^{*}_{x,l}a^{*}_{y,r}\big( u_{t}(-\e^{2}\D)v_{t} \big)(x,y) + h.c. \Big)
\eea
and, analogously,  
\bea
R^{*}_{t}d\Gamma_{r}(-\e^{2}\D)R_{t} &=& \Tr\,(-\e^{2}\D \overline\o) - d\Gamma_{l}(v_{t}(-\e^{2}\D)v_{t}) + d\Gamma_{r}(\overline{u_{t}}(-\e^{2}\D)\overline{u_{t}})\nn\\
&& + \Big( \int dxdy\, a^{*}_{x,r}a^{*}_{y,l}\big( \overline{u_{t}}(-\e^{2}\D)\overline{v_{t}} \big)(x,y) + h.c. \Big).
\eea
In other words, 
\bea
&&R^{*}_{t} \left[ d\Gamma_{l}(-\e^{2}\D) - d\Gamma_{r}(-\e^{2}\D) \right] R_{t} = \nn\\
&& \quad = d\Gamma_{l}\Big(u_{t}(-\e^{2}\D)u_{t} + v_{t}(-\e^{2}\D)v_{t}\Big)\nn\\
&& \quad\quad - d\Gamma_{r}\Big(\overline{u_{t}}(-\e^{2}\D)\overline{u_{t}} + \overline{v_{t}}(-\e^{2}\D)\overline{v_{t}}\Big)\nn\\
&& \quad\quad - \Big(\int dxdy\, a^{*}_{x,l}a^{*}_{y,r}\big( u_{t}(-\e^{2}\D)v_{t} - v_{t}(-\e^{2}\D)u_{t} \big)(x,y) + h.c.\Big).\label{eq:327}
\eea
Finally, we compute the contribution coming from the many-body interaction, namely 
\be
R^{*}_{t}\Big(\frac{1}{2N}\int dxdy\, V(x-y)(a^{*}_{x,l}a^{*}_{y,l}a_{y,l}a_{x,l} -a^{*}_{x,r}a^{*}_{y,r}a_{y,r}a_{x,r}) \Big)R_{t}.\label{g11b}
\ee
Consider first the ``left'' contribution to (\ref{g11b}). Putting all terms in normal order and recalling the shorthand notation $u_x (y) := u_{N,t} (y,x)$ and $v_x (y) := v_{N,t} (y,x)$ we find 
\begin{equation}
\begin{split} 
& R^{*}_{t} a^{*}_{x,l} a^{*}_{y,l}a_{y,l}a_{x,l}R_{t} \\
&= (a^{*}_{l}(u_{x}) - a_{r}(\overline{v_{x}}) )(a^{*}_{l}(u_{y}) - a_{r}(\overline{v_{y}}) )(a_{l}(u_{y}) - a^{*}_{r}(\overline{v_{y}}) )(a_{l}(u_{x}) - a^{*}_{r}(\overline{v_{x}}) )  \\
&=  \big[ a^{*}_{l}(u_{x})a^{*}_{l}(u_{y})a^{*}_{r}(\overline{v_{y}})a^{*}_{r}(\overline{v_{x}}) + 2a^{*}_{l}(u_{x})a^{*}_{l}(u_{y})a^{*}_{r}(\overline{v_{x}})a_{l}(u_{y}) - 2a^{*}_{l}(u_{x})a^{*}_{r}(\overline{v_{y}})a^{*}_{r}(\overline{v_{x}})a_{r}(\overline{v_{y}})\\
&\quad + \frac{1}{2}a^{*}_{l}(u_{x})a^{*}_{l}(u_{y})a_{l}(u_{y})a_{l}(u_{x}) + a^{*}_{l}(u_{x})a^{*}_{r}(\overline{v_{x}})a_{r}(\overline{v_{y}})a_{l}(u_{y}) - a^{*}_{l}(u_{x})a^{*}_{r}(\overline{v_{y}})a_{r}(\overline{v_{y}})a_{l}(u_{y})\\
&\quad + \frac{1}{2}a^{*}_{r}(\overline{v_{y}})a^{*}_{r}(\overline{v_{x}})a_{r}(\overline{v_{x}})a_{r}(\overline{v_{y}})\big] + h.c. + \text{Q} + \text{S}
\end{split} \end{equation}
where 
\begin{equation}\label{eq:quad}
\begin{split}
\text{Q} = \; & \big[-2a^{*}_{l}(u_{x})a^{*}_{r}(\overline{v_{x}})\o(y,y) + 2a^{*}_{l}(u_{x})a^{*}_{r}(\overline{v_{y}})\o(x,y)\big] + h.c.\\
& + 2a^{*}_{l}(u_{y})a_{l}(u_{y})\o(x,x) - 2a^{*}_{l}(u_{y})a_{l}(u_{x})\o(y,x) - 2a^{*}_{r}(\overline{v_{y}})a_{r}(\overline{v_{y}})\o(x,x)
\\ & + 2a^{*}_{r}(\overline{v_{y}})a_{r}(\overline{v_{x}})\o(x,y)
\end{split} 
\end{equation}
are the quadratic terms and 
\be
\text{S} = \o(x,x)\o(y,y) - \o(y,x)\o(x,y)
\ee
are the scalar terms (both the quadratic and the scalar terms arise from the normal ordering procedure). To compute the ``right'' contribution to (\ref{g11b}), one just has to perform the replacement $l\leftrightarrow r$, $u_{x}\to \overline{u_{x}}$, $\overline{v_{x}}\to -v_{x}$ in the above computation (see Eq.\ (\ref{a8b})). 

Combining (\ref{eq:idtRR}), (\ref{eq:327}), (\ref{eq:quad}) and its ``right'' counterpart, we conclude that 
\bea
\mathcal{G}_{N}(t) &=& (i\e\partial_{t}R^{*}_{t})R_{t} + R^{*}_{t}\cL_{N}R_{t}\nn\\
&=& d\Gamma_{l}(h_{HF}(t)) - d\Gamma_{r}(\overline{h_{HF}(t)}) + \mathcal{C_{N}} + \mathcal{Q}_{N}
\eea
where $\mathcal{C}_N$ and $\mathcal{Q}_N$ are  given by  (\ref{eq:CN}) and (\ref{eq:QN}). 
\end{proof}

\section{Bounds on the growth of fluctuations}\label{sec:growth}
\setcounter{equation}{0}

The goal of this section is to control the growth of the particle number  operator with respect to the fluctuation dynamics. 
\begin{thm}\label{thm:maingrowth}
Assume $V \in L^1 (\bR^3)$ satisfies (\ref{eq:Vass}). Let $\omega_N$ be  a sequence of trace class operators on $\h = L^2 (\bR^3)$ with $0 \leq \omega_N \leq 1$, $\tr(1-\Delta)\omega_{N}<\infty$, $\tr \omega_{N} = N$, and satisfying the semiclassical bounds (\ref{eq:oass}). Let $\xi\in \cF (\h \oplus \h)$ be such that $\xi = \chi(\cN\leq CN)\xi$, where $C\geq 0$ is independent of $N$. Let $\cU_N (t;s)$ be the fluctuation dynamics, as defined in (\ref{eq:flu-dyn}). Let $k \in \bN$. Then there exists a constant $c_1$ depending on $V$ and on $k$ and a constant $c_2$ depending only on $V$ such that
\be
\Big\langle \cU_{N}(t;0)\xi, (\cN+1)^{k} \cU_{N}(t;0)\xi \Big\rangle \leq \exp(c_{1}\exp(c_{2}|t|)) \Big\langle \xi, (\cN+1)^{10k} \xi \Big\rangle
\ee
for all $t\in\mathbb{R}$.
\end{thm}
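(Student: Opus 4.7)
The plan is a Gronwall argument on $f_k(t):=\langle\cU_N(t;0)\xi,(\cN+1)^k\cU_N(t;0)\xi\rangle$, executed in two stages that reflect the strategy announced in the introduction: one first controls the growth of $(\cN+1)^k$ along an auxiliary dynamics $\widetilde{\cU}_N$ obtained by dropping from $\mathcal{G}_N(t)$ those pieces of $\mathcal{Q}_N$ that change the particle number by $\pm 4$ (the pure-creation $a^{*}_{l}a^{*}_{l}a^{*}_{r}a^{*}_{r}$ and $a^{*}_{r}a^{*}_{r}a^{*}_{l}a^{*}_{l}$ terms in the first and fourth lines of (\ref{eq:QN}), together with their hermitian conjugates), and then compares $\widetilde{\cU}_N$ to $\cU_N$ by Duhamel. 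This splitting is natural because the quadratic terms $d\Gamma_l(h_{HF}(t))-d\Gamma_r(\overline{h_{HF}(t)})$ and $\mathcal{C}_N$ commute with $\cN=d\Gamma_l(1)+d\Gamma_r(1)$, so only $\mathcal{Q}_N$ contributes to $[(\cN+1)^k,\mathcal{G}_N(t)]$ in the Heisenberg derivative.

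For the first stage, differentiate $\widetilde f_k(t):=\langle\widetilde{\cU}_N(t;0)\xi,(\cN+1)^k\widetilde{\cU}_N(t;0)\xi\rangle$ and expand $V(x-y)=\int dp\,\widehat V(p)\,e^{ip\cdot x}e^{-ip\cdot y}$ inside every surviving quartic monomial. Since each monomial kept in the auxiliary generator contains at least one creation and one annihilation operator, its commutator with $(\cN+1)^k$ is of lower order; commuting the plane waves through the $u_{N,t}$ and $v_{N,t}$ factors produces, via the propagated semiclassical bounds (\ref{eq:oass}), a factor $\eps N^{1/2}(1+|p|)$ per commutator, which combined with the $1/N$ prefactor of the interaction, the $\eps^{-1}$ of the Heisenberg equation, and the integrability $\int(1+|p|)|\widehat V(p)|\,dp<\infty$ from (\ref{eq:Vass}), yields
\[
\bigl|\partial_t \widetilde f_k(t)\bigr|\le C_0\, e^{c|t|}\,\widetilde f_k(t)
\]
after a Cauchy--Schwarz step on $\cF(\h\oplus\h)$. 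The factor $e^{c|t|}$ encodes the (at most exponential) growth of the semiclassical constants in (\ref{eq:oass}) along the Hartree-Fock flow, which is the content of the Appendix. Gronwall then gives $\widetilde f_k(t)\le\exp(c_1\exp(c_2|t|))\,\widetilde f_k(0)$, the correct form but only with $(\cN+1)^k$ on the right.

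For the second stage, write the Duhamel identity
\[
\cU_N(t;0)=\widetilde{\cU}_N(t;0)-\frac{i}{\eps}\int_0^t \cU_N(t;s)\bigl(\mathcal{Q}_N(s)-\widetilde{\mathcal{Q}}_N(s)\bigr)\widetilde{\cU}_N(s;0)\,ds
\]
and insert it inside $f_k(t)$. The missing pure-creation pieces $\mathcal{Q}_N-\widetilde{\mathcal{Q}}_N$ are precisely the ones that cannot be dominated by $(\cN+1)^k$ with an $\eps$-small constant on their own, but their action on $\widetilde{\cU}_N(s;0)\xi$ can still be bounded in norm by using the Hilbert--Schmidt size of the kernels built from $u_{N,s}$ and $v_{N,s}$ (in turn controlled by $\tr\omega_N=N$ together with the semiclassical bounds (\ref{eq:oass})), at the cost of several additional powers of $\cN$; the initial cutoff $\xi=\chi(\cN\le CN)\xi$ is used here to keep the accompanying factors of $N$ under control. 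Iterating the Duhamel expansion a finite number of times (depending on $k$) closes the estimate on $f_k(t)$ in terms of $\langle\xi,(\cN+1)^{10k}\xi\rangle$, the exponent $10k$ being a convenient upper bound on the number of $\cN$'s accumulated along the iteration. The main difficulty, compared with the pure-state analysis of \cite{BPS}, is that now both $v_{N,t}=\sqrt{\omega_{N,t}}$ and $u_{N,t}=\sqrt{1-\omega_{N,t}}$ appear nontrivially in $\mathcal{Q}_N$, so the plane-wave commutator trick must be applied to both, and the full set of four bounds in (\ref{eq:oass}) is needed rather than only the two bounds on $\omega_N$ used in the pure-state case.
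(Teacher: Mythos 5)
Your decomposition of $\mathcal{G}_N(t)$ is exactly reversed relative to the one that works, and the reversal breaks both stages of the argument. You propose that the auxiliary generator $\widetilde{\mathcal{G}}_N(t)$ be obtained by \emph{dropping} the pure-creation pieces of $\mathcal{Q}_N$ (those that shift $\cN$ by $\pm 4$) and \emph{keeping} the three-creator/one-annihilator pieces. The paper does the opposite: $\widetilde{\mathcal{Q}}_N$ in (\ref{eq:qtilde}) is precisely the $a^{*}a^{*}a^{*}a^{*}$ terms together with their adjoints, while the $a^{*}a^{*}a^{*}a$ terms and their adjoints are the ones that go into the remainder $\mathcal{G}_N-\widetilde{\mathcal{G}}_N$.

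The reason this matters in the first stage is that the closed Gronwall inequality for $\widetilde f_k$ hinges on a cancellation that is specific to the pure-creation terms: the combination $u_t e^{ipx}v_t - v_t e^{ipx}u_t$ appearing in $[\cN,\widetilde{\mathcal{Q}}_N]$ can, via the identity $u_t e^{ipx}v_t = v_t e^{ipx}u_t + u_t[e^{ipx},v_t]+v_t[u_t,e^{ipx}]$, be rewritten entirely in terms of commutators, each of which is $\lesssim \eps N^{1/2}|p|$ in Hilbert--Schmidt norm by (\ref{eq:oass}); combined with $N^{-1}$ from the interaction, $\eps^{-1}$ from the Heisenberg derivative and a factor $N^{1/2}$ from the non-commutator kernel, this gives a constant uniform in $N$ and a genuine $\widetilde f_k$ on the right-hand side. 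The $a^{*}a^{*}a^{*}a$ monomials you keep enjoy no such cancellation and, contrary to your assertion, their commutator with $\cN$ is \emph{not} lower order --- it is proportional to the monomial itself (they shift $\cN$ by $+2$). Estimating them via Lemma~\ref{lem:estimates} yields a factor $N^{-1/2}\|(\cN+1)^{3/2}\cdot\|$; after the $\eps^{-1}$ prefactor one is left with $N^{-1/6}$ multiplying a moment of order strictly higher than $k$, so the Gronwall inequality does not close at fixed $k$ without an additional induction you do not describe.

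In the second stage, putting the pure-creation pieces into the Duhamel remainder makes the error \emph{order one in $N$}, not small. Indeed, the natural bound for $a^{*}_l(u_x)a^{*}_l(u_y)a^{*}_r(\overline{v_y})a^{*}_r(\overline{v_x})$ without the cancellation is
\[
\frac{1}{2N}\int dp\,|\widehat V(p)|\,\big\| u_t e^{ipx}v_t\big\|_{\mathrm{HS}}\,\big\| u_t e^{-ipx}v_t\big\|_{\mathrm{HS}}\,\|(\cN+1)\psi\|
\;\leq\; \frac{C}{2N}\,N^{1/2}\cdot N^{1/2}\,\|(\cN+1)\psi\| = O(1)\,\|(\cN+1)\psi\|,
\]
so the $\eps^{-1}$ from Duhamel gives a per-iteration contribution of size $N^{1/3}$, and the iterated expansion diverges as $N\to\infty$. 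With the paper's choice the Duhamel error is the $a^{*}a^{*}a^{*}a$ terms, bounded as in (\ref{mix08a}) by $CN^{-1/2}\|(\cN+1)^{3/2}\cdot\|$, so after the $\eps^{-1}$ one picks up $N^{-1/6}$, which is small.

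A secondary gap is the closure to the exponent $10k$. The paper obtains it from an explicit induction (\ref{eq:ind}) in a parameter $n\le 3k$, where each step of the Duhamel expansion reduces the residual $N$-power by $1/3$ at the cost of three extra powers of $\cN$, and the cutoff $\xi=\chi(\cN\le CN)\xi$ is used to trade residual factors of $N$ for powers of $(\cN+1)$. Saying ``iterate Duhamel a finite number of times'' does not by itself explain how the initial $N^k$ (coming from (\ref{eq:RNR}), since $R_t^*(\cN+1)R_t$ is of order $\cN+N$) gets removed; identifying that bookkeeping is the essential content of the induction.
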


The proof of this theorem relies on Propositions~\ref{thm:growth} and~\ref{prop:aux} that will be proven in the next two sections. In turn, these results rely on the estimates for operators on the extended Fock space $\cF (\h \oplus \h)$ contained in the following lemma (which is the analog of Lemma 3.1 in \cite{BPS}).

\begin{lem}\label{lem:estimates}
For every bounded operator $O$ on $\h = L^2 (\bR^3)$, we have
\be\label{est1}
\| d\G_{\s}(O)\psi \| \leq \| O \|\|\cN\psi\|
\ee
for every $\psi\in\cF(\h\oplus \h)$ and every $\s \in \{l, r\}$. Here $\cN = d\G_l (1) + d\G_r (1) = d\G (1\oplus 1)$ denotes the particle number operator on $\cF (\h \oplus \h)$. 

If $O$ is a Hilbert-Schmidt operator, we also have 
\be\label{est2}
\begin{split}
\| d\G_{\s}(O)\psi \| &\leq \| O \|_{HS}\| \cN^{1/2}\psi \|, \\
\Big\| \int dxdx'\, O(x;x')a_{x,\s}a_{x',\s'}\psi \Big\| &\leq \| O \|_{HS}\| \cN^{1/2}\psi \|, \\
\Big\| \int dxdx'\, O(x;x')a^{*}_{x,\s}a^{*}_{x',\s'}\psi \Big\| &\leq 2\| O \|_{HS}\| (\cN + 1)^{1/2}\psi \|
\end{split}
\ee
for all $\psi\in \cF(\h\oplus \h)$.
\end{lem}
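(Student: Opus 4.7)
The plan is to prove each bound by decomposing $O$ into rank-one pieces and combining the basic fermionic norm identity $\|a^{\sharp}(f)\| \leq \|f\|$ with the canonical anticommutation relations.

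For bound (\ref{est1}), observe that $d\G_{\s}(O)$ commutes with $\cN$, so it suffices to estimate it sector by sector. On the sector $\cF^{(n_l,n_r)}\subset \cF(\h\oplus\h)$ of fixed particle numbers, $d\G_{\s}(O)$ acts as a sum of $n_{\s} \leq n_l + n_r$ copies of $O$ applied to individual particles, which gives $\|d\G_{\s}(O)\psi^{(n_l,n_r)}\| \leq \|O\|\,\|\cN\psi^{(n_l,n_r)}\|$; summing the squares over sectors yields (\ref{est1}).

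For the first two bounds in (\ref{est2}), expand $O$ via its singular value decomposition $O = \sum_k \lambda_k |f_k\rangle\langle g_k|$ with $\{f_k\}$, $\{g_k\}$ orthonormal and $\sum_k \lambda_k^2 = \|O\|_{HS}^2$. This yields
\be
d\G_{\s}(O)\psi = \sum_k \lambda_k a^{*}_{\s}(f_k)\,a_{\s}(g_k)\psi, \qquad \int dx\, dx'\, O(x;x')\, a_{x,\s} a_{x',\s'}\psi = \sum_k \lambda_k a_{\s}(\overline{f_k})\,a_{\s'}(g_k)\psi.
\ee
In both cases I would apply the triangle inequality, bound the outer operator by $\|a^{\sharp}(f_k)\| \leq 1$, and then use Cauchy--Schwarz on the sum. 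The key identity is $\sum_k \|a_{\s'}(g_k)\psi\|^2 = \langle\psi, d\G_{\s'}(P)\psi\rangle$ with $P = \sum_k |g_k\rangle\langle g_k| \leq 1$; combined with (\ref{est1}) this is bounded by $\langle\psi, \cN\psi\rangle$, delivering the first two estimates of (\ref{est2}).

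The main obstacle is the third bound of (\ref{est2}), because the naive Cauchy--Schwarz step would force $\|a^{*}_{\s'}(\overline{g_k})\psi\| \leq \|\psi\|$ and only produce a trace-norm bound on $O$. Instead, I set $B = \int dx\, dx'\, O(x;x')\, a^{*}_{x,\s} a^{*}_{x',\s'}$ and compute $\|B\psi\|^2 = \langle\psi, B^*B\psi\rangle$ by normal-ordering the quartic product $a_{y,\s'} a_{x,\s} a^{*}_{\tilde x,\s} a^{*}_{\tilde y,\s'}$ through repeated use of CAR (treating $\s = \s'$ and $\s \neq \s'$ separately, since mixed-flavor operators anticommute without any delta contribution). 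The outcome decomposes into three pieces: a scalar term of size $\|O\|_{HS}^2 \|\psi\|^2$; quadratic pieces of the form $d\G_{\tau}(K)$ whose kernels satisfy $\|K\| \leq \|O\|_{HS}^2$ and which are controlled by (\ref{est1}) through $C\|O\|_{HS}^2\langle\psi, \cN\psi\rangle$; and a fully normal-ordered quartic piece that equals $\|B^*\psi\|^2$. Since $B^{*}\psi = \int \overline{O(x;x')}\, a_{x',\s'} a_{x,\s}\psi$ is exactly of the form controlled by the second bound of (\ref{est2}), we have $\|B^*\psi\|^2 \leq \|O\|_{HS}^2\langle\psi, \cN\psi\rangle$. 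Assembling these estimates yields $\|B\psi\|^2 \leq C\|O\|_{HS}^2\langle\psi, (\cN+1)\psi\rangle$ with $C \leq 4$, which is the claimed bound.
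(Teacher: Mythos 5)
Your proof is correct in outline, but it takes a genuinely different route from the paper's. For the first two bounds in (\ref{est2}), the paper factors the double integral as $\int dx'\,a^{\sharp}_{\s}\bigl(O^{\sharp}(\,\cdot\,;x')\bigr)\,a_{x',\s'}\psi$, bounds the inner factor using $\|a^{\sharp}(f)\|\le\|f\|$, and applies Cauchy--Schwarz in $x'$; you instead expand in the singular value decomposition of $O$ and apply Cauchy--Schwarz over the singular values. Both yield the same constants. One correction there: the step $\sum_k\|a_{\s'}(g_k)\psi\|^2=\langle\psi,d\G_{\s'}(P)\psi\rangle\le\langle\psi,\cN\psi\rangle$ with $0\le P\le 1$ uses monotonicity of the second quantization map, not (\ref{est1}), which would only give the weaker bound $\|\cN\psi\|\,\|\psi\|$. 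For the third bound the difference is more substantial, and the paper's route is the more economical one: it avoids Wick's theorem entirely by reading off from the second bound the operator-norm estimate $\bigl\|\int O(x;x')\,a_{x,\s}a_{x',\s'}\,(\cN+1)^{-1/2}\bigr\|\le\|O\|_{HS}$, taking the adjoint, and commuting $(\cN+1)^{-1/2}$ past the two creation operators to produce $(\cN+3)^{-1/2}$, finishing with $\|(\cN+3)^{1/2}\psi\|\le2\|(\cN+1)^{1/2}\psi\|$. Your normal-ordering of $B^*B$ does work, but needs two small fixes beyond your sketch: for $\s=\s'$ one should first antisymmetrize $O$ (which only decreases $\|O\|_{HS}$) so that the scalar contractions are at most $2\|O\|_{HS}^2\|\psi\|^2$; and the fully normal-ordered piece is $\bigl\|\int O(x;x')\,a_{x',\s'}a_{x,\s}\psi\bigr\|^2$, not literally $\|B^*\psi\|^2$ --- the two differ by $O\mapsto\overline{O}$ --- though it is still controlled by the second bound since $\|O^T\|_{HS}=\|O\|_{HS}$. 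A pleasant feature you did not exploit: after antisymmetrization all the quadratic pieces are of the form $-d\G_\tau(\text{positive})\le 0$ and can simply be dropped, improving your estimated constant from $4$ to $2$. Overall the paper's adjoint-and-commutation trick is the cleaner argument; your approach is more explicit and self-contained but requires more bookkeeping.
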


\begin{proof}
To prove (\ref{est1}) notice that, for bounded operators $A,B$ on $\h$,
\be\label{est1a}
\begin{split}
\| d\G(A\oplus B)\psi \|^{2} &= \sum_{n\geq 0}\sum_{i,j\leq n}\langle \psi^{(n)}, (A\oplus B)^{(i)*}(A\oplus B)^{(j)}\psi^{(n)} \rangle \\
&\leq  \| A\oplus B \|_{\h \oplus \h}^{2}\sum_{n\geq 0}n^{2}\|\psi^{(n)}\|^{2} \\
&\leq \| A\oplus B \|_{\h \oplus \h}^{2}\|\cN\psi\|^{2}.
\end{split}
\ee
The proof of (\ref{est1}) follows from (\ref{est1a}) and the inequality   $\| A\oplus B \|_{\h \oplus \h}\leq \|A\|_{\h} + \|B\|_{\h}$. Let us now prove the bounds (\ref{est2}). To prove the first two, notice that, denoting by $a^{\sharp}$ either $a$ or $a^{*}$ and by $O^{\sharp}$ either $\overline O$ or $O$: 
\be
\begin{split}
\Big\| \int dxdx'\, O(x;x')a^{\sharp}_{x,\s}a_{x',\s'}\psi \Big\| &= \Big\| \int dx' a^{\sharp}_{\s}(O^{\sharp}(\cdot;x'))a_{x',\s'}\psi \Big\| \\
&\leq \int dx'\, \| O(\cdot;x') \|_{2}\| a_{x',\s'}\psi \| \\
&\leq \big(\int dx'\, \| O(\cdot;x') \|_{2}^{2}\big)^{1/2}\big(\int dx'\, \| a_{x',\s'}\psi \|^{2}\big)^{1/2} \\
&\leq \| O \|_{HS}\| \cN^{1/2}_\s \psi \| \leq \| O \|_{HS}\| \cN^{1/2} \psi \|
\end{split}
\ee
where in the second line we used the boundedness of the fermionic operators, Eqs. (\ref{eq:bdaa*}). This proves in particular that
\[ \left\| \int dx dx' O(x;x') a_{x,\s} a_{x',\s'} (\cN+1)^{-1/2} \right\| \leq \| O \|_{HS}. \]
Since the adjoint of a bounded operator is again bounded and has the same operator norm, we find
\[ 
\begin{split}
\left\| \int dx dx' O (x;x') (\cN+1)^{-1/2} a^*_{x,\s} a^*_{x',\s'} \right\| &= \left\| \int dx dx' O (x;x')  a^*_{x,\s} a^*_{x',\s'} (\cN+3)^{-1/2}\right\| \\
& \leq \| O \|_{HS}.
\end{split}
\]
This concludes the proof, since $\| (\cN+3)^{1/2} \psi\| \leq 2 \| (\cN+1)^{1/2} \psi \|$ for all $\psi \in \cF (\h \oplus \h)$.
\end{proof}

\subsection{The auxiliary dynamics}\label{sec:auxiliary}

We start by studying the growth of the number of fluctuations with respect to an {\it auxiliary dynamics} that we shall denote by $\widetilde{\cU}_{N}(t;s)$. Later we will prove that $\wt{\cU}_N (t;s)$ remains close (in norm) to the true fluctuation dynamics $\cU_{N}(t;s)$.

\medskip

To define the auxiliary dynamics, we proceed as follows. First, let
\begin{equation}\label{f1b}
\widetilde{\mathcal{G}}_{N}(t) = 
 d\Gamma_{l}(h_{HF}(t)) - d\Gamma_{r}(\overline{h_{HF}(t)}) + \mathcal{C}_{N} + \widetilde{\mathcal{Q}}_{N} 
\end{equation}
where 
\begin{equation}\label{eq:qtilde}
\begin{split}
\widetilde{\mathcal{Q}}_{N} = & \; \frac{1}{2N}\int dxdy\, V(x-y)\\& \hspace{1cm} \times \Big(a^{*}_{l}(u_{x})a^{*}_{l}(u_{y})a^{*}_{r}(\overline{v_{y}})a^{*}_{r}(\overline{v_{x}}) - a^{*}_{r}(\overline{u_{x}})a^{*}_{r}(\overline{u_{y}})a^{*}_{l}(v_{y})a^{*}_{l}(v_{x}) + h.c.\Big).
\end{split} 
\end{equation}
In (\ref{f1b}), the operator $\mathcal{C}_N$ is defined as in (\ref{eq:CN}) and contains quartic terms commuting with the particle number operator $\cN = d\G_l (1) + d\G_r (1)$. In other words, $\widetilde{\mathcal{G}}_{N}(t)$ is obtained 
from $\mathcal{G}_{N}(t)$ by dropping the second, third, fifth and sixth terms in (\ref{eq:QN}) (together with their Hermitian conjugates). We recall the shorthand notations $u_{x}(y) = u_{N,t}(y,x)$, $v_{x}(y) = v_{N,t}(y,x)$.

We are interested in the dynamics generated by $\widetilde{\mathcal{G}}_{N}(t)$. For technical reasons, it is more convenient to define it in the interaction picture. Thus, we introduce
\be\label{eq:ghat}
\widehat{\mathcal{G}}_{N}(t) = -d\Gamma_{l}(-\e^{2}\Delta) + d\Gamma_{r}(-\e^{2}\Delta) + \cU_{N}^{(0)*}(t)\widetilde{\mathcal{G}}_{N}(t)\cU_{N}^{(0)}(t)\;,
\ee
where $\cU_{N}^{(0)}(t)$ implements the free dynamics, {\it i.e.} it is given by the solution of
\be
i\e\frac{d}{dt}\cU_{N}^{(0)}(t) = \Big(d\Gamma_{l}(-\e^{2}\Delta) - d\Gamma_{r}(-\e^{2}\Delta)\Big)\cU^{(0)}_{N}(t)\;,\qquad \cU_{N}^{(0)}(0) = 1\;.
\ee
We claim that $\widehat{\mathcal{G}}_{N}(t)$ generates a unitary dynamics $\widehat{\cU}_{N}(t;s)$ on the Fock space, satisfying the differential equation:
\be\label{eq:Uhat}
i\e \frac{d}{dt} \widehat{\cU}_{N}(t;s)\psi = \widehat{\mathcal{G}}_{N}(t)\widehat{\cU}_{N}(t;s)\psi\;,\qquad \widehat{\cU}_{N}(s;s)\psi = \psi\;,
\ee
for $\psi \in D(\cN^{4})$, the domain of $\cN^{4}$. We prove these facts in Appendix \ref{app:derivative}. Then, we define the auxiliary dynamics as 
\be\label{f1}
\widetilde{\cU}_{N}(t;s) := \cU_{N}^{(0)}(t)\widehat{\cU}_{N}(t;s)\cU_{N}^{(0)*}(s).
\ee
Note that, formally, $\widetilde{\cU}_{N}(t;s)$ is the unitary dynamics generated by $\widetilde{\mathcal{G}}_{N}(t)$. An important role in our analysis is played be the expectation of $\cN^{k}$ on states $\widetilde{\cU}_{N}(t;s)\psi$, for $\psi$ in the domain of a suitable power of $\cN$. To estimate these quantities, we shall set up a Gronwall-type strategy. The starting point is:
\[
\begin{split}\label{eq:Utilde}
i\e\frac{d}{dt}\Big\langle \widetilde{\cU}_{N}(t;s)\psi, (\cN+1)^{k} &\widetilde{\cU}_{N}(t;s)\psi \Big\rangle \\
&= i\e\frac{d}{dt}\Big\langle \widehat{\cU}_{N}(t;s)\cU^{(0)*}_{N}(s)\psi, (\cN+1)^{k} \widehat{\cU}_{N}(t;s)\cU^{(0)*}_{N}(s)\psi \Big\rangle \\
&=  \Big\langle \widehat{\cU}_{N}(t;s)\cU^{(0)*}_{N}(s)\psi, [(\cN+1)^{k}, \widehat{\mathcal{G}}_{N}(t)]\, \widehat{\cU}_{N}(t;s)\cU^{(0)*}_{N}(s)\psi \Big\rangle \\
&= \Big\langle \widetilde{\cU}_{N}(t;s)\psi, [(\cN+1)^{k}, \widetilde{\mathcal{Q}}_{N}]\, \widetilde{\cU}_{N}(t;s)\psi \Big\rangle
\end{split}
\]
where we used Eq. (\ref{eq:Uhat}), and the fact that $d\Gamma_{\sigma}(-\e^{2}\Delta)$, $\s=l,r$, and $\mathcal{C}_{N}$ commute with $\cN^{k}$. With the following proposition, we prove a bound for the growth of the fluctuations with respect to $\widetilde{\cU}_{N}(t;s)$.
\begin{prp}\label{thm:growth}
Let $V \in L^1 (\bR^3)$ and suppose that  (\ref{eq:Vass}) holds. Let $\omega_N$ be a sequence of trace-class operators on $\h = L^{2}(\mathbb{R}^{3})$ with $0 \leq \omega_N \leq 1$, $\tr (1-\Delta)\o_{N}<\infty$, $\tr \omega_N = N$ and satisfying (\ref{eq:oass}). Let $\widetilde{\cU}_{N}(t;s)$ be the auxiliary dynamics defined in (\ref{f1}) and $k\in\mathbb{N}$. Then there exist a constant $c_{1}$, depending only on $V$, and a constant $c_{2}$, depending on $V$ and $k$, such that
\be
\Big\langle \widetilde{\cU}_{N}(t;s)\xi, (\cN+1)^{k} \widetilde{\cU}_{N}(t;s)\xi \Big\rangle \leq \exp(c_{2}\exp(c_{1}|t-s|)) \Big\langle \xi, (\cN+1)^{k} \xi \Big\rangle.
\ee
\end{prp}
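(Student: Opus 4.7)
The plan is to apply Grönwall's inequality to $M_k(t) := \langle \widetilde\cU_N(t;s)\xi,(\cN+1)^k\widetilde\cU_N(t;s)\xi\rangle$. Passing to the interaction picture via (\ref{eq:ghat})--(\ref{f1}) is key: the one-body operator $d\Gamma_l(h_{HF}(t))-d\Gamma_r(\overline{h_{HF}(t)})$ and the number-preserving quartic operator $\cC_N$ all commute with $\cN_l$ and $\cN_r$ separately, hence with $(\cN+1)^k$. Differentiating and using (\ref{eq:Uhat}) gives
\[
i\e\,\frac{d}{dt}M_k(t)=\bigl\langle \widetilde\cU_N(t;s)\xi,\bigl[(\cN+1)^k,\widetilde\cQ_N\bigr]\widetilde\cU_N(t;s)\xi\bigr\rangle,
\]
so the task reduces to controlling this commutator pointwise by $C(t)\,M_k(t)$ with $\int C$ at most a double exponential in $|t-s|$.

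The commutator has a clean algebraic form. Split $\widetilde\cQ_N=\widetilde\cQ_N^{(+)}+\widetilde\cQ_N^{(-)}$ where $\widetilde\cQ_N^{(+)}$ collects the purely-creation terms in (\ref{eq:qtilde}) (raising total particle number by $4$) and $\widetilde\cQ_N^{(-)}=(\widetilde\cQ_N^{(+)})^*$. From the shift identity $(\cN+1)\widetilde\cQ_N^{(+)}=\widetilde\cQ_N^{(+)}(\cN+5)$ I get
\[
\bigl[(\cN+1)^k,\widetilde\cQ_N^{(+)}\bigr]=\widetilde\cQ_N^{(+)}\,R_k(\cN),\qquad R_k(\cN):=(\cN+5)^k-(\cN+1)^k,
\]
with $0\leq R_k(\cN)\leq c_k(\cN+1)^{k-1}$, and an analogous identity with opposite sign for $\widetilde\cQ_N^{(-)}$. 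To end up with the $k$-th moment (and not a higher one) on the right-hand side I symmetrize: writing $\widetilde\cQ_N^{(+)}R_k(\cN)=\widetilde R_k(\cN)^{1/2}\,\widetilde\cQ_N^{(+)}\,R_k(\cN)^{1/2}$ with $\widetilde R_k(\lambda):=R_k(\lambda-4)$ of the same degree, I bring one half of the polynomial to act on the left factor in $\langle\psi_t,\widetilde\cQ_N^{(+)}R_k(\cN)\psi_t\rangle$.

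The analytic heart of the argument is a bilinear estimate of the form
\[
|\langle\phi_1,\widetilde\cQ_N^{(\pm)}\phi_2\rangle|\leq\mathfrak C(t)\,\|(\cN+1)^{1/2}\phi_1\|\,\|(\cN+1)^{1/2}\phi_2\|.
\]
To obtain it I Fourier-decompose $V(x-y)=\int dp\,\widehat V(p)\,e^{ipx}e^{-ipy}$ and regroup the four creation operators by position (paying anticommutation signs); each term in $\widetilde\cQ_N^{(+)}$ then takes the form $\tfrac{1}{2N}\int dp\,\widehat V(p)\,A(p)B(p)$ where, schematically, $A(p)=\int dx\,e^{ipx}\,a^*_r(\overline{v_x})a^*_l(u_x)$ has two-particle integral kernel $u_{N,t}\,e^{ip\cdot}\,v_{N,t}$ (the analogous $v_{N,t}\,e^{ip\cdot}\,u_{N,t}$ appears in the second term of (\ref{eq:qtilde})). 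Lemma~\ref{lem:estimates}, applied to this purely creation quadratic operator, bounds $\|A(p)\phi\|\leq 2\|u_{N,t}\,e^{ip\cdot}\,v_{N,t}\|_{HS}\|(\cN+1)^{1/2}\phi\|$, and the key Hilbert--Schmidt estimate is
\[
\|u_{N,t}\,e^{ip\cdot}\,v_{N,t}\|_{HS}\leq \|u_{N,t}v_{N,t}\|_{HS}+\|[u_{N,t},e^{ip\cdot}]\|_{HS}\,\|v_{N,t}\|_{op},
\]
together with the corresponding identity with the commutator on $v_{N,t}$. The first summand is bounded by $N^{1/2}$ using $0\leq\omega_{N,t}\leq1$ and $\tr\omega_{N,t}=N$, while the commutator estimates $\|[u_{N,t},e^{ip\cdot}]\|_{HS},\,\|[v_{N,t},e^{ip\cdot}]\|_{HS}\leq C(t)\,\e\,N^{1/2}(1+|p|)$ follow from (\ref{eq:oass}) at $t=0$ and from the propagation of the semiclassical structure along the Hartree--Fock flow (Appendix~\ref{secpropagation}); the $p$-integration is finite thanks to (\ref{eq:Vass}).

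Combining the bilinear bound with the symmetrized commutator identity, and using $R_k(\cN)(\cN+1),\widetilde R_k(\cN)(\cN+1)\leq c_k(\cN+1)^k$, yields $|\langle\psi_t,[(\cN+1)^k,\widetilde\cQ_N]\psi_t\rangle|\leq c_k\,\mathfrak C(t)\,M_k(t)$, and Grönwall's inequality delivers the stated bound once $\mathfrak C(t)$ is estimated. The main obstacle is controlling $\mathfrak C(t)$: one must propagate the semiclassical commutator bounds on $u_{N,t},v_{N,t}$ uniformly in $N$ along the nonlinear Hartree--Fock dynamics, which is exactly what Appendix~\ref{secpropagation} does; as in~\cite{BPS}, the resulting constants grow at worst double-exponentially in $|t-s|$, producing the $\exp(c_2\exp(c_1|t-s|))$ factor in the final estimate.
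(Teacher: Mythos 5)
Your overall strategy — differentiate $M_k(t)$, reduce to the commutator with $\widetilde{\cQ}_N$, symmetrize the polynomial in $\cN$, and reduce to a bilinear estimate on the quartic creation part — mirrors the paper's Gronwall scheme, and the shift identity $(\cN+1)\widetilde{\cQ}_N^{(+)}=\widetilde{\cQ}_N^{(+)}(\cN+5)$ together with the splitting of $R_k$ is a clean way to stay at the $k$-th moment. However, there is a genuine gap in the bilinear estimate, and it is exactly the step on which the whole proposition turns.

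You bound $\widetilde{\cQ}_N^{(+)}$ by estimating each of the two two-particle creation factors $A(p)$ and $B(p)$ separately, obtaining $\|A(p)\phi\|\leq C\|u_{N,t}e^{ipx}v_{N,t}\|_{HS}\|(\cN+1)^{1/2}\phi\|$ and then $\|u_{N,t}e^{ipx}v_{N,t}\|_{HS}\leq\|u_{N,t}v_{N,t}\|_{HS}+\|[e^{ipx},v_{N,t}]\|_{HS}\leq N^{1/2}+C\eps N^{1/2}(1+|p|)$. The term $\|u_{N,t}v_{N,t}\|_{HS}\leq N^{1/2}$ is the dominant one and carries no factor of $\eps$. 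With the prefactor $1/(2N)$ this yields $\mathfrak{C}(t)=O(1)$, so your differential inequality reads $\eps\,|M_k'(t)|\leq C\,M_k(t)$, i.e. $|M_k'(t)|\leq C\eps^{-1}M_k(t)$. Since $\eps=N^{-1/3}$, Gronwall then gives a bound of the form $\exp\bigl(CN^{1/3}\exp(c|t-s|)\bigr)$, which is \emph{not} uniform in $N$ and does not prove the proposition.

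What is missing is the cancellation between the two quartic monomials of $\widetilde{\cQ}_N$, i.e. between the $a^*_l a^*_l a^*_r a^*_r$ block with kernels $u_t e^{\pm ipx}v_t$ and the $a^*_r a^*_r a^*_l a^*_l$ block with kernels $v_t e^{\pm ipx}u_t$. In the paper's argument the commutator $[\cN,\widetilde{\cQ}_N]$ is computed \emph{before} estimating, keeping both monomials together; after relabelling one sees that the bracket contains the \emph{difference} of products $(u_te^{ipx}v_t)(u_te^{-ipx}v_t)-(v_te^{ipx}u_t)(v_te^{-ipx}u_t)$. The identity
\[
u_t e^{ipx}v_t \;=\; v_t e^{ipx}u_t \;+\; u_t[e^{ipx},v_t]\;+\;v_t[u_t,e^{ipx}]
\]
(eq. (\ref{f5})) is then used to rewrite this difference so that in every resulting summand \emph{one of the two two-particle kernels is a commutator}; that factor is bounded by $C\eps N^{1/2}(1+|p|)$ while the other retains the crude $N^{1/2}$. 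This yields $\mathfrak{C}(t)=O(\eps)$, the $\eps$ on the left is exactly compensated, and Gronwall gives the stated double-exponential bound uniformly in $N$. Without this cancellation the argument gives a factor $\eps^{-1}=N^{1/3}$ in the exponent, so the estimate you propose does not close. To repair the proof, you should not split $\widetilde{\cQ}_N^{(+)}$ into its two monomials at the level of norm estimates; instead compute $[\cN,\widetilde{\cQ}_N]$ explicitly, use the antisymmetry to bring both monomials to the same creation-operator structure, and apply the commutation identity above before invoking Lemma~\ref{lem:estimates}.
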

As in \cite{BPS, BPS2}, the proof of the theorem makes use of the propagation of the commutator estimates (\ref{eq:oass})   along the solution of the Hartree-Fock equation. We defer the proof of the following Lemma to Appendix \ref{secpropagation}.
\begin{lem}\label{lem:prop}
Let $V \in L^1 (\bR^3)$, and suppose that  (\ref{eq:Vass}) holds. Let $\omega_N$ be a sequence of trace-class operators on $\h$ with $0 \leq \omega_N \leq 1$, $\tr (1-\Delta)\o_{N}<\infty$, $\tr \omega_N = N$ and such that
\be
\begin{split}
\|[v_N, x]\|_{HS} &\leq CN^{1/2}\e, \qquad \|[v_N,\e\nabla]\|_{HS} \leq CN^{1/2}\e, \\
\|[u_{N}, x]\|_{HS} &\leq CN^{1/2}\e, \qquad  \|[u_N,\e\nabla]\|_{HS} \leq CN^{1/2}\e,\label{f2}
\end{split}
\ee
where $v_{N} = \sqrt{\o_{N}}$, $u_{N} = \sqrt{1 - \o_{N}}$. Let $v_{N,t} = \sqrt{\omega_{N,t}}$ and $u_{N,t} = \sqrt{1-\omega_{N,t}}$, where $\omega_{N,t}$ denotes the solution of the Hartree-Fock equation with initial data $\omega_{N,0} = \omega_N$. 

Then, there exist constants $K,c>0$ only depending on the potential $V$ such that 
\be
\begin{split}
\|[v_{N,t}, x]\|_{HS} &\leq K\exp(c|t|)N^{1/2}\e, \qquad \|[v_{N,t},\e\nabla]\|_{HS} \leq K\exp(c|t|)N^{1/2}\e, \\
\|[u_{N,t}, x]\|_{HS} &\leq K\exp(c|t|)N^{1/2}\e, \qquad  \|[u_{N,t},\e\nabla]\|_{HS} \leq K\exp(c|t|)N^{1/2}\e,\label{f2b}
\end{split}
\ee
for all $t \in \bR$. 
\end{lem}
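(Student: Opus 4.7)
The Hartree--Fock flow is implemented on $\h = L^{2}(\bR^{3})$ by a two-parameter unitary propagator $U(t)$ solving $i\e\pt_t U(t) = h_{HF}(t) U(t)$, $U(0)=1$; equivalently, $\o_{N,t} = U(t)\o_N U(t)^{*}$, and by the Borel functional calculus $v_{N,t} = U(t) v_N U(t)^{*}$, $u_{N,t} = U(t) u_N U(t)^{*}$. In particular both $v_{N,t}$ and $u_{N,t}$ satisfy the Heisenberg-type equation $i\e\pt_t A_t = [h_{HF}(t), A_t]$. For any operator $B$,
\[
[v_{N,t}, B] \;=\; U(t)\,[v_N,\, U(t)^{*} B\, U(t)]\,U(t)^{*},
\]
so by unitary invariance of the Hilbert--Schmidt norm and Duhamel,
\[
\|[v_{N,t}, B]\|_{HS} \;\leq\; \|[v_N, B]\|_{HS} \;+\; \frac{1}{\e}\int_0^{t} \|[v_{N,s},\, [B, h_{HF}(s)]]\|_{HS}\, ds,
\]
and an identical bound holds with $v$ replaced by $u$.

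I would apply this with $B=x$ and $B=\e\nabla$. Using $[x,-\e^{2}\D] = 2\e(\e\nabla)$, $[x, V*\r_s]=0$, $[\e\nabla,-\e^{2}\D]=0$ and $[\e\nabla, V*\r_s] = \e\nabla(V*\r_s)$ as multiplication operator, the kinetic part of $[B, h_{HF}(s)]$ directly couples the $x$-commutator to the $\e\nabla$-commutator. The Hartree multiplication operator $\e\nabla(V*\r_s)$ is controlled via a Fourier expansion together with the elementary estimate $\|[v_{N,s}, e^{ip\cdot x}]\|_{HS} \leq |p|\,\|[v_{N,s}, x]\|_{HS}$, which follows from the kernel bound $|e^{ip\cdot x}-e^{ip\cdot y}| \leq |p||x-y|$; combined with $\widehat{\nabla(V*\r_s)}(p) = ip\widehat{V}(p)\widehat{\r_s}(p)$ and the trivial $\|\widehat{\r_s}\|_\infty \leq \|\r_s\|_{L^{1}} = 1$, this reduces the Hartree term to $\int |p|^{2} |\widehat{V}(p)|\,dp$ times $\|[v_{N,s},x]\|_{HS}$, finite by the hypothesis (\ref{eq:Vass}).

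The remaining and main technical task is to bound $[v_{N,s}, [x, X_s]]$ and $[v_{N,s}, [\e\nabla, X_s]]$ (and their $u$-counterparts) by the four quantities appearing on the left-hand side of (\ref{f2b}). For this I would use the Fourier representation
\[
X_s \;=\; \frac{1}{N}\int \frac{dp}{(2\pi)^{3}}\,\widehat{V}(p)\; e^{ip\cdot x}\,\o_{N,s}\, e^{-ip\cdot x},
\]
with $e^{\pm ip\cdot x}$ the corresponding multiplication operators. A direct computation (in which $[x, e^{\pm ip\cdot x}]=0$ and the $\pm ip\,e^{\pm ip\cdot x}$ contributions from $[\nabla, e^{\pm ip\cdot x}]$ cancel) yields $[x, X_s] = N^{-1}\!\int \widehat{V}(p)\, e^{ip\cdot x}[x, \o_{N,s}] e^{-ip\cdot x}\, dp/(2\pi)^{3}$ and an analogous identity for $[\e\nabla, X_s]$. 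Expanding $[v_{N,s}, \cdot]$ by Leibniz produces three kinds of terms. For the terms where the outer commutator falls on the exponential factors, I use $\|[v_{N,s}, e^{\pm ip\cdot x}]\|\leq 2$ in operator norm together with $\|[x, \o_{N,s}]\|_{HS}\leq 2\|[v_{N,s}, x]\|_{HS}$ (from $\o_{N,s} = v_{N,s}^{2}$ and Leibniz). For the term where it falls on $[x, \o_{N,s}]$ or $[\e\nabla, \o_{N,s}]$, I rewrite $\o_{N,s} = 1 - u_{N,s}^{2}$ (which commutes with $v_{N,s}$); the double commutator collapses to $u_{N,s}[v_{N,s}, [\e\nabla, u_{N,s}]] + \mathrm{h.c.}$, and the inner piece is bounded by $2\|[u_{N,s}, \e\nabla]\|_{HS}$. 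The integrand in $p$ carries at worst a factor $(1+|p|)^{2}$, absorbed by (\ref{eq:Vass}), and the $N^{-1}$ prefactor of $X_s$ makes the exchange contribution subleading (of order $N^{-1/2}$) relative to kinetic and Hartree couplings.

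Collecting these estimates produces a closed system of four Gronwall-type integral inequalities for $\|[v_{N,t},x]\|_{HS}$, $\|[v_{N,t},\e\nabla]\|_{HS}$, $\|[u_{N,t},x]\|_{HS}$, $\|[u_{N,t},\e\nabla]\|_{HS}$ (normalized by $N^{1/2}\e$), with constants depending only on $\int dp\,(1+|p|)^{2}|\widehat{V}(p)|$. Gronwall's inequality then produces the exponential bound (\ref{f2b}). The main obstacle I anticipate is the bookkeeping for the exchange term: the iterated commutators expand into many pieces, and the estimate must be arranged so that each factor of $V$ (or its gradient) is paired with a semiclassical commutator contributing $\e N^{1/2}$, rather than with a bare factor like $\|\o_{N,s}\|_{HS}\sim N^{1/2}$ which would destroy the $\e$ scaling; the use of both $\o = v^{2}$ and $\o = 1-u^{2}$ (which is why the lemma assumes bounds on both $v_{N}$ and $u_{N}$) is essential here.
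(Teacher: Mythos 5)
Your proposal is correct and follows essentially the same route as the paper's proof: you observe that $v_{N,t},u_{N,t}$ also satisfy the Heisenberg equation $i\eps\partial_t A_t=[h_{HF}(t),A_t]$ (via the functional calculus $v_{N,t}=U(t)v_N U(t)^*$, which is the same as the paper's conjugation by the HF propagator $W(t;0)$), apply Duhamel, estimate the Hartree and exchange contributions by Fourier-expanding $V$, $V*\rho_s$, and $X_s$, close the system via the Leibniz identities $[\omega,\cdot]=v[v,\cdot]+[v,\cdot]v$ and $[1-\omega,\cdot]=u[u,\cdot]+[u,\cdot]u$, and conclude with Gronwall. The only cosmetic difference is that the paper bounds $\|[v_{N,s},e^{ip\cdot x}[\omega_{N,s},x]e^{-ip\cdot x}]\|_{HS}\leq 2\|[\omega_{N,s},x]\|_{HS}$ directly using $\|v_{N,s}\|_{op}\leq1$ rather than expanding into three Leibniz terms; both give the same $N^{-1}$-suppressed exchange contribution.
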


We are now ready to prove Proposition \ref{thm:growth}.
\medskip

\noindent{\it Proof of Proposition \ref{thm:growth}.} For simplicity, we set $s=0$; the case $s\neq 0$ can be handled in a completely analogous way. Using (\ref{eq:Utilde}), we compute
\begin{equation}\label{f2c}
\begin{split} 
i\e & \frac{d}{dt}  \Big\langle \xi, \widetilde{\cU}_N^* (t;0) (\cN+1)^k \widetilde{\cU}_N (t;0) \xi \Big\rangle \\ & = \Big\langle \xi, \widetilde{\cU}_N^* (t;0) \left[ (\cN+1)^k , \widetilde{\mathcal{Q}}_{N} \right] 
\widetilde{\cU}_N (t;0) \xi \Big\rangle \\
& = \sum_{j=1}^k  \Big\langle \xi, \widetilde{\cU}_N^* (t;0) (\cN+1)^{j-1} \left[ \cN  , \widetilde{\mathcal{Q}}_{N} \right] 
(\cN+1)^{k-j} \, \widetilde{\cU}_N (t;0) \xi \Big\rangle.
\end{split} 
\end{equation}
We have
\[\begin{split}  \left[ \cN,  \widetilde{\mathcal{Q}}_{N} \right] &= - \frac{4i}{N} \text{Im }  \int dxdy\, V(x-y)\, \\ &\hspace{2cm} \times \Big\{ a^{*}_{l}(u_{x})a^{*}_{l}(u_{y})a^{*}_{r}(\overline{v_{y}})a^{*}_{r}(\overline{v_{x}}) - a^{*}_{r}(\overline{u_{x}})a^{*}_{r}(\overline{u_{y}})a^{*}_{l}(v_{y})a^{*}_{l}(v_{x}) \Big\}.
\end{split}
\]
It is important to recognize a cancellation arising from the two terms in the parenthesis. We rewrite the commutator as
\be\label{f4}
\begin{split}
\left[ \cN , \widetilde{\mathcal{Q}}_{N} \right] = \; & -\frac{4i}{N} \text{Im } \int dp \widehat{V} (p) \int dz_{1}dz_{2}dz_{3}dz_{4}\\ &\hspace{1cm} \times  \Big[ a^{*}_{z_{1},l}a^{*}_{z_{2},l}a^{*}_{z_{3},r}a^{*}_{z_{4},r}(u_t e^{ipx}v_t )(z_{1},z_{4})(u_t e^{-ipx}v_t)(z_{2},z_{3}) \\
& \hspace{.5cm} - a^{*}_{z_{1},r}a^{*}_{z_{2},r}a^{*}_{z_{3},l}a^{*}_{z_{4},l}(\overline{u}_t e^{ipx}\overline{v}_t)(z_{1},z_{4})(\overline{u}_t e^{-ipx}\overline{v}_t)(z_{2},z_{3})\Big) \\
= \; &- \frac{4i}{N} \text{Im } \int dp\,\widehat{V} (p)\int dz_{1}dz_{2}dz_{3}dz_{4}\, a^{*}_{z_{1},l}a^{*}_{z_{2},l}a^{*}_{z_{3},r}a^{*}_{z_{4},r} \\ 
& \times \Big[ (u_t e^{ipx} v_t)(z_{1},z_{4})(u_t e^{-ipx} v_t)(z_{2},z_{3}) - (\overline{u}_t e^{ipx}\overline{v}_t )(z_{4},z_{1})(\overline{u}_t e^{-ipx}\overline{v}_t )(z_{3},z_{2}) \Big]  \\
= \; &- \frac{4i}{N} \text{Im } \int dp\,\hat V(p)\int dz_{1}dz_{2}dz_{3}dz_{4}\, a^{*}_{z_{1},l}a^{*}_{z_{2},l}a^{*}_{z_{3},r}a^{*}_{z_{4},r} \\
&\times \Big[ (u_t e^{ipx} v_t )(z_{1},z_{4})(u_t e^{-ipx} v_t )(z_{2},z_{3}) - (v_t e^{ipx} u_t )(z_{1},z_{4})(v_t e^{-ipx} u_t )(z_{2},z_{3}) \Big]. 
\end{split}
\ee
Using the identity 
\be
u_t e^{ipx} v_t = v_t e^{ipx} u_t + u_t [e^{ipx},v_t] + v_t [u_t ,e^{ipx}]\label{f5}
\ee
we can rewrite (\ref{f4}) as
\be\label{f6}
\begin{split}
\left[ \cN , \widetilde{\mathcal{Q}}_{N} \right] = \; &- \frac{4i}{N} 
\text{Im } \int dp\,\widehat{V} (p)\int dz_{1}dz_{2}dz_{3}dz_{4}\, a^{*}_{z_{1},l}a^{*}_{z_{2},l}a^{*}_{z_{3},r}a^{*}_{z_{4},r} \\
&\hspace{.5cm} \times \Big[ (v_t e^{ipx}u_t)(z_{1},z_{4})\Big((u_t [e^{-ipx},v_t])(z_{2},z_{3}) + (v_t[u_t,e^{-ipx}])(z_{2},z_{3}) \Big) \\ &\hspace{1cm} + \Big( (u_t[e^{ipx},v_t])(z_{1},z_{4}) + (v_t[u_t,e^{ipx}])(z_{1},z_{4}) \Big)(u_t e^{-ipx} v_t)(z_{2},z_{3}) \Big].
\end{split}
\ee
Using Lemma \ref{lem:estimates} we find,  for any $\varphi\in \mathcal{F}$, 
\be\label{f7}
\begin{split}
\Big\| \int dz_{1}dz_{2}\, a^{*}_{z_{1},l}a^{*}_{z_{2},r}(u_t [e^{ipx}, v_t ])(z_{1},z_{2})\int &dz_{3}dz_{4}\, a^{*}_{z_{3},l}a^{*}_{z_{4},r} (u_t e^{-ipx}v_t)(z_{3},z_{4})\varphi \Big\| \\
& \leq C\| u_t [e^{ipx},v_t] \|_{HS} \| u_t e^{-ipx} v_t\|_{HS} \| (\cN + 1)\varphi \|\\
& \leq C N^{1/2} \| [e^{ipx}, v_t] \|_{HS} \| (\cN + 1)\varphi \|.
\end{split}
\ee
Hence, substituting  (\ref{f6}) into (\ref{f2c}), we find, after some simple manipulations, 
\begin{equation*}
\begin{split}
\eps \, \Big|
\frac{d}{dt}  & \Big\langle \xi, \widetilde{\cU}_N^* (t;0) (\cN+1)^k \widetilde{\cU}_N (t;0) \xi \Big\rangle \Big| \\ &\leq \frac{C}{N^{1/2}} \int dp | \widehat{V} (p)| \Big(\| [e^{ipx},v_t] \|_{HS} + \| [e^{ipx}, u_t] \|_{HS}\Big) \Big\langle \xi, \wt{\cU}_N^* (t;0) (\cN + 1)^{k} \wt{\cU}_N (t;0) \xi \Big\rangle 
\end{split} 
\end{equation*}
where the constant $C$ depends on $k$. Since
\be
\| [e^{ipx}, v_t] \|_{HS} \leq |p| \| [x, v_t] \|_{HS}\;,\qquad \| [e^{ipx}, u_t] \|_{HS} \leq |p|\| [x,u_t] \|_{HS}
\ee
we find, applying Proposition  \ref{lem:prop}, 
\be
\begin{split}
&\Big|\frac{d}{dt}  \Big\langle \xi, \widetilde{\cU}_N^* (t;0) (\cN+1)^k \widetilde{\cU}_N (t;0) \xi \Big\rangle \Big| \leq C \exp(c|t|) \Big\langle \xi, \widetilde{\cU}_N^* (t;0) (\cN+1)^k \widetilde{\cU}_N (t;0) \xi \Big\rangle.
\end{split}
\ee
Finally, Gronwall's Lemma implies that
\be
\Big\langle \xi, \widetilde{\cU}_N^* (t;0) (\cN+1)^k \widetilde{\cU}_N (t;0) \xi \Big\rangle \leq \exp(c_{2}\exp(c_{1}|t|))\Big\langle \xi, (\cN+1)^k \xi\Big\rangle
\ee
where $c_{2}$ depends on $k\in\mathbb{N}$ and on the potential $V$, and $c_{1}$ depends only on $V$. \qed

\subsection{Comparison of fluctuation and auxiliary dynamics}

In this section we prove that the auxiliary dynamics $\widetilde{\cU}_{N}$ is a good approximation of the fluctuation dynamics $\cU_{N}$, in the sense that $\| \cU_{N}(t;s)\xi - \widetilde{\cU}_{N}(t;s)\xi \|$ goes to zero as $N$ goes to infinity. This is the content of the following proposition.

\begin{prp}\label{prop:aux} Let $V \in L^1 (\bR^3)$ and suppose that  (\ref{eq:Vass}) holds. Let $\omega_N$ be a sequence of trace-class operators on $\h = L^{2}(\mathbb{R}^{3})$ with $0 \leq \omega_N \leq 1$, $\tr (1-\Delta)\o_{N}<\infty$, $\tr \omega_N = N$ and satisfying (\ref{eq:oass}). Let $\widetilde{\cU}_{N}(t;s)$ be the auxiliary dynamics defined in (\ref{f1}). Then there exist constants $c_{1}>0$, $c_{2}>0$, only depending on $V$, such that
\be
\| (\cU_{N}(t;s) - \widetilde{\cU}_{N}(t;s)) \xi \| \leq N^{-1/6}\exp(c_{2}\exp(c_{1}|t-s|)) \|(\cN+1)^{3/2} \xi \|.
\ee
\end{prp}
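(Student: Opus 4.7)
The plan is to compare the two dynamics by Duhamel. By construction, $\mathcal{G}_N$ and $\widetilde{\mathcal{G}}_N$ agree everywhere except in their number-changing quartic parts, so $\mathcal{G}_N(\tau) - \widetilde{\mathcal{G}}_N(\tau) = \mathcal{Q}_N(\tau) - \widetilde{\mathcal{Q}}_N(\tau)$ is the sum of the four ``$3+1$'' monomials dropped from (\ref{eq:QN}) together with their Hermitian conjugates, each carrying a prefactor $1/N$. Differentiating $\tau \mapsto \cU_{N}(t;\tau)\widetilde{\cU}_{N}(\tau;s)\xi$ yields the Duhamel identity
\begin{equation*}
(\cU_N(t;s) - \widetilde{\cU}_N(t;s))\xi
  = \frac{1}{i\e}\int_s^t d\tau\, \cU_N(t;\tau)\,(\mathcal{Q}_N - \widetilde{\mathcal{Q}}_N)(\tau)\,\widetilde{\cU}_N(\tau;s)\xi.
\end{equation*}
Using the unitarity of $\cU_N$, the relation $1/\e = N^{1/3}$, and Proposition~\ref{thm:growth} with $k=3$ to control $\|(\cN+1)^{3/2}\widetilde{\cU}_N(\tau;s)\xi\|$ by a double exponential in $|\tau-s|$ times $\|(\cN+1)^{3/2}\xi\|$, everything reduces to the $\tau$-uniform operator bound
\begin{equation*}
 \|(\mathcal{Q}_N(\tau) - \widetilde{\mathcal{Q}}_N(\tau))\psi\| \;\leq\; \frac{C}{N^{1/2}}\,\|(\cN+1)^{3/2}\psi\|,
\end{equation*}
where $C$ depends on $V$ only through $\|\widehat V\|_{L^1}$.

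To establish this operator bound I would insert $V(x-y) = \int dp\,\widehat V(p)\,e^{ip\cdot x}e^{-ip\cdot y}$ and, for each of the eight offending monomials, rearrange (using anticommutation of left and right operators) so that the four creation/annihilation operators split into a block depending on $x$ and a block depending on $y$. The crucial point is that in every one of these monomials exactly one $v$-kernel can be isolated inside a Hilbert--Schmidt block of the form $u_te^{\pm ipx}v_t^{*}$, while the complementary block is number-preserving and has operator norm at most $1$ (its kernel being of the form $u_t e^{\mp ipx} u_t^*$ or $v_t e^{\mp ipx}v_t^*$). For the representative term
\begin{equation*}
A_1 = \frac{1}{N}\int dxdy\, V(x-y)\, a^*_l(u_x) a^*_l(u_y) a^*_r(\overline{v_x}) a_l(u_y),
\end{equation*}
one anticommutation gives $A_1 = -\frac{1}{N}\int dp\,\widehat V(p)\, B_1(p) B_2(p)$ with $B_1(p) = \int dx\, e^{ip\cdot x} a^*_l(u_x) a^*_r(\overline{v_x})$ and $B_2(p) = \int dy\, e^{-ip\cdot y} a^*_l(u_y) a_l(u_y)$. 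Since $\|v_t\|_{HS}^2 = \tr\omega_{N,\tau} = N$ and $\|u_t\|_\text{op} \leq 1$, Lemma~\ref{lem:estimates} yields $\|B_1(p)\chi\| \leq 2N^{1/2}\|(\cN+1)^{1/2}\chi\|$; the same lemma gives $\|B_2(p)\chi\| \leq \|\cN\chi\|$ because $B_2(p) = d\Gamma_l(u_t e^{-ipx}u_t^*)$. Commuting $(\cN+1)^{1/2}$ through the number-preserving block $B_2(p)$,
\begin{equation*}
\|B_1(p)B_2(p)\psi\| \leq 2N^{1/2}\|B_2(p)(\cN+1)^{1/2}\psi\| \leq 2N^{1/2}\|(\cN+1)^{3/2}\psi\|,
\end{equation*}
and $p$-integration produces $\|A_1\psi\| \leq CN^{-1/2}\|\widehat V\|_{L^1}\|(\cN+1)^{3/2}\psi\|$.

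The remaining three monomials of $\mathcal{Q}_N-\widetilde{\mathcal{Q}}_N$ admit the identical recipe: each of them pairs one $v$ with one $u$ inside a Hilbert--Schmidt block to generate the $N^{1/2}$ gain, and leaves only operator-norm-bounded kernels in the complementary block. For the four Hermitian conjugates, the same factorization gives $B_2^*(p)B_1^*(p)$; here $\cN$ must be commuted through the two annihilators in $B_1^*(p)$, which shifts $\cN \to \cN-2$ and is absorbed into the $(\cN+1)^{3/2}$ weight on the right—this is precisely why the power $3/2$ (rather than $1$) appears in the final estimate. Summing the eight terms, plugging back into Duhamel, and using Proposition~\ref{thm:growth} with $k=3$ gives
\begin{equation*}
\|(\cU_N(t;s) - \widetilde{\cU}_N(t;s))\xi\| \leq CN^{-1/6}\int_s^t d\tau\, \exp\!\bigl(\tfrac12 c_{2}'\exp(c_{1}'|\tau-s|)\bigr)\,\|(\cN+1)^{3/2}\xi\|,
\end{equation*}
and the $\tau$-integral is absorbed into the double exponential by a mild enlargement of $c_1, c_2$.

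The main obstacle is the algebraic bookkeeping for the operator bound: one must verify, case by case, that each of the eight monomials admits a rearrangement in which exactly one $v_t$-kernel sits inside the Hilbert--Schmidt block (providing the $N^{1/2}$ gain via $\|v_t\|_{HS}=N^{1/2}$) while the other block is of $d\Gamma$-type with operator norm at most $1$. Notably, no commutator or semiclassical estimate on $u_t, v_t$ is needed at this stage; the semiclassical structure of the data enters the argument only indirectly, through the input Proposition~\ref{thm:growth}.
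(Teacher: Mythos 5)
Your proposal is correct and follows essentially the same path as the paper: the difference $\mathcal{G}_N - \widetilde{\mathcal{G}}_N$ consists of the ``$3{+}1$'' monomials of $\mathcal{Q}_N$, each of which is factorized (after inserting $\widehat V$ and one anticommutation) into a Hilbert--Schmidt block carrying $v_t$ (worth $N^{1/2}$) and a number-preserving $d\Gamma$-block of unit operator norm, yielding the uniform bound $\|(\mathcal{G}_N-\widetilde{\mathcal{G}}_N)\psi\|\le CN^{-1/2}\|(\cN+1)^{3/2}\psi\|$; combining this with $1/\e=N^{1/3}$ and Proposition~\ref{thm:growth} for $k=3$ gives $N^{-1/6}$ times a double exponential. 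The only cosmetic difference is that you phrase the time integration as a Duhamel identity, whereas the paper differentiates $\|(\cU_N-\widetilde\cU_N)\xi\|^2$ and integrates the resulting differential inequality — these are the same manipulation.
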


\begin{proof}
We write
\[
\begin{split}
i\e\partial_{t}\| (\cU_{N}(t;s) - \widetilde \cU_{N}(t;s))\xi \|^{2} &= i\e\partial_{t}\langle \O, (\cU_{N}(t;s)^{*} - \widetilde{\cU}_{N}(t;s)^{*})(\cU_{N}(t;s) - \widetilde{\cU}_{N}(t;s))\O\rangle\\
&= i\e\partial_{t}\langle \O, (2 - \cU_{N}(t;s)^{*}\widetilde \cU_{N}(t;s) - \widetilde{\cU}_{N}(t;s)^{*}\cU_{N}(t;s))\O\rangle\\
&= -2i\e\, \text{Im}\, \partial_{t} \langle \O, \cU_{N}(t;s)^{*}\widetilde \cU_{N}(t;s) \O \rangle\\
&= 2 \text{Im}\,\langle \cU_{N}(t;s)\xi, (\mathcal{G}_{N}(t) - \widetilde{\mathcal{G}}_{N}(t))\widetilde \cU_{N}(t;s)\xi \rangle\\
&= 2 \text{Im}\, \langle \cU_{N}(t;s) - \widetilde \cU_{N}(t;s)\xi, (\mathcal{G}_{N}(t) - \widetilde{\mathcal{G}}_{N}(t))\widetilde \cU_{N}(t;s)\xi \rangle 
\end{split}
\]
where in the last step we used that $\langle \widetilde \cU_{N}(t;s)\xi,(\mathcal{G}_{N}(t) - \widetilde{\mathcal{G}}_{N}(t))\widetilde \cU_{N}(t;s)\xi \rangle$ is real. This gives
\begin{equation} \e \left| \partial_t \| (\cU_{N}(t;s) - \widetilde \cU_{N}(t;s))\xi \| \right| \leq  \left\| (\mathcal{G}_{N}(t) - \widetilde{\mathcal{G}}_{N}(t))\widetilde \cU_{N}(t;s)\xi \right\|.\label{mix08} 
\end{equation}
Recall that 
\begin{equation}\label{eq:G-G} \begin{split} \mathcal{G}_{N}(t) -& \widetilde{\mathcal{G}}_{N}(t) \\ =  \; & \frac{1}{N}\int dxdy\, V(x-y)\Big(a^{*}_{l}(u_{x})a^{*}_{l}(u_{y})a^{*}_{r}(\overline{v_{x}})a_{l}(u_{y}) - a^{*}_{l}(u_{x})a^{*}_{r}(\overline{v_{y}})a^{*}_{r}(\overline{v_{x}})a_{r}(\overline{v_{y}}) \\
& \hspace{2cm} + a^{*}_{r}(\overline{u_{x}})a^{*}_{r}(\overline{u_{y}})a^{*}_{l}(v_{x})a_{r}(\overline{u_{y}}) - a_{r}^{*}(\overline{u_{x}})a^{*}_{l}(v_{y})a^{*}_{l}(v_{x})a_{l}(v_{y}) + h.c.\Big). \end{split} 
\end{equation} 
The contribution from the first term on the r.h.s. of (\ref{eq:G-G}) can be bounded by using Lemma \ref{lem:estimates}. We find 
\begin{equation}
\begin{split}
\Big\| \frac{1}{N}\int dx &dy\, V(x-y) a^{*}_{l}(u_{x})a^{*}_{l}(u_{y})a^{*}_{r}(\overline{v_{x}})a_{l}(u_{y}) \widetilde \cU_{N}(t;s)\xi \Big\| \\
&\leq \frac{1}{N}\int dp\, |\hat V(p)| \Big\| \int dx\, a^{*}_{l}(u_{x})a^{*}_{r}(\overline{v_{x}})e^{ipx} \int dy\, e^{-ipy} a^{*}_{l}(u_{y})a_{l}(u_{y}) \widetilde{\cU}_{N}(t;s)\O \Big\| \\
&\leq \frac{1}{N}\int dp\,|\hat V(p)| \| u e^{ipx} v \|_{HS} \| u e^{-ipx} u \|_{op} \| (\cN+1)^{3/2} \widetilde{\cU}_{N}(t;s)\xi \| \\
& \leq \frac{C}{N^{1/2}}\| (\cN+1)^{3/2} \widetilde{\cU}_{N}(t;s) \xi \|.
\end{split}
\end{equation}
It is easy to see that the same estimate holds true for all the other contributions arising from (\ref{eq:G-G}). Therefore, we have:
\be
\| (\mathcal{G}_{N}(t) - \widetilde{\mathcal{G}}_{N}(t)) \widetilde \cU_{N}(t;s)\xi \| \leq \frac{C}{N^{1/2}}\|(\cN+1)^{3/2} \widetilde{\cU}_{N}(t;s) \xi \|. \label{mix08a}
\ee
Substituting  this estimate in (\ref{mix08}) and using Proposition \ref{thm:growth}, we get
\be
\Big| \partial_{t}\| \cU_{N}(t;s)\xi - \widetilde \cU_{N}(t;s)\xi \| \Big| \leq N^{-1/6} \exp(c_{2}\exp(c_{1}|t-s|)) \|(\cN+1)^{3/2} \xi \|.\nn
\ee
The integration yields the claim.
\end{proof}

\subsection{Proof of Theorem \ref{thm:maingrowth}}

We claim that there exists a constant $c_1 > 0$ and, for every $k \in \bN$, a constant $c_2 > 0$ such that 
\begin{equation}\label{eq:ind}
\begin{split}
&\left\langle \cU_N (t;s)  \xi_N, (\cN+1)^k \, \cU_N (t;s) \xi_N \right\rangle \\ &\hspace{3cm} \leq  N^{k-n/3} \exp (2(n+1) c_2 \exp (c_1 |t-s|)) \, \langle \xi_N, (\cN+1)^{k+3n} \xi_N \rangle \,
\end{split}
\end{equation}
for every $n \in \bN$ with $n \leq 3k$. We prove the claim (\ref{eq:ind}) by induction over $n$. For $n = 0$, we observe from (\ref{eq:flu-dyn}) that  
\begin{equation}\label{eq:k0-bd} \left\langle \cU_N (t;s) \xi_N, (\cN+1)^k \cU_N (t;s) \xi_N \right\rangle = \left\langle R_t^* e^{-i\cL_N (t-s)} R_s \xi_N, (\cN+1)^k R_t^* e^{-i\cL_N (t-s)} R_s \xi_N \right\rangle .
\end{equation}
Since 
\[ R_t^* \cN  R_t = \cN + 2N -2d\Gamma (\omega_{N,t}  \oplus \bar{\omega}_{N,t}) +2\int dxdy\, \left[ a^{*}_{x,r}a^{*}_{y,l}\overline{u}_{N,t}\overline{v}_{N,t}(x,y) + \text{h.c.} \right] \]
we easily get
\[ R_t^* (\cN+c) R_t \leq C (\cN + N + c) \]
for a universal constant $C > 0$ and any $c\geq 0$. Also, iterating this bound $k$ times (and always shifting the factors of $\cN$ to the left and to the right), we obtain
\be\label{eq:RNR}
R_{t}^{*} (\cN + c)^{k} R_{t}  \leq C( \cN + N + c)^{k} 
\ee
for a $k$-dependent constant $C > 0$.
Hence, applying (\ref{eq:RNR}) twice (once for $R_{t}$, once for $R_{s}$) and using the fact that $\cN$ commutes with the Liouville operator $\cL_N$, we find from (\ref{eq:k0-bd}) that
\[ \begin{split} \left\langle \cU_N (t;s)\xi_N, (\cN+1)^k \cU_N (t;s) \xi_N \right\rangle &\leq C N^k \langle \xi_N , (\cN + 1)^k 
\xi_N \rangle \\ &\leq N^k \exp (2c_2 \exp (c_1 |t-s|)) \langle \xi_N , (\cN + 1)^k \xi_N \rangle \end{split} \]
if $c_2 > 0$ is large enough, depending on $k$. This proves (\ref{eq:ind}) in the case $n=0$.

Next we assume (\ref{eq:ind}) to hold, and we prove it with $n$ replaced by $(n+1)$, assuming $n+1 \leq 3k$. To this end, we write
\be\label{proof6}
\begin{split}
\langle \xi_{N}, \cU_{N}^{*}(t;s)(\cN + 1)^{k} & \cU_{N}(t;s)\xi_{N} \rangle \\ = \; &\langle \xi_{N}, \widetilde{\cU}_{N}^{*}(t;s)(\cN+1)^{k} \widetilde{\cU}_{N}(t;s)\xi_{N} \rangle\\
& + 2\,\re\langle \xi_{N}, \big(\cU_{N}^{*}(t;s) - \widetilde{\cU}_{N}^{*}(t;s)\big)(\cN+1)^{k} \widetilde{\cU}_{N}(t;s)\xi_{N} \rangle\\
& + \langle \xi_{N}, \big(\cU_{N}^{*}(t;s) - \widetilde{\cU}_{N}^{*}(t;s)\big)(\cN+1)^{k} \big(\cU_{N}(t;s) - \widetilde{\cU}_{N}(t;s)\big)\xi_{N} \rangle \\= \; & \text{I} + \text{II} + \text{III} .
\end{split}
\ee
Using Proposition \ref{thm:growth}, we find
\begin{equation}\label{eq:termI} \begin{split}  \text{I} &\leq \exp (c_2 \exp (c_1 |t-s|)) \langle \xi_N, (\cN+1)^k \xi_N \rangle \\ &\leq  N^{k- (n+1)/3} \exp \left((2n+3) c_2 \exp (c_1 |t-s|) \right)  \langle \xi_N , (\cN +1)^{k+3(n+1)} \xi_N \rangle .\end{split} 
\end{equation}
If $k = 0$, the second term on the r.h.s of (\ref{proof6}) can be bounded trivially. For $k \geq 1$, on the other hand, we note that, for a vector $\xi_{N} = \chi(\cN \leq CN)\xi_{N}$:
\be\label{eq:p6b}
\begin{split}
|\text{II}|  & \leq 2\| (\cU_{N}(t;s) - \widetilde{\cU}_{N}(t;s))\xi_{N} \|\| (\cN + 1)^{k}\widetilde{\cU}_{N}(t;s)\xi_{N} \| \\
&\leq 2N^{-1/6} \exp(c_{2}\exp(c_{1}|t-s|)) \langle \xi_N, (\cN+1)^3 \xi_N \rangle^{1/2} \langle \xi_N , (\cN+1)^{2k} \xi_N \rangle^{1/2} \\
&\leq 2N^{-1/6} \exp(c_{2}\exp(c_{1}|t-s|)) \langle \xi_{N}, (\cN + 1)^{2k+1}\xi_{N} \rangle \\
&\leq 2(CN + 1)^{^{k - (n+1)/3}} \exp(c_{2}\exp(c_{1}|t-s|)) \langle \xi_N, (\cN+1)^{k+(n+1)/3+1} \xi_N \rangle \\
&\leq \widetilde C N^{k-(n+1)/3} \exp ((2n+3) c_2 \exp (c_1 |t-s|)) \langle \xi_N, (\cN+1)^{k+3(n+1)} \xi_N \rangle 
\end{split}
\ee
where $\widetilde C>0$ is a $k$-dependent constant. In (\ref{eq:p6b}) we used: Propositions \ref{thm:growth} and  \ref{prop:aux} in the second inequality; the fact that $\max \{2k,3 \} \leq 2k+1$ for $k \geq 1$ in the third inequality; the assumption $\xi_{N} = \chi(\cN \leq CN)\xi_{N}$ in the fourth inequality; and the fact that $n/3+1 \leq 3n$ for all $n \geq 1$ in the last estimate.

Finally, we consider the last term in (\ref{proof6}). To control it, we use the Duhamel formula
\[
\cU_{N}(t;s) - \widetilde{\cU}_{N}(t;s) = \frac{1}{i\e}\int_{s}^{t}ds' \, \cU_{N}(t;s')(\mathcal{G}_{N}(s') - \widetilde{\mathcal{G}}_{N}(s'))\widetilde{\cU}_{N}(s';s).
\]
Setting $\D \mathcal{G}_{N}(s') = \mathcal{G}_{N}(s') - \widetilde{\mathcal{G}}_{N}(s')$, we rewrite $\text{III}$ as
\be\label{eq:p7}
\frac{1}{\e^{2}}\int_{s}^{t}ds' \int_{s}^{t}ds'' \,\langle \cU_{N}(t;s')\D \mathcal{G}_{N}(s')\widetilde{\cU}_{N}(s';s)\xi_{N}, (\cN + 1)^{k} \cU_{N}(t;s'')\D \mathcal{G}_{N}(s'')\widetilde{\cU}_{N}(s'';s)\xi_{N} \rangle.
\ee
By Cauchy-Schwarz we find 
\be\label{eq:p7c}
\begin{split}
|\text{III}| & \leq \frac{2|t-s|}{\e^{2}}\int_{s}^{t} ds' \,\langle \cU_{N}(t;s')\D \mathcal{G}_{N}(s')\widetilde{\cU}_{N}(s';s) \xi_{N}, (\cN + 1)^{k} \cU_{N}(t;s')\D \mathcal{G}_{N}(s')\widetilde{\cU}_{N}(s';s)\xi_{N} \rangle,
\end{split}
\end{equation}
and the induction assumption yields 
\[ \begin{split} 
|\text{III}| \leq \; &\frac{2|t-s| N^{k-n/3}}{\e^{2}}\int_{s}^{t}ds' \, \exp (2(n+1) c_2 \exp (c_1 |t-s'|)) \\ &\hspace{3cm} \times \langle \D \mathcal{G}_{N}(s')\widetilde{\cU}_{N}(s';s)\xi_{N}, (\cN + 1)^{k+3n} \D \mathcal{G}_{N}(s')\widetilde{\cU}_{N}(s';s)\xi_{N}\rangle.
\end{split} 
\]
Writing $\D \mathcal{G}_N (s') = A(s') + A^* (s')$, with 
\[\begin{split}  A(s') =  \; &\frac{1}{N}\int dxdy\, V(x-y)\Big(a^{*}_{l}(u_{x})a^{*}_{l}(u_{y})a^{*}_{r}(\overline{v_{x}})a_{l}(u_{y}) - a^{*}_{l}(u_{x})a^{*}_{r}(\overline{v_{y}})a^{*}_{r}(\overline{v_{x}})a_{r}(\overline{v_{y}}) \\
& \hspace{3.5cm} + a^{*}_{r}(\overline{u_{x}})a^{*}_{r}(\overline{u_{y}})a^{*}_{l}(v_{x})a_{r}(\overline{u_{y}}) - a_{r}^{*}(\overline{u_{x}})a^{*}_{l}(v_{y})a^{*}_{l}(v_{x})a_{l}(v_{y}) \Big) \end{split} \]
we conclude that, using $A(s')\cN = (\cN-2) A(s')$ and Cauchy-Schwarz inequality:
\[\begin{split} &|\text{III}| \leq \; \frac{C |t-s| N^{k-n/3}\exp (2(n+1) c_2 \exp (c_1 |t-s|))}{\e^{2}} \\ &\quad \times \int_{s}^{t}ds' \,  \left\{  \| A(s') (\cN+1)^{(k+3n)/2} \widetilde{\cU}_N (s';s) \xi_N \|^{2}  + \| A^* (s') (\cN+1)^{(k+3n)/2} \widetilde{\cU}_N (s';s) \xi_N \|^{2}  \right\} \end{split} \]
for a $k$-dependent constant $C>0$. Proceeding as in the proof of Proposition \ref{prop:aux} (see (\ref{mix08a})) and applying Proposition \ref{thm:growth}, we find 
\be\label{eq:p9}
\begin{split}
|\text{III}| 
& \leq C|t-s|^2 N^{k-(n+1)/3} \exp ((2n+3)c_2 \exp (c_1 |t-s|)) 
\| (\cN + 1)^{(k + 3(n+1))/2} \xi_{N}\|^{2} 
\end{split}
\end{equation} 
for an appropriate $k$-dependent constant $C >0$. Combining the last bound with (\ref{eq:termI}) and (\ref{eq:p6b}) we can estimate $|\text{I} + \text{II} + \text{III}|$ as follows:
\be
(1 + \widetilde C +C|t-s|^2) N^{k-(n+1)/3} \exp ((2n+3)c_2 \exp (c_1 |t-s|))\| (\cN + 1)^{(k + 3(n+1))/2} \xi_{N}\|^{2}.
\ee
If $c_2 > 0$ is large enough, we have
\[ (1 + \widetilde C +C |t-s|^2) \leq \exp (c_2 \exp (c_1 |t-s|)) \]
for all $t,\,s \in \bR$. Hence, 
\[ |\text{I} + \text{II} + \text{III}| \leq N^{k-(n+1)/3} \exp (2(n+2)c_2 \exp (c_1 |t-s|))\| (\cN + 1)^{(k + 3(n+1))/2} \xi_{N}\|^{2} \]
which concludes the proof of (\ref{eq:ind}) with $n$ replaced by $(n+1)$. Finally, putting $n=3k$ and $s=0$ in (\ref{eq:ind}), we obtain
\[ \langle \xi_N, \cU_N^* (t;0) (\cN+1)^k \cU_N (t;0) \xi_N \rangle \leq \exp (c_2 \exp (c_1 |t|)) \langle \xi_N, (\cN+1)^{10k} \xi_N \rangle \]
for an appropriate newly defined constant $c_2$ depending on $k$. This completes the proof.\qed


\section{Proof of main results}\label{sec:proof}
\setcounter{equation}{0}

In this section we prove  Theorems \ref{thm:main} and \ref{thm:maink}.

\begin{proof}[Proof of Theorem \ref{thm:main}] We start from the expression
\be\label{proof1}
\begin{split}
\g^{(1)}_{N,t}(x;y) &= \langle \psi_{N,t}, a^{*}_{y,l}a_{x,l} \psi_{N,t} \rangle \\
&= \langle e^{-i\cL_{N}t/\e}R_{0}\xi_{N}, a^{*}_{y,l}a_{x,l} e^{-i\cL_{N}t/\e}R_{0}\xi_{N} \rangle \\
&= \langle \xi_{N}, R_{0}^{*} e^{i\cL_{N}t/\e} a^{*}_{y,l}a_{x,l} e^{-i\cL_{N}t/\e}R_{0}\xi_{N} \rangle.
\end{split}
\ee
Introducting the fluctuation dynamics $\cU_{N}(t;0) = R^{*}_{t}e^{-i\cL_{N}t/\e}R_{0}$, we can rewrite (\ref{proof1}) as
\[
\begin{split}
\g^{(1)}_{N,t}(x;y) &= \langle \xi_{N}, \cU_{N}(t;0)^{*}R^{*}_{t} a^{*}_{y,l} R_{t}R^{*}_{t} a_{x,l} R_{t} \cU_{N}(t;0)\xi_{N} \rangle\\
&= \langle \xi_{N}, \cU_{N}(t;0)^{*}(a^{*}_{l}(u_{y}) - a_{r}(\overline{v_{y}}))(a_{l}(u_{x}) - a^{*}_{r}(\overline{v_{x}})) \cU_{N}(t;0)\xi_{N}\rangle\\
& =  \langle \xi_{N}, \cU_{N}(t;0)^{*}\{ a^{*}_{l}(u_{y})a_{l}(u_{x}) - a^{*}_{l}(u_{y})a^{*}_{r}(\overline{v_{x}}) - a_{r}(\overline{v_{y}})a_{l}(u_{x}) - \\
& \qquad\qquad\qquad\qquad\qquad - a^{*}_{r}(\overline{v_{x}})a_{r}(\overline{v_{y}}) + \langle v_{x}, v_{y}  \rangle \}\cU_{N}(t;0)\xi_{N}\rangle.
\end{split}
\]
Observe that
\be
\langle v_{x}, v_{y}  \rangle = \int dz\, \overline{v_{N,t}(z,x)}v_{N,t}(z,y)  = \o_{N,t}(x;y)\, . 
\ee
Therefore, we can write
\[
\begin{split}
\g^{(1)}_{N,t}(x;y) - \o_{N,t}(x;y) & = \langle \xi_{N}, \cU_{N}(t;0)^{*}\{ a^{*}_{l}(u_{y})a_{l}(u_{x}) - a^{*}_{l}(u_{y})a^{*}_{r}(\overline{v_{x}}) - \\
& \hspace{7em} - a_{r}(\overline{v_{y}})a_{l}(u_{x}) - a^{*}_{r}(\overline{v_{x}})a_{r}(\overline{v_{y}})\}\cU_{N}(t;0)\xi_{N}\rangle\;.
\end{split}
\]
Integrating against the kernel of a one-particle Hilbert-Schmidt observable $O$, we find
\be\label{proof6bb}
\begin{split}
\tr\, O(\g^{(1)}_{N,t} - \o_{N,t}) &= \langle \xi_{N}, \cU_{N}^{*}(t;0)\big(d\G_{l}(u_{t}Ou_{t}) - d\G_{r}(\overline v_{t}\overline{O}^{*}v_{t})\big)\cU_{N}(t;0)\xi_{N} \rangle \\
& - 2\Re\, \langle \xi_{N}, \cU_{N}^{*}(t;0)\int dz_{1}dz_{2}\, a_{r,z_{1}}a_{l,z_{2}}\, (v_{t} O u_{t})(z_{1};z_{2})\,\cU_{N}(t;0)\xi_{N} \rangle .
\end{split}
\ee
Combining the estimates of Lemma \ref{lem:estimates} with the bounds $\| u_{t} \|\leq 1$, $\| v_{t} \|\leq 1$, we conclude that
\be\label{proof7}
\begin{split}
\Big| \tr\, O(\g^{(1)}_{N,t} - \o_{N,t}) \Big| \leq\; & (\| u_{t} O u_{t} \| + \|\overline v_{t}\overline{O}^{*}v_{t}\|) \langle \xi_{N}, \cU_{N}^{*}(t;0)\cN\cU_{N}(t;0)\xi_{N} \rangle + \\
& + 2\| v_{t} O u_{t} \|_{HS} \big\| \cN^{1/2} \cU_{N}(t;0)\xi_{N}\big\|\|\xi_{N}\|\\
\leq & C\| O \|_{HS} \langle \xi_{N}, \cU_{N}^{*}(t;0)\cN\cU_{N}(t;0)\xi_{N} \rangle.
\end{split}
\ee
Finally, from Theorem \ref{thm:maingrowth} we find
\[
\|\g^{(1)}_{N,t} - \o_{N,t}\|_{HS} \leq \exp(c_{2}\exp(c_{1}|t|))\langle \xi_{N}, (\cN + 1)^{10} \xi_{N}\rangle\;,
\]
which concludes the proof of convergence in the Hilbert-Schmidt norm. To prove convergence in trace norm, we proceed as follows. Let $O$ be a compact operator on $\h = L^{2}(\bR^{3})$. Using the bound $\| u_{t} O v_{t} \|_{HS}\leq \| O \|\|v_{t}\|_{HS}\leq N^{1/2}  \| O \|$, one notices that (\ref{proof6bb}) can be alternatively estimated by 
\[
\Big| \tr\, O(\g^{(1)}_{N,t} - \o_{N,t}) \Big| \leq C\|O\|N^{1/2}\langle \xi_{N}, \cU_{N}^{*}(t;0)\cN\cU_{N}(t;0)\xi_{N} \rangle\;,
\]
and the proof is concluded by applying again Theorem \ref{thm:maingrowth}.
\end{proof}

Next, we show the convergence of the $k$-particle reduced density, as stated in Theorem~\ref{thm:maink}.

\begin{proof}[Proof of Theorem \ref{thm:maink}]
We start from the expression
\be\label{eq:k}
\begin{split} \gamma_{N,t}^{(k)} &(x_1, \dots , x_k ; x'_1, \dots x'_k) \\ &= \big\langle e^{-i \cL_N t/ \e} 
R_0 \xi_{N},  a^*_{x'_k,l}  \dots a^*_{x'_1,l} a_{x_1,l} \dots a_{x_k,l} e^{-i\cL_{N} t/\e} R_0 \xi_{N} \big\rangle \\ &=  \big\langle \cU_N (t;0) \xi_{N}, R_t^* a^*_{x'_k,l}  \dots a^*_{x'_1,l} a_{x_1,l} \dots a_{x_k,l} R_t \cU_N (t;0)  \xi_{N} \big\rangle \\
&=  \big\langle \cU_N (t;0) \xi_{N}, \big(a^*_{l}(u_{t,x'_k}) - a_{r}(\cc v_{t,x'_k})\big)\cdots\big(a^*_{l}(u_{t,x'_1}) - a_{r}(\cc v_{t, x'_1})\big) \\
& \qquad\qquad\times \big(a_{l}(u_{t,x_1}) - a_{r}^*(\cc v_{t,x_1})\big)\cdots\big( a_{l}(u_{t,x_k}) - a_{r}^*(\cc v_{t,x_k}) \big) \, \cU_N (t;0) \xi_{N} \big\rangle.
\end{split}
\ee
The strategy of the proof is to expand the  last product  into a sum of $2^{k}$ terms. Each summand can be  put into normal order, using Wick's theorem. The fully contracted term will give rise to the kernel of $\o^{(k)}_{N,t}$, while all the others will be proven to be of smaller order in $N$. The proofs 
of these facts follow the proof of Theorem 2.2 in \cite{BPS}. Notice that, in contrast to \cite{BPS}, for a mixed state the operators $u_{N,t}$ and $\overline{v}_{N,t}$ are {\it not} orthogonal. However, the  arguments of the fermionic operators  $u_{N,t}\oplus 0$ and $0\oplus \overline v_{N,t}$ are orthogonal operators in $\h\oplus \h$, and that suffices for our purposes.  In order to make the proof of Theorem \ref{thm:main} self-contained, we sketch the main steps.

\medskip

Proceeding as in the proof of Theorem 2.2 in \cite{BPS}, we expand the products in Eq.\ (\ref{eq:k}) and we put the result into normal order. The 
end result is a sum of contributions of the following form:
\be\label{eq:typicalwick}
\begin{split}
& \pm \Big\langle \cU_N (t;0) \xi_{N}, \, :a^\sharp_{\a'_{1}}(w_1(\cdot;x'_{\sigma(1)}))\cdots a^\sharp_{\a'_{k-j}}(w_{k-j}(\cdot;x'_{\sigma(k-j)})) \\ &\hspace{2.5cm} \times a^\sharp_{\a_{1}}(\eta_1(\cdot;x_{\pi(1)}))\cdots a^\sharp_{\a_{k-j}}(\eta_{k-j}(\cdot;x_{\pi(k-j)})): \, \cU_N (t;0) \xi_{N} \Big\rangle \\
& \hspace{5cm} \times\omega_{N,t} (x_{\pi(k-j+1)};x'_{\sigma(k-j+1)})\cdots\omega_{N,t} (x_{\pi(k)};x'_{\sigma(k)})
\end{split}
\ee
where the symbol $:\cdots :$ denotes normal ordering; $j \leq k$ denotes the number of contractions; $\pi, \sigma \in S_k$ are two appropriate permutations; for every $s=1, \dots , k-j$, $w_s, \eta_s: \h \to \h$ are either the operator $u_t$ or the operator $\cc v_t$ (the operators are identified with their integral kernels); the $\a$, $\a'$ labels are either $l$ or $r$.

\medskip

{\it Fully contracted term.} For $j=k$, the contractions produce the kernel of the $k$-particle density matrix $\o^{(k)}_{N,t}$:
\[
\sum_{\pi\in S_{k}} \s_{\pi}\o_{N,t}(x_{1};x'_{\pi(1)})\cdots \o_{N,t}(x_{k};x'_{\pi(k)}) = \o^{(k)}_{N,t}(\mathbf{x}_{k};\mathbf{x}'_{k})
\]
where $\mathbf{x}_{k} = (x_{1},\ldots x_{k})$, $\mathbf{x}'_{k} = (x'_{1},\ldots x'_{k})$.

\medskip

{\it Error terms.} Consider now the contributions corresponding to $j<k$. Let $O$ be a Hilbert-Schmidt operator on $L^2 (\bR^{3k})$, with integral kernel $O(\bx_k ; \bx'_k)$. Integrating \eqref{eq:typicalwick} against $O(\bx_k ; \bx'_k)$, we get 
\be\label{eq:wickbound}
\begin{split}
\text{I} & := \Big|  \int d \bx_k d\bx'_k \, O (\bx_k;\bx'_k) \, \big\langle \cU_N (t;0) \xi_{N}, \, :a^\sharp_{\a'_{1}}(w_1(\cdot;x'_{\sigma(1)}))\cdots a^\sharp_{\a'_{k-j}}(w_{k-j}(\cdot;x'_{\sigma(k-j)})) \\
&\hspace{2.5cm} \times a^\sharp_{\a_{1}}(\eta_1(\cdot; x_{\pi(1)}))\cdots a^\sharp_{\a_{k-j}}(\eta_{k-j}(\cdot; x_{\pi(k-j)})): \cU_N (t;0)\xi_{N} \big\rangle \\
& \hspace{5cm} \times\omega_{N,t} (x_{\pi(k-j+1)};x'_{\sigma(k-j+1)})\cdots\omega_{N,t} (x_{\pi(k)};x'_{\sigma(k)})\Big|. 
\end{split} 
\ee
This contribution can be rewritten as
\[ 
\begin{split} 
\text{I} & = \Big| \int d\bx_k d \bx'_k \left[ \eta^{(\pi (1))}_1\cdots \eta^{(\pi (k-j))}_{k-j}  \, O \, w^{(\sigma (1))}_1 \cdots w^{(\sigma (k-j))}_{k-j} \right]  (\bx_k ; \bx'_k) \\
& \hspace{1.0cm} \times \big\langle \cU_N (t;0) \xi_{N}, \, :a^\sharp_{x'_{\sigma(1)},\a'_{1}} \cdots a^\sharp_{x'_{\sigma(k-j)},\a'_{k-j}} a^\sharp_{x_{\pi(1)},\a_{1}} \cdots a^\sharp_{x_{\pi(k-j)},\a_{k-j}}: \, \cU_N (t;0) \xi_{N} \big\rangle\\
&\hspace{3cm} \times\omega_{N,t} (x_{\pi(k-j+1)}; x'_{\sigma(k-j+1)})\cdots \omega_{N,t} (x_{\pi(k)}; x'_{\sigma(k)})\Big| 
\end{split}
\]
where $\eta_\ell^{(\pi (\ell))}$ and $w_\ell^{(\sigma (\ell))}$ denote the one-particle operators $\eta_\ell$ and $w_\ell$ acting only on particle $\pi (\ell)$ and particle $\sigma (\ell)$, respectively.  Notice that some of the operators $\eta_\ell^{(\pi(\ell))}$ and $w_\ell^{(\sigma (\ell))}$ should actually be replaced by their transpose or their complex conjugate. This change however does not affect our analysis, since we will only need the bounds $\| \eta_j \|, \| w_j \| \leq 1$ for the operator norms.  Applying H\"older's inequality, we get
\[
\begin{split} 
&\text{I} \leq \;  \big\|  \eta^{(\pi (1))}_1\cdots \eta^{(\pi (k-j))}_{k-j}  \, O \, w^{(\sigma (1))}_1 \cdots w^{(\sigma (k-j))}_{k-j} \big\|_{\text{HS}}\times \\ &  \bigg( \int d\bx_k d \bx'_k \, \big| \big\langle \cU_N (t;0) \xi_{N}, \, :a^\sharp_{x'_{\sigma(1)},\a'_{1}}\cdots a^\sharp_{x'_{\sigma(k-j),\a'_{k-j}}}a^\sharp_{x_{\pi(1)},\a_{1}}\cdots a^\sharp_{x_{\pi(k-j)},\a_{k-j}}: \cU_N(t;0)\xi_{N}\big\rangle \big|^2 \\ & \hspace{2cm} \times  \big| \omega_{N,t} (x_{\pi(k-j+1)};x'_{\sigma(k-j+1)}) \big|^2 \cdots \big|\omega_{N,t} (x_{\pi(k)},;x'_{\sigma(k)}) \big|^2 \bigg)^{1/2}\\
 \leq \; &  \| O \|_{\text{HS}} \| \omega_{N,t} \|_{\text{HS}}^j  \\
&\times \bigg( \int \di x_{\pi(1)}\cdots \di x_{\pi(k-j)}\di x'_{\sigma(1)}\cdots dx'_{\sigma(k-j)} \\
&\hspace{0.5cm} \times  \big| \big\langle \cU_N (t;0) \xi_{N}, :a^\sharp_{x'_{\sigma(1)},\a'_{1}}\cdots a^\sharp_{x'_{\sigma(k-j),\a'_{k-j}}}a^\sharp_{x_{\pi(1)},\a_{1}}\cdots a^\sharp_{x_{\pi(k-j)},\a_{k-j}}: \cU_N (t;0) \xi_{N} \big\rangle \big|^2 \bigg)^{1/2}.
\end{split}\]
Since $\| \omega_{N,t} \|_{\text{HS}} \leq N^{1/2}$ and since the operators in the inner product are normal ordered, we obtain (possibly by moving some factors of $(\cN+1)$ and $(\cN+1)^{-1}$ around, to re-equilibrate  the number of creation and annihilation operators left):
\[ \text{I} \leq C \| O \|_{\text{HS}} N^{j/2} \langle \cU_N (t;0) \xi_{N} , (\cN+1)^{k-j} \cU_N (t;0) \xi_{N} \rangle \,.\]
Therefore, the contributions of the contractions with $j<k$ arising from (\ref{eq:k}) after applying Wick's theorem and integrating against a Hilbert-Schmidt operator $O$ can be bounded by
\begin{equation}\label{eq:jmink} C \| O \|_{\text{HS}} \, N^{(k-1)/2} \langle \cU_N (t;0) \xi_{N}, (\cN+1)^{k} \cU_N (t;0) \xi_{N} \rangle \,. \end{equation}
This concludes the proof of convergence of many-body quantum dynamics to Hartree-Fock dynamics in Hilbert-Schmidt norm.

\medskip

To prove convergence in trace norm, we proceed as follows. The goal is to prove that, for every compact operator $O$ on $L^{2}(\bR^{3k})$, the contribution (\ref{eq:wickbound}) is estimated as follows:
\begin{equation}\label{eq:term-tr} \text{I} \leq C \| O \| \, N^{\frac{k+j}{2}} \exp (c_1 \exp (c_2 |t|)) \end{equation}
for all $\xi_{N} \in \cF(\h \oplus \h)$ with $\langle \xi_{N}, \cN^{10k} \xi_{N} \rangle < \infty$, and the number of contractions $0 \leq j < k$. In fact, because of the fermionic symmetry of $\gamma_{N,t}^{(k)}$ and $\omega_{N,t}^{(k)}$, it is enough to establish (\ref{eq:term-tr}) for all bounded $O$ with the symmetry
\[ O (x_{\pi (1)}, \dots , x_{\pi (k)} ; x'_{\sigma (1)}, \dots , x'_{\sigma (k)}) = \text{sgn} (\pi) \text{sgn} (\sigma) \, O (x_1, \dots , x_k; x'_1, \dots x'_k) \]
for any permutations $\pi, \sigma \in S_k$.
For such observables, (\ref{eq:wickbound}) can be rewritten as
\[ \begin{split}  \text{I} & = \Big|  \int d \bx_k d\bx'_k \, O (\bx_k,\bx'_k) \, \big\langle \cU_N (t;0) \xi_{N}, \, :a^\sharp_{\a'_{1}}(w_1(\cdot,x'_1))\cdots a^\sharp_{\a'_{k-j}}(w_{k-j} (\cdot,x'_{k-j})) \\
&\hspace{2.5cm} \times a^\sharp_{\a_{1}}(\eta_1(\cdot, x_{1}))\cdots a^\sharp_{\a_{k-j}}(\eta_{k-j}(\cdot, x_{k-j})): \cU_N (t;0)\xi_{N} \big\rangle \\
& \hspace{5cm} \times\omega_{N,t} (x_{k-j+1},x'_{k-j+1})\cdots\omega_{N,t} (x_{k},x'_{k})\Big| \\
& = \Big| \int d\bx_{k-j} d \bx'_{k-j} \left[ \eta^{(1)}_1\cdots \eta^{(k-j)}_{k-j}  \,\left(  \tr_{k-j+1, \dots , k} \, O  (1 \otimes \omega_{N,t}^{\otimes j}) \right) \, w^{(1)}_1 \cdots w^{(k-j)}_{k-j} \right]  (\bx_{k-j} ; \bx'_{k-j}) \\ & \hspace{2.5cm} \times \big\langle \cU_N (t;0) \xi_{N}, \, :a^\sharp_{x'_1,\a'_{1}} \cdots a^\sharp_{x'_{k-j},\a'_{k-j}} a^\sharp_{x_{1},\a_{1}} \cdots a^\sharp_{x_{k-j},\a_{k-j}}: \, \cU_N (t;0) \xi_{N} \big\rangle
\end{split}\]
where 
\[\begin{split}  \Big( \tr_{k-j+1, \dots , k}  O  &(1 \otimes \omega_{N,t}^{\otimes j}) \Big) (\bx_{k-j} ; \bx'_{k-j}) \\ & = \int dx_{k-j+1} dx'_{k-j+1} \dots dx_k dx'_k \, O (\bx_k ; \bx'_k) \prod_{\ell=k-j+1}^k \omega_{N,t} (x_\ell; x'_\ell) \end{split} \]
denotes the partial trace over the last $j$ particles. Using Cauchy-Schwarz, we obtain
\bea
\text{I} &\leq& \left\| \eta^{(1)}_{1} \dots \eta^{(k-j)}_{k-j} \, \left(\tr_{k-j+1, \dots , k} \, O \, (1 \otimes \o_{N,t}^{\otimes j}) \right) w^{(1)}_{1} \dots w^{(k-j)}_{k-j} \right\|_{\text{HS}} \, \left\| \cN^{\frac{k-j}{2}} \cU_N(t;0)\xi_{N} \right\|^{2}\nn\\
&\leq&  \left\| \eta^{(1)}_{1} \dots \eta^{(k-j)}_{k-j} \right\|_{\text{HS}} \, \left\| \tr_{k-j+1, \dots , k}  \, O (1 \otimes \o_{N,t}^{\otimes j}) \right\| \, \left\| \cN^{\frac{k-j}{2}}\cU_N(t;0)\xi_{N} \right\|^{2}\nn\\
&\leq& N^{\frac{k-j}{2}} \, \left\| \tr_{k-j+1, \dots , k} \, O (1 \otimes \o_{N,t}^{\otimes j}) \right\| \, \left\| \cN^{\frac{k-j}{2}}\cU_N(t;0)\xi_{N} \right\|^{2}\;\nn
\eea
where in the second line we used that $\|w^{(m)}_{m}\|=1$ for all $m=1, \dots , k-j$. Since
\bea
\left\| \tr_{k-j+1, \dots , k} \, O (1\otimes \o_{N,t}^{\otimes j}) \right\| &=& \sup_{{\substack{\phi, \ph \in L^{2}(\mathbb{R}^{3(k-j)}) \\ \|\phi\| = \| \psi \| \leq 1}}} \left| \left\langle \phi, \left(\tr_{k-j+1, \dots , k} \, O (1 \otimes \o_{N,t}^{\otimes j}) \right) \varphi\right\rangle \right| \nn\\
&=& \sup_{{\substack{\phi, \ph \in L^{2}(\mathbb{R}^{3(k-j)}) \\ \|\phi\| = \| \psi \| \leq 1}}} 
\left| \tr \, O \left(|\ph \rangle \langle \phi| \otimes \o^{\otimes j}_{N,t}\right) \right| \nn\\
&\leq& \left(\tr \lvert \omega_{N,t} \rvert\right)^j \norm{O} \leq N^j \, \|O\| \;,\label{eq:trk2}
\eea
we get
\be
\text{I} \leq N^{\frac{k+j}{2}} \, \|O\| \, \left\| \cN^{\frac{k-j}{2}}\cU_N(t;0)\xi \right\|^{2}\,,\nn
\ee
which, by Theorem \ref{thm:maingrowth}, proves (\ref{eq:term-tr}). This concludes the proof of Theorem  \ref{thm:maink}.
\end{proof}

\appendix

\section{Existence of the auxiliary dynamics}\label{app:derivative}

In this appendix we prove that the auxiliary dynamics in the interaction picture $\widehat{\cU}_{N}(t;s)$ exists, and that is solves the differential equation (\ref{eq:Uhat}). We will proceed as follows.

Let $M\in \mathbb{N}$, $M<\infty$. Consider the regularized generator 
\[
\widehat{\mathcal{G}}^{(M)}_{N}(t) = \chi(\cN \leq M)\widehat{\mathcal{G}}_{N}(t)\chi(\cN \leq M)\;.
\]
Using the estimates of Lemma \ref{lem:estimates}, it is easy to see that $\widehat{\mathcal{G}}^{(M)}_{N}(t)$ is a bounded self-adjoint operator (to see this, it is important to notice that in the definition (\ref{eq:ghat}) of $\widehat{\mathcal{G}}_{N}(t)$ the kinetic terms {\it cancel}; this is the reason why we switched to the interaction picture). Assuming the map $t\mapsto \widehat{\mathcal{G}}_{N}^{(M)}(t)\psi$ to be continuous for all $\psi\in D(\cN^{2})$ (which will be proven below), $\widehat{\mathcal{G}}^{(M)}_{N}(t)$ defines a unitary dynamics through the following differential equation:
\be
i\e\frac{d}{dt} \widehat{\cU}^{(M)}_{N}(t;s) = \widehat{\mathcal{G}}^{(M)}_{N}(t) \widehat{\cU}^{(M)}_{N}(t;s)\;.\label{eq:der1}
\ee
We are interested in the limiting evolution as $M\to \infty$. First, we prove that, for all $\psi \in D(\cN^{3})$ and for fixed $t,\,s$, the sequence $\big\{ \widehat{\cU}^{(M)}_{N}(t;s)\psi \big\}_{M\in \mathbb{N}}$ is Cauchy. Therefore, the limit $\widehat{\cU}_{N}(t;s)\psi = \lim_{M\to\infty}\widehat{\cU}^{(M)}_{N}(t;s)\psi$ exists, and it defines a bounded isometric operator in $D(\cN^{3})$. Also, $\widehat{\cU}_{N}(t;s)$ can be extended to a unitary operator over all $\cF$. Then, assuming $t\mapsto \widehat{\mathcal{G}}^{(M)}_{N}(t)\psi$ to be continuous uniformly in $M$ for $\psi\in D(\cN^{4})$, we prove that:
\be\label{eq:der1b}
\lim_{\delta\to 0}\lim_{M\to\infty}\Big\| \frac{i\e}{\delta}\Big( \widehat{\mathcal{U}}_{N}^{(M)}(t+\delta;s) - \widehat{\mathcal{U}}_{N}^{(M)}(t;s) \Big)\psi - \widehat{\mathcal{G}}_{N}(t)\widehat{\mathcal{U}}_{N}^{(M)}(t;s)\psi \Big\| = 0\qquad \forall \psi\in D(\cN^{4})\;.
\ee
This shows that $\widehat{\cU}_{N}(t;s)\psi = \lim_{M\to\infty}\widehat{\cU}^{(M)}_{N}(t;s)\psi$ is differentiable in $t$ for $\psi\in D(\cN^{4})$, and that Eq. (\ref{eq:Uhat}) holds. Finally, we conclude by proving the continuity assumption on $t\mapsto \widehat{\mathcal{G}}^{(M)}_{N}(t)\psi$.

\medskip

{\it Convergence of $\widehat{\cU}_{N}^{(M)}$ to a unitary dynamics.} Let $\psi\in D(\cN^{4})$. Consider
\be\label{eq:der2}
\Big\| \widehat{\cU}^{(M)}_{N}(t;s)\psi - \widehat{\cU} ^{(M')}_{N}(t;s)\psi\Big\| = \Big\| \psi - \widehat{\cU}^{(M)*}_{N}(t;s)\widehat{\cU}^{(M')}_{N}(t;s)\psi \Big\|\;.
\ee
where we used the unitarity of $\widehat{\cU}_{N}^{(M)}(t;s)$. Without loss of generality, we assume that $M\leq M'$. Using (\ref{eq:der1}), we rewrite (\ref{eq:der2}) as
\[
\begin{split}
\Big\| \frac{1}{i\e}\int_{s}^{t}ds'\, i\e\frac{d}{ds'} &\widehat{\cU}^{(M)*}_{N}(s';s)\widehat{\cU}^{(M')}_{N}(s';s) \psi\Big\| \\
& = \frac{1}{\e}\Big\| \int_{s}^{t}ds'\, \widehat{\cU}^{(M)*}_{N}(s';s)\Big( \widehat{\mathcal{G}}^{(M)}_{N}(s') - \widehat{\mathcal{G}}^{(M')}_{N}(s') \Big)\widehat{\cU}^{(M')}_{N}(s';s)\psi \Big\|\;.
\end{split}
\]
Writing $\chi(\cN \leq M') = \chi(\cN \leq M) + \chi(M<\cN \leq M')$, we see that $\widehat{\mathcal{G}}^{(M)}_{N}(s') - \widehat{\mathcal{G}}^{(M')}_{N}(s')$ is given by a sum of contributions containing at least one $\chi(M<\cN \leq M')$. Using the bounds of Lemma \ref{lem:estimates}, estimating $\chi(M<\cN \leq M') \leq \chi(M<\cN)\leq (\cN / M)$ and proceeding as in the proof of Proposition \ref{thm:growth}, it is not difficult to see that
\be\label{eq:der4}
\begin{split}
\frac{1}{\e}\Big\| \int_{s}^{t}ds'\, \widehat{\cU}^{(M)*}_{N}(s';s)\Big( \widehat{\mathcal{G}}^{(M)}_{N}(s') - &\widehat{\mathcal{G}}^{(M')}_{N}(s') \Big)\widehat{\cU}^{(M')}_{N}(s';s) \psi\Big\|  \\ &\leq C\int_{s}^{t}ds'\, \Big\| (\cN + 1)^{2}\chi(\cN > M) \widehat{\cU}^{(M')}_{N}(s';s)\psi \Big\| \\ &\leq C\int_{s}^{t}ds'\, \frac{1}{M}\Big\| (\cN + 1)^{3}\,\widehat{\cU}^{(M')}_{N}(s';s)\psi \Big\| \\
&\leq \widetilde C|t-s|\frac{1}{M}\big\| (\cN + 1)^{3}\psi \big\|
\end{split}
\ee
for a suitable constant $\widetilde C<\infty$ independent of $M$ (but possibly depending on $N$ and $t$). The last inequality follows from a Gronwall-type estimate, whose proof is completely analogous to the one of Proposition \ref{thm:growth}. We omit the details. Eq. (\ref{eq:der4}) implies that the sequence $\big\{ \widehat{\cU}^{(M)}_{N}(t;s)\psi \big\}_{M\in \mathbb{N}}$ is Cauchy for all $\psi \in D(\cN^{3})\subset D(\cN^{4})$; therefore:
\be
\lim_{M\to\infty} \widehat{\cU}^{(M)}_{N}(t;s)\psi =: \widehat{\cU}_{N}(t;s)\psi\;
\ee
exists. Notice that $\widehat{\cU}_N (t;s)$ defines an isometry on $D(\cN^4)$, since
\[ \| \widehat{\cU}_N (t;s) \psi \| = \lim_{M \to \infty} \| \widehat{\cU}^{(M)}_N (t;s) \psi \| = \| \psi \| \]
As a consequence, $\widehat{\cU}_N (t;s)$ can be extended to an isometry over the full Fock space $\cF (\h \oplus \h)$. For  $\psi \in D(\cN^4)$, we have
\[ \widehat{\cU}^*_N (t;s) \widehat{\cU}_N (t;s) \psi = \lim_{M \to \infty} \widehat{\cU}^*_N (t;s) \widehat{\cU}^{(M)}_N (t;s) \psi \]
{F}rom the invariance of $D(\cN^4)$ with respect to the evolution $\widehat{\cU}^{(M)}_N (t;s)$, we find
\[ \widehat{\cU}^*_N (t;s) \widehat{\cU}_N (t;s) \psi = \lim_{M',M \to \infty} \widehat{\cU}^{(M')*}_N (t;s) \widehat{\cU}^{(M)}_N (t;s) \psi = \psi\]
The last identity follows from (\ref{eq:der4}). This proves that $\widehat{\cU}_{N}(t;s)$ defines a unitary operator over $\cF (\h \oplus \h)$, for all $t,s \in \bR$. 

Finally, we prove (\ref{eq:der1b}). We have
\be\label{eq:der4bb}
\begin{split}
&\Big\| \frac{i\e}{\delta}\Big( \widehat{\mathcal{U}}_{N}^{(M)}(t+\delta;s) - \widehat{\mathcal{U}}_{N}^{(M)}(t;s) \Big)\psi - \widehat{\mathcal{G}}_{N}(t)\widehat{\mathcal{U}}_{N}^{(M)}(t;s)\psi \Big\| \\
&= \Big\| \frac{1}{\delta}\int_{t}^{t+\delta}ds'\, \Big( i\e\frac{d}{ds'}\widehat{\cU}^{(M)}_{N}(s';s) - \widehat{\mathcal{G}}_{N}(t)\widehat{\cU}^{(M)}_{N}(t;s) \Big)\psi \Big\| \\
&= \Big\| \frac{1}{\delta}\int_{t}^{t+\delta}ds'\,\Big( \widehat{\mathcal{G}}^{(M)}_{N}(s')\widehat{\cU}^{(M)}_{N}(s';s)\widehat{\cU}^{(M)*}_{N}(t;s) - \widehat{\mathcal{G}}_{N}(t) \Big) \widehat{\cU}^{(M)}_{N}(t;s)\psi \Big\|\\
&\leq \Big\| \frac{1}{\delta}\int_{t}^{t+\delta}ds'\, \Big( \widehat{\mathcal{G}}^{(M)}_{N}(s') - \widehat{\mathcal{G}}_{N}(t) \Big) \widehat{\cU}^{(M)}_{N}(t;s)\psi\Big\|\\
&\quad + \Big\|\frac{1}{i\delta\eps} \int_{t}^{t+\delta} ds'\, \widehat{\mathcal{G}}^{(M)}_{N}(s')\widehat{\cU}^{(M)}_{N}(s';s)\int_{s'}^{t}ds''\, \widehat{\cU}^{(M)*}_{N}(s'';s) \widehat{\mathcal{G}}^{(M)}_{N}(s'') \widehat{\cU}^{(M)}_{N}(t;s) \psi\Big\| \\
&\leq \frac{1}{\delta}\int_{t}^{t+\delta}ds'\,\Big\|\Big( \widehat{\mathcal{G}}^{(M)}_{N}(s') - \widehat{\mathcal{G}}_{N}(t) \Big) \widehat{\cU}^{(M)}_{N}(t;s)\psi \Big\| + C\delta \big\| (\cN+1)^{4} \psi \big\|
\end{split}
\ee
for a suitable constant $C$, dependent on $t$ and $N$. To get the fourth line of (\ref{eq:der4bb}) we used that
\[
\begin{split}
\widehat{\cU}^{(M)}_{N}(s';s)\widehat{\cU}^{(M)*}_{N}(t;s) &= 1 + \frac{1}{i\e}\int_{s'}^{t}ds''\, \widehat{\cU}^{(M)}_{N}(s';s)i\e\partial_{s''}\widehat{\cU}^{(M)*}_{N}(s'';s) \\
&= 1 + \frac{1}{i\e}\int_{s'}^{t}ds''\, \widehat{\cU}^{(M)}_{N}(s';s)\widehat{\cU}^{(M)*}_{N}(s'';s)\widehat{\mathcal{G}}^{(M)}_{N}(s'')
\end{split}
\]
and we applied Cauchy-Schwarz inequality. The last estimate in (\ref{eq:der4bb}) follows from a Gronwall-type argument, completely analogous to those used so far; we omit the details. Now, notice that
\be\label{eq:der4b}
\begin{split}
\Big\|\Big( \widehat{\mathcal{G}}^{(M)}_{N}(s') - &\widehat{\mathcal{G}}_{N}(t) \Big)\widehat{\cU}^{(M)}_{N}(t;s)\psi \Big\| \\ &\leq \Big\| \Big( \widehat{\mathcal{G}}^{(M)}_{N}(s') - \widehat{\mathcal{G}}^{(M)}_{N}(t) \Big)\widehat{\cU}^{(M)}_{N}(t;s)\psi \Big\| + \frac{C}{M}\big\| (\cN + 1)^{3} \widehat{\cU}^{(M)}_{N}(t;s) \psi \big\| 
\\ &\leq \Big\| \Big( \widehat{\mathcal{G}}^{(M)}_{N}(s') - \widehat{\mathcal{G}}^{(M)}_{N}(t) \Big)\widehat{\cU}^{(M)}_{N}(t;s)\psi \Big\| + \frac{\widetilde{C}}{M}\big\| (\cN + 1)^{3} \psi \big\|
\end{split}
\ee
where we used that $1 = \chi(\cN \leq M) + \chi(\cN > M)$ and we estimated $\chi(\cN > M)\leq (\cN / M)$. We are left with estimating the first term on the r.h.s. of (\ref{eq:der4b}). We write:
\be
\begin{split}
\Big\| \Big( \widehat{\mathcal{G}}^{(M)}_{N}(s') - &\widehat{\mathcal{G}}^{(M)}_{N}(t) \Big) \widehat{\cU}^{(M)}_{N}(t;s)\psi \Big\| \\
&= \Big\| \Big( \widehat{\mathcal{G}}^{(M)}_{N}(s') - \widehat{\mathcal{G}}^{(M)}_{N}(t) \Big)(\cN + 1)^{-2}(\cN + 1)^{2}\widehat{\cU}^{(M)}_{N}(t;s)\psi \Big\| \\
&= \Big\| \Big( \widehat{\mathcal{G}}^{(M)}_{N}(s') - \widehat{\mathcal{G}}^{(M)}_{N}(t) \Big)(\cN + 1)^{-2}\varphi \Big\| 
\end{split}
\ee
where $\varphi := (\cN + 1)^{2}\widehat{\cU}^{(M)}_{N}(t;s)\psi$. A Gronwall-type argument, similar to the one used in the proof of Proposition \ref{thm:growth}, shows that $\varphi \in D(\cN^{2})$. Consider an approximating sequence $\{\varphi_{n}\}$, $\varphi_{n}\in D(\mathcal{K})$, with $\mathcal{K} = d\Gamma_{l}(-\e^{2}\Delta) + d\Gamma_{r}(-\e^{2}\Delta)$, such that $\varphi_{n} \to \varphi$. We have:
\be
\begin{split}
\Big\| \Big( \widehat{\mathcal{G}}^{(M)}_{N}(s') - \widehat{\mathcal{G}}^{(M)}_{N}(t) \Big)(\cN + 1)^{-2}\varphi \Big\| &\leq \Big\| \Big( \widehat{\mathcal{G}}^{(M)}_{N}(s') - \widehat{\mathcal{G}}^{(M)}_{N}(t) \Big)(\cN + 1)^{-2}\varphi_{n} \Big\|\\
&\quad + \Big\| \Big( \widehat{\mathcal{G}}^{(M)}_{N}(s') - \widehat{\mathcal{G}}^{(M)}_{N}(t) \Big)(\cN + 1)^{-2}(\varphi - \varphi_{n}) \Big\| \\
& \leq \Big\| \Big( \widehat{\mathcal{G}}^{(M)}_{N}(s') - \widehat{\mathcal{G}}^{(M)}_{N}(t) \Big)(\cN + 1)^{-2}\varphi_{n} \Big\|\\
&\quad + C\| \varphi - \varphi_{n} \|
\end{split}  
\ee
where in the last inequality we used that $\Big( \widehat{\mathcal{G}}^{(M)}_{N}(s') - \widehat{\mathcal{G}}^{(M)}_{N}(t) \Big)(\cN + 1)^{-2}$ is a bounded operator. Fix $\eta > 0$ and choose $n = n(\eta)$ such that $C\| \varphi - \varphi_{n} \| \leq \eta/2$. Notice that $(\cN + 1)^{-2}\varphi \in D(\cN(\mathcal{K} + \cN))$; as we will prove later, $t\mapsto \widehat{\mathcal{G}}^{(M)}_{N}(t)\psi$ is {\it differentiable} for $\psi\in D(\cN(\mathcal{K} + \cN))$, uniformly in $M$. Thus, we get:
\be\label{eq:der7}
\begin{split}
\Big\| \Big( \widehat{\mathcal{G}}^{(M)}_{N}(s') - \widehat{\mathcal{G}}^{(M)}_{N}(t) \Big)(\cN + 1)^{-2}\varphi_{n} \Big\|\ &\leq C |t-s'|\big\| (\cN + 1)(\mathcal{K} + \cN)(\cN + 1)^{-2}\varphi_{n} \big\| \\
&\leq C \delta\big\| (\mathcal{K} + 1)\varphi_{n} \big\|
\end{split}
\ee
for a constant $C> 0$. Choosing $\delta$ small enough, this term can be estimated by $\eta/2$. Therefore, we have:
\be\label{eq:der6}
\Big\| \Big( \widehat{\mathcal{G}}^{(M)}_{N}(s') - \widehat{\mathcal{G}}^{(M)}_{N}(t) \Big)(\cN + 1)^{-2}\varphi \Big\| \leq \eta\;.
\ee
From (\ref{eq:der4bb}), (\ref{eq:der4b}) and (\ref{eq:der6}), we end up with:
\be
\begin{split}
\Big\| \frac{i\e}{\delta}\Big( \widehat{\mathcal{U}}_{N}^{(M)}(t+\delta;s) - &\widehat{\mathcal{U}}_{N}^{(M)}(t;s) \Big)\psi - \widehat{\mathcal{G}}_{N}(t)\widehat{\mathcal{U}}_{N}^{(M)}(t;s)\psi \Big\| \\
&\leq \eta + C\Big(\frac{1}{M} + \delta\Big)\big\| (\cN + 1)^{4} \widehat{\cU}^{(M)}_{N}(t;s) \psi \big\| \\
&\leq \eta + C\Big(\frac{1}{M} + \delta\Big)\big\| (\cN + 1)^{4} \psi \big\|\;,
\end{split}
\ee
for $\psi\in D(\cN^{4})$ and for all $\delta$ small enough. Letting $\delta \to 0$ and $M \to \infty$, we conclude that 
\[ \lim_{\delta \to \infty} \lim_{M \to 0} \Big\| \frac{i\e}{\delta}\Big( \widehat{\mathcal{U}}_{N}^{(M)}(t+\delta;s) - \widehat{\mathcal{U}}_{N}^{(M)}(t;s) \Big)\psi - \widehat{\mathcal{G}}_{N}(t)\widehat{\mathcal{U}}_{N}^{(M)}(t;s)\psi \Big\| \leq \eta \]
Since $\eta > 0$ is arbitrary, we find (\ref{eq:der1b}).  

\medskip

{\it Proof of (\ref{eq:der7})}. Here we show that the map $t\mapsto \widehat{\mathcal{G}}_{N}^{(M)}(t)\psi$ it is differentiable for all $\psi\in D(\cN(\mathcal{K} + \cN))$. For simplicity, we set $s=0$. Here, the cutoff on the number of particles does not play any role; therefore, we prove the differentiability directly for $t\mapsto \widehat{\mathcal{G}}_{N}(t)\psi$. We have:
\[
\begin{split}
\widehat{\mathcal{G}}_{N}(t) &= \cU_{N}^{(0)*}(t)\Big( d\Gamma_{l}(\rho_{t}*V - X_{t}) - d\Gamma_{r}(\rho_{t}*V - \overline{X}_{t}) \Big)\cU_{N}^{(0)}(t) \\
&\quad + \cU_{N}^{(0)*}(t)\widetilde{\mathcal{Q}}_{N} \cU_{N}^{(0)}(t) + \cU_{N}^{(0)*}(t)\mathcal{C}_{N}\cU_{N}^{(0)}(t) \\
&= \text{I} + \text{II} + \text{III}\;.
\end{split}
\]
We will take the time derivative of these three contributions separately, and we will estimate the action of the resulting operators on $\psi\in D(\cN(\mathcal{K} + \cN))$. In what follows, we shall denote by $C$ a generic positive (finite) constant, possibly depending on $N$ and $t$.

\medskip

Consider first the left contribution to $\text{I}$. Let us focus first on the term containing $\rho_{t}*V$. We have:
\[
\begin{split}
\cU_{N}^{(0)*}(t)d\Gamma_{l}(\rho_{t}*V) \cU_{N}^{(0)}(t) &= \int dp\, \hat V(p)\hat \rho_{t}(p)\, \cU_{N}^{(0)*}(t)d\Gamma_{l}(e^{ip\cdot x})\cU_{N}^{(0)}(t)\\
&= \int dp\, \hat V(p)\hat \rho_{t}(p)\, d\Gamma_{l}\big(e^{ip\cdot (x - i\e\nabla t)}\big)\;.
\end{split}
\]
Thus,
\be\label{eq:ddt_I}
\begin{split}
i\e\frac{d}{dt}\cU_{N}^{(0)*}(t)d\Gamma_{l}(\rho_{t}*V) \cU_{N}^{(0)}(t) &= \int dp\, \hat V(p)\Big(i\e\frac{d}{dt}\hat\rho_{t}(p)\Big)\, d\Gamma_{l}\big(e^{ip\cdot (x - i\e\nabla t)}\big)\\
&\quad + \int dp\, \hat V(p) \hat\rho_{t}(p)\, d\Gamma_{l}\Big(i\e\frac{d}{dt}e^{ip\cdot (x - i\e\nabla t)}\Big)\;.
\end{split}
\ee
To bound the Fourier transform of the density, notice that:
\be\label{eq:ddt_I2}
\begin{split}
i\e\frac{d}{dt}\hat\rho_{t}(p) &= i\e\frac{d}{dt}\tr(e^{ip\cdot x}\omega_{N,t}) \\
&= \tr\Big(e^{ip\cdot x}[-\e^{2}\Delta + \rho_{t}*V - X_{t}, \omega_{N,t}]\Big) \\
&=\tr\Big( (i\e\nabla + \e p)^{2}e^{ip\cdot x}\omega_{N,t} - e^{ip\cdot x}\omega_{N,t}(i\e\nabla)^{2} \Big) + \tr\Big( e^{ip\cdot x}[\rho_{t}*V - X_{t}, \omega_{N,t}] \Big) \\
& = \tr\Big( 2i\e^{2}p\cdot \nabla e^{ipx}\omega_{N,t} + \e^{2}|p|^{2} e^{ip\cdot x}\omega_{N,t} \Big) + \tr\Big( e^{ip\cdot x}[\rho_{t}*V - X_{t}, \omega_{N,t}] \Big)\;. 
\end{split}
\ee
The last term in (\ref{eq:ddt_I2}) is bounded proportionally to $\| \o \|_{\text{tr}}$. Concerning the first term in (\ref{eq:ddt_I2}), it is easy to see that:
\[
\Big| \tr\Big( 2i\e^{2}p\cdot \nabla e^{ipx}\omega_{N,t} + \e^{2}|p|^{2} e^{ip\cdot x}\omega_{N,t} \Big) \Big| \leq C(|p|^{2}+1)\tr((1-\Delta) \omega_{N,t}) < \infty\;.
\]
The last inequality follows from the finiteness of the energy of $\omega_{N,t}$. Regarding the second term in (\ref{eq:ddt_I}), we get
\[
\Big\| \int dp\, \hat V(p) \hat\rho_{t}(p)\, d\Gamma_{l}\Big(i\e\frac{d}{dt}e^{ip\cdot (x - i\e\nabla t)}\Big) \psi \Big\| \leq C(|p|^{2} + 1)\| d\Gamma_{l}(-\Delta + 1)\psi \|\;.
\]
Therefore,
\[
\Big\| i\e\frac{d}{dt} \cU_{N}^{(0)*}(t)d\Gamma_{l}(\rho_{t}*V) \cU_{N}^{(0)}(t) \psi \Big\| \leq C\| (\mathcal{K} + \mathcal{N}) \psi \|\;.
\]
Consider now the contribution to the left part of $\text{I}$ containing $X_{t}$. We have:
\[
\begin{split}
\cU_{N}^{(0)*}(t) d\Gamma_{l}(X_{t}) \cU_{N}^{(0)}(t) &= \frac{1}{N}\int dp\, \hat{V}(p)\, \cU_{N}^{(0)*}(t) d\Gamma_{l}\big( e^{ip\cdot x}\omega_{N,t}e^{-ip\cdot x} \big)\cU_{N}^{(0)}(t) \\
& = \frac{1}{N}\int dp\, \hat{V}(p)\, d\Gamma_{l}\big( e^{ip\cdot (x - i\e\nabla t)}\widehat{\omega}_{N,t}e^{-ip\cdot (x - i\e\nabla t)} \big)\cU_{N}^{(0)}(t)
\end{split}
\]
where $\widehat{\omega}_{N,t}$ is the time-evolution of $\omega_{N}$ in the interaction picture, namely:
\[
i\e\partial_{t}\widehat{\omega}_{N,t} = e^{i\Delta t/\eps}[\rho_{t}*V - X_{t}, \omega_{N,t}]e^{-i\Delta t/\eps}\;.
\]
Using the estimates of Lemma \ref{lem:estimates}, it is not difficult to see that
\[
\Big\| i\e\frac{d}{dt}\cU_{N}^{(0)*}(t) d\Gamma_{l}(X_{t}) \cU_{N}^{(0)}(t) \psi\Big\| \leq C\| \cN^{1/2} \psi \|\;.
\]
The right contribution to $\text{I}$ is estimated in the same way. Thus,
\be\label{eq:estGI}
\Big\| i\e\frac{d}{dt} \cU_{N}^{(0)*}(t)\Big( d\Gamma_{l}(\rho_{t}*V - X_{t}) - d\Gamma_{r}(\rho_{t}*V - \overline{X}_{t}) \Big)\cU_{N}^{(0)}(t) \psi \Big\| \leq C\| (\mathcal{K} + \mathcal{N})\psi \|\;.
\ee
Consider now $\text{II}$. Let us focus on the first term in (\ref{eq:qtilde}). It can be rewritten as follows:
\be\label{eq:ddt_Q}
\begin{split}
&\cU_{N}^{(0)*}(t)\frac{1}{2N}\int dxdy V(x-y) a^{*}_{l}(u_{x})a^{*}_{l}(u_{y})a^{*}_{r}(\overline{v_{y}})a^{*}_{r}(\overline{v_{x}})\,\cU_{N}^{(0)}(t) \\
&= \frac{1}{2N} \int dp\,\hat V(p) \int dz_{1}dz_{2}\, a^{*}_{z_{1},l}a^{*}_{z_{2},r} \Big( \widehat u_{t}e^{ip\cdot (x - i\e\nabla t)} \widehat v_{t} \Big)(z_{1},z_{2}) \\& \quad \times \int dz_{3} dz_{4}\, a^{*}_{z_{3},l}a^{*}_{z_{4},r}\Big( \widehat u_{t} e^{-ip\cdot (x - i\e\nabla t)}\widehat v_{t} \Big)(z_{3},z_{4})
\end{split}
\ee
where $\widehat u_{N,t} = \sqrt{1 - \widehat{\omega}_{N,t}}$ and $\widehat v_{N,t} = \sqrt{\widehat{\omega}_{N,t}}$. Using the estimate (\ref{est2}) in Lemma \ref{lem:estimates}, we get:
\be\label{eq:ddt_Q2}
\Big\| \int dz_{1}dz_{2}\, a^{*}_{z_{1},l}a^{*}_{z_{2},r} \Big( \widehat u_{t}e^{ip\cdot (x - i\e\nabla t)} \widehat v_{t} \Big)(z_{1},z_{2}) \psi \Big\| \leq C\| (\cN + 1)^{1/2}\psi \|\;.
\ee
In order to bound the time derivative of (\ref{eq:ddt_Q}), it is important to notice that
\be\label{eq:ddt_Q2b}
\begin{split}
&e^{-i\Delta t/\eps}\Big(i\e\frac{d}{dt} \widehat u_{t}e^{ip\cdot (x - i\e\nabla t)} \widehat v_{t}\Big)e^{i\Delta t/\eps}\\
&= \Big([\rho_{t}*V - X_{t}, u_{t}] e^{ip\cdot x} v_{t} + u_{t} \Big(i\e\frac{d}{dt}e^{ip\cdot (x - i\e\nabla t)}\Big) v_{t} + u_{t}e^{ip\cdot (x - i\e\nabla t)} [\rho_{t}*V - X_{t},v_{t}] \Big)\;.
\end{split}
\ee
The Hilbert-Schmidt norm of (\ref{eq:ddt_Q2b}) is estimated proportionally to $\tr (1-\Delta)\omega_{N,t}$. Therefore, we find:
\be\label{eq:ddt_Q3}
\Big\| i\e\frac{d}{dt} \int dz_{1}dz_{2}\, a^{*}_{z_{1},l}a^{*}_{z_{2},r} \Big( \widehat u_{t}e^{ip\cdot (x - i\e\nabla t)} \widehat v_{t} \Big)(z_{1},z_{2}) \psi \Big\| \leq \Big(C\tr(1 - \D)\o_{N,t}\Big)\| (\cN + 1)^{1/2} \psi\|\;.
\ee
Using the bounds (\ref{eq:ddt_Q2}), (\ref{eq:ddt_Q3}) we finally get:
\[
\Big\| i\e\frac{d}{dt} \cU_{N}^{(0)*}(t)\frac{1}{2N}\int dxdy V(x-y) a^{*}_{l}(u_{x})a^{*}_{l}(u_{y})a^{*}_{r}(\overline{v_{y}})a^{*}_{r}(\overline{v_{x}})\,\cU_{N}^{(0)}(t) \psi \Big\| \leq C\| (\cN + 1)\psi \|\;.
\]
The other terms in $\text{II}$ can be bounded in exactly the same way. Hence,
\be\label{eq:estGII}
\Big\|  i\e\frac{d}{dt} \cU_{N}^{(0)*}(t)\widetilde{\mathcal{Q}}_{N} \cU_{N}^{(0)}(t) \psi \Big\| \leq C\| (\cN + 1)\psi \|\;.
\ee
We are left with $\text{III}$. Consider the contribution due to the first term in the expression (\ref{eq:CN}) for $\mathcal{C}_{N}$, that is:
\[
\begin{split}
&\cU_{N}^{(0)*}(t) \frac{1}{2N}\int dxdy\, V(x-y)a^{*}_{l}(u_{x})a^{*}_{l}(u_{y})a_{l}(u_{y})a_{l}(u_{x}) \cU_{N}^{(0)}(t) \\
& = \frac{1}{2N}\int dp\, \hat V(p)d\Gamma_{l}\big(\widehat{u}_{t}e^{ip\cdot (x - i\e\nabla t)}\widehat{u}_{t}\big)d\Gamma_{l}\big(\widehat{u}_{t}e^{-ip\cdot (x - i\e\nabla t)}\widehat{u}_{t}\big)\;.
\end{split}
\]
The estimates of Lemma \ref{lem:estimates} imply that
\[
\begin{split}
\Big\| d\Gamma_{l}\big(\widehat{u}_{t}e^{ip\cdot (x - i\e\nabla t)}\widehat{u}_{t}\big)\psi \Big\| &\leq C\| \cN \psi \| \\
\Big\| d\Gamma_{l}\Big(i\e\frac{d}{dt}\widehat{u}_{t}e^{ip\cdot (x - i\e\nabla t)}\widehat{u}_{t}\Big)\psi \Big\| &\leq C\| (\cN + \mathcal{K}) \psi\|\;.
\end{split}
\]
Therefore, we get
\[
\Big\|  i\e\frac{d}{dt} \cU_{N}^{(0)*}(t) \frac{1}{2N}\int dxdy\, V(x-y)a^{*}_{l}(u_{x})a^{*}_{l}(u_{y})a_{l}(u_{y})a_{l}(u_{x}) \cU_{N}^{(0)}(t) \psi \Big\| \leq C\| \cN (\cN + \mathcal{K}) \psi\|\;.
\]
The remaining contributions in $\text{III}$ can be estimated by following strategies similar to those we discussed so far; we omit the details. We find:
\be\label{eq:estGIII}
\Big\| i\e\frac{d}{dt} \cU_{N}^{(0)*}(t)\mathcal{C}_{N}\cU_{N}^{(0)}(t) \psi \Big\| \leq C\| (\cN + 1)(\cN + \mathcal{K})\psi \|\;.
\ee
Summing up (\ref{eq:estGI}), (\ref{eq:estGII}) and (\ref{eq:estGIII}), we finally obtain:
\[
\Big\| i\e\frac{d}{dt}\widehat{\mathcal{G}}_{N}(t) \psi \Big\| \leq C\| (\cN + 1)(\cN + \mathcal{K})\psi \|\;,
\]
which is finite for $\psi\in D(\cN(\mathcal{K} + \mathcal{N}))$.

\section{Propagation of semiclassical structure}\label{secpropagation}

This appendix is devoted to the proof of Proposition \ref{lem:prop}. The argument follows  \cite{BPS} with some  minor modifications. 

\begin{proof}[Proof of Proposition \ref{lem:prop}] Let $h_{HF}(t) = -\e^{2}\D + \r_{t}*V - X_{t}$ be the Hartree-Fock Hamiltonian, and let $\o_{N,t}$ denote the solution of the Hartree-Fock equation
\be
i\e\partial_{t}\o_{N,t} = [h_{HF}(t), \o_{N,t}]
\ee
with $\o_{N,0} = \o_{N}$. Since $u_{N,t} = \sqrt{1 - \o_{N,t}}$, $v_{N,t} = \sqrt{\o_{N,t}}$, we have
\be
i\e\partial_{t} v_{N,t} = [h_{HF}(t), v_{N,t}]\;,\qquad i\e\partial_{t} u_{N,t} = [h_{HF}(t), u_{N,t}]
\ee
with $u_{N,0} = u_{N} = \sqrt{1 - \o_{N}}$ and $v_{N,0} = v_{N} = \sqrt{\o_{N}}$. We first prove the propagation of the estimates for $v_{N,t}$. We compute
\[
i\e\frac{d}{dt}[x, v_{N,t}] = [x, [h_{HF}(t), v_{N,t}]] = [v_{N,t}, [h_{HF}(t), x]] + [h_{HF}(t), [x, v_{N,t}]].
\]
One eliminates the last term by conjugating $[x, v_{N,t}]$ with the two-parameter group $W(t;s)$ generated by the self-adjoint operator $h_{HF}(t)$ satisfying
\[
i\e\frac{d}{dt} W(t;s) = h_{HF}(t) W(t;s)\qquad \mbox{with $W(s;s) = 1$ for all $s\in\mathbb{R}$}.
\]
We have
\[
\begin{split}
i\e\frac{d}{dt} W^{*}(t;0) [x, v_{N,t}] W(t;0) & = W^{*}(t;0) [v_{N,t}, [h_{HF}(t), x]] W(t;0) \\
& = W^{*}(t;0)\Big( [v_{N,t}, -2\e^{2}\nabla] - [v_{N,t}, [X_{t},x]] \Big)W(t;0)
\end{split}
\]
where we used that $[-\e^{2}\D,x]= - 2\e^{2}\nabla$ and that $[\rho_{t}*V, x] = 0$. Therefore,
\[
\begin{split}
W^{*}(t;0) [x, v_{N,t}] W(t;0) & = [x, v_{N,0}] + \frac{1}{i\e}\int_{0}^{t}ds\, i\e\frac{d}{ds} W^{*}(s;0) [x, v_{N,s}] W(s;0) \\ 
& = [x, v_{N,0}] - \frac{1}{i\e}\int_{0}^{t}ds\, W^{*}(s;0) \Big( [v_{N,s}, 2\e^{2}\nabla] + [v_{N,s}, [X_{s},x]] \Big) W(s;0).
\end{split}
\]
This implies that
\be
\| [x, v_{N,t}] \|_{HS} \leq \| [x, v_{N,0}] \|_{HS} + \frac{1}{\e}\int_{0}^{t}ds\,\Big( \| [v_{N,s}, 2\e^{2}\nabla] \|_{HS} + \|[v_{N,s}, [X_{s},x]]\|_{HS} \Big).\label{p1}
\ee
To control the second term, notice that
\be
X_{s} = \frac{1}{N}\int dq\, \hat V(q)\,e^{ip\cdot x} \o_{N,s} e^{-ip\cdot x}.\label{p1b}
\ee
Using that $\| v_{N,s} \|_{op}\leq 1$, we find:
\be\label{p2}
\begin{split}
\|[v_{N,s}, [X_{s},x]]\|_{HS} & \leq \frac{1}{N}\int dq\, |\hat V(q)|\, \|[v_{N,s}, [e^{ip\cdot x} \o_{N,s} e^{-ip\cdot x}, x]]\|_{HS} \\
& \leq \frac{1}{N}\int dq\, |\hat V(q)|\, \|[v_{N,s}, e^{ip\cdot x}[\o_{N,s}, x]e^{-ip\cdot x}]\|_{HS} \\
& \leq \frac{2}{N}\int dq\, |\hat V(q)|\, \|[\o_{N,s}, x]\|_{HS} \\
& \leq \frac{C}{N}\| [v_{N,s}, x] \|_{HS}
\end{split}
\ee
where in the last step we used the identity 
\be\label{p1c}
[\o_{N,s}, x] = [ v_{N,s}v_{N,s}, x] = v_{N,s}[v_{N,s},x] + [v_{N,s},x]v_{N,s}
\ee
and  the triangle inequality. Substituting  the estimate (\ref{p2}) in (\ref{p1}), we derive
\be
\| [x, v_{N,t}] \|_{HS} \leq \| [x, v_{N,0}] \|_{HS} + C \int_{0}^{t}ds\,\Big( \| [v_{N,s},  \e\nabla] \|_{HS} +  N^{-2/3}\|[v_{N,s}, x]\|_{HS} \Big). \label{p1ba}
\ee
To control $[v_{N,s},\e\nabla]$, we start by writing
\[
\begin{split}
i\e\frac{d}{dt}[\e\nabla, v_{N,t}] & = [\e\nabla, [h_{HF}(t), v_{N,t}]] = [v_{N,t}, [h_{HF}(t), \e\nabla]] + [h_{HF}(t), [\e\nabla, v_{N,t}]] \\ 
& = [h_{HF}(t), [\e\nabla, v_{N,t}]] + [v_{N,t}, [\r_{t}*V, \e\nabla]] - [v_{N,t}, [X_{t}, \e\nabla]].
\end{split}
\]
As before, the first term can be eliminated after conjugating with the unitary operator $W(t;0)$. Hence, we find
\be
\| [\e\nabla, v_{N,t}] \|_{HS} \leq \| [\e\nabla, v_{N,0}] \|_{HS} + \frac{1}{\e}\int_{0}^{t}ds\, \Big( \| [v_{N,s}, [\r_{s}*V, \e\nabla]] \|_{HS} + \| [v_{N,s}, [X_{s}, \e\nabla]]\|_{HS} \Big).\label{p4}
\ee
The second term can be controlled by writing
\be\label{p1bb}
\begin{split}
\| [v_{N,s}, [V*\r_{s}, \e\nabla]] \|_{HS} & = \e \| [v_{N,s}, \nabla V * \r_{s}] \|_{HS}\\
& \leq \e\int dq\, |\hat V(q)|\, |q|\, |\hat \r_{s}(q)|\, \| [v_{N,s}, e^{iq\cdot x}] \|_{HS}\\
&\leq \e\int dq\, |\hat V(q)|\, |q|^{2}\, |\hat \r_{s}(q)|\, \| [v_{N,s}, x] \|_{HS} \\
& \leq C\e \| [v_{N,s}, x] \|_{HS}
\end{split}
\ee
where we used the bound $\| \hat \r_{s} \|_{\infty} \leq \|\r_{s}\|_{1} = 1$ and the assumption on the interaction potential. Let us now focus on the last term in (\ref{p4}). Rewriting $X_{s}$ as in (\ref{p1b}), we have
\be\label{p4b}
\begin{split}
\| [v_{N,s}, [X_{s}, \e\nabla]] \|_{HS} &\leq \frac{1}{N}\int dq\,|\hat V(q)|\, \| [v_{N,s}, [e^{iq\cdot x}\o_{N,s}e^{-iq\cdot x}, \e\nabla]] \|_{HS} \\
&\leq \frac{2}{N}\int dq\,|\hat V(q)|\,\| [e^{iq\cdot x}\o_{N,s}e^{-iq\cdot x}, \e\nabla ] \|_{HS} \\
&\leq \frac{2}{N}\int dq\,|\hat V(q)|\,\| [\o_{N,s}, \e\nabla ] \|_{HS} \\
&\leq \frac{4 \| \hat V \|_{1}}{N}\| [v_{N,s}, \e\nabla] \|_{HS}
\end{split}
\ee
where to get the third inequality we used that
\[
[e^{iq\cdot x}\o_{N,s}e^{-iq\cdot x},\e\nabla] = e^{iq\cdot x}[\o_{N,s}, \e(\nabla + iq)]e^{-iq\cdot x} = e^{iq\cdot x}[\o_{N,s}, \e\nabla]e^{-iq\cdot x}
\]
and to get the last inequality we used the analog of identity (\ref{p1c}) with $x$ replaced by $\e\nabla$. Substituting  the estimates (\ref{p1bb}), (\ref{p4b}) into (\ref{p4}), we get
\[
\| [\e\nabla, v_{N,t}] \|_{HS} \leq \| [\e\nabla, v_{N,0}] \|_{HS} + C\int_{0}^{t}ds\,\Big( \| [v_{N,s}, x] \|_{HS} + N^{-2/3}\| [v_{N,s}, \e\nabla] \|_{HS} \Big).
\]
Summing up this inequality with (\ref{p1ba}), using the assumption on the initial data and applying Gronwall's lemma, we obtain
\be
\| [x, v_{N,t}] \|_{HS} \leq K \exp(c|t|) N^{1/2}\e\;,\qquad \| [\e\nabla, v_{N,t}] \|_{HS} \leq K \exp(c|t|) N^{1/2}\e
\ee
for suitable positive constants $K,\,c$ only depending on $V$. This proves the propagation of the semiclassical estimates for $v_{N,t}$. To conclude the proof, we have to prove the propagation of the semiclassical estimates for $u_{N,t}$. The argument proceeds in the same way, with the following modifications. In the estimate for $\|[ u_{N,s}, x ]\|_{HS}$, in the step analogous to (\ref{p2}) one has to use that
\[
\| [\o_{N,s}, x] \|_{HS} = \| [1 - \o_{N,s}, x] \|_{HS} \leq 2\| [u_{N,t}, x] \|_{HS}
\]
where we used that
\be\label{p5}
[ 1 - \o_{N,s}, x] = [u_{N,s}u_{N,s}, x] = u_{N,s}[u_{N,s}, x] + [u_{N,s},x]u_{N,s}.
\ee
Similarly, in the estimate for $\| [u_{N,s}, \e\nabla] \|_{HS}$, in the step analogous to (\ref{p4b}) one has to use that  
\[
\| [\o_{N,s}, \e\nabla] \|_{HS} = \| [1 - \o_{N,s}, \e\nabla] \|_{HS} \leq 2\| [u_{N,s},\e\nabla] \|_{HS}
\]
where we used the analog of (\ref{p5}) with $x$ replaced by $\e\nabla$. Taking into account these small modifications, one derives
\be
\| [x, u_{N,t}] \|_{HS} \leq K \exp(c|t|) N^{1/2}\e\;,\qquad \| [\e\nabla, u_{N,t}] \|_{HS} \leq K \exp(c|t|) N^{1/2}\e
\ee 
for suitable constants $K,\,c$ only depending on $V$. This proves the propagation of the semiclassical structure for $u_{N,t}$, and concludes the proof of Proposition \ref{lem:prop}.
\end{proof}

\end{document}